

\documentclass[11pt]{article} 

\usepackage[utf8]{inputenc} 


\usepackage{geometry} 
\geometry{letterpaper} 
\geometry{left=1in,right=1in,top=1in,bottom=1in} 

\usepackage{graphicx} 
\usepackage{color}

\definecolor{Red}{rgb}{1,0,0}
\definecolor{Blue}{rgb}{0,0,1}
\definecolor{Olive}{rgb}{0.41,0.55,0.13}
\definecolor{Green}{rgb}{0,1,0}
\definecolor{MGreen}{rgb}{0,0.8,0}
\definecolor{DGreen}{rgb}{0,0.55,0}
\definecolor{Yellow}{rgb}{1,1,0}
\definecolor{Cyan}{rgb}{0,1,1}
\definecolor{Magenta}{rgb}{1,0,1}
\definecolor{Orange}{rgb}{1,.5,0}
\definecolor{Violet}{rgb}{.5,0,.5}
\definecolor{Purple}{rgb}{.75,0,.25}
\definecolor{Brown}{rgb}{.75,.5,.25}
\definecolor{Grey}{rgb}{.5,.5,.5}


\usepackage{booktabs} 
\usepackage{array} 
\usepackage{paralist} 
\usepackage{verbatim} 
\usepackage{subfigure} 
\usepackage{amsmath,amssymb,amsthm,mathrsfs}
\usepackage[colorlinks]{hyperref}

\usepackage{fancyhdr} 
\pagestyle{plain} 

\usepackage{sectsty}
\allsectionsfont{\sffamily\mdseries\upshape} 




\makeatletter
\newtheorem*{rep@theorem}{\rep@title}
\newcommand{\newreptheorem}[2]{%
\newenvironment{rep#1}[1]{%
 \def\rep@title{#2 \ref{##1}}%
 \begin{rep@theorem}}%
 {\end{rep@theorem}}}
\makeatother

\theoremstyle{plain}

\newtheorem{theorem}{Theorem}[section]

\newtheorem{lemma}[theorem]{Lemma}

\newtheorem*{theorem*}{Theorem}   
\newreptheorem{theorem}{Theorem} 
\newreptheorem{lemma}{Lemma}      
\newreptheorem{corollary}{Corollary} 
\theoremstyle{remark}

\newtheorem{remark}[theorem]{Remark}

\theoremstyle{definition}
\newtheorem{definition}{Definition}
\newtheorem*{definition*}{Definition}   

\def\cA{{\cal A}}
\def\cB{{\cal B}}

\def\cD{{\cal D}}

\def\cF{{\cal F}}

\def\cN{{\cal N}}

\def\cS{{\cal S}}



\title{A Geometric Analysis of the AWGN channel with a $(\sigma, \rho)$-Power Constraint}
\author{ Varun  Jog and Venkat Anantharam\\
Department of Electrical Engineering and Computer Sciences, UC Berkeley\\
\texttt{\small varunjog@berkeley.edu, ananth@berkeley.edu}}

\begin{document}
\maketitle

\begin{abstract}
In this paper, we consider the AWGN channel with a power constraint called the $(\sigma, \rho)$-power constraint, which is motivated by energy harvesting communication systems. Given a codeword, the constraint imposes a limit of $\sigma + k \rho$ on the total power of  any $k\geq 1$ consecutive transmitted symbols. Such a channel has infinite memory and evaluating its exact capacity is a difficult task. Consequently, we establish an $n$-letter capacity expression and seek bounds for the same. We obtain a lower bound on capacity by considering the volume of $\cS_n(\sigma, \rho) \subseteq \mathbb{R}^n$, which is the set of all length $n$ sequences satisfying the $(\sigma, \rho)$-power constraints. For a noise power of $\nu$, we obtain an upper bound on capacity by considering the volume of $\cS_n(\sigma, \rho) \oplus B_n(\sqrt{n\nu})$, which is the Minkowski sum of $\cS_n(\sigma, \rho)$ and the $n$-dimensional Euclidean ball of radius $\sqrt{n\nu}$. We analyze this bound using a result from convex geometry known as Steiner's formula, which gives the volume of this Minkowski sum in terms of the intrinsic volumes of $\cS_n(\sigma, \rho)$. We show that as the dimension $n$ increases, the logarithm of the sequence of intrinsic volumes of  $\{\cS_n(\sigma, \rho)\}$ converges to a limit function under an appropriate scaling. The upper bound on capacity is then expressed in terms of this limit function. We derive the asymptotic capacity in the low and high noise regime for the $(\sigma, \rho)$-power constrained AWGN channel, with strengthened results for the special case of $\sigma = 0$, which is the amplitude constrained AWGN channel.\\

\textit{Keywords:} Additive white Gaussian noise, energy harvesting, $(\sigma, \rho)$-power constraint, Minkowski sum, Steiner's formula, Shannon capacity, intrinsic volumes.
\end{abstract}

\section{Introduction}
 
The additive white Gaussian noise (AWGN) channel is one of the most basic channel models studied in information theory. This channel is represented by a sequence of channel inputs denoted by $X_i$, and an input-independent additive noise $Z_i$.  The noise variables $Z_i$ are assumed to be independent and identically distributed as $\cN(0, \nu)$. The channel output $Y_i$ is given by
\begin{equation}
Y_i = X_i + Z_i \text{~~for~~} i\geq 1.
\end{equation}
The Shannon capacity this channel is infinite in case there are no constraints on the channel inputs $X_i$; however, practical considerations always constrain the input in some manner. These input constraints are often defined in terms of the power of the input. For a channel input $(x_1, x_2, \dots, x_n)$, the most common power constraints encountered are:\\
\begin{itemize}
\item[$\mathbf{(AP)}$:] An average power constraint of $P > 0$, which says that
\begin{equation*}
\sum_{i=1}^n x_i^2 \leq nP.
\end{equation*}

\item[$\mathbf{(PP)}$:] A peak power constraint of $A > 0$, which says that
\begin{equation*}
|x_i| \leq A \text{~~for all~~} 1\leq i\leq n.
\end{equation*}

\item[$\mathbf{(APP)}$:] An average and peak power constraint, consisting of $\mathbf{(AP)}$ and $\mathbf{(PP)}$ simultaneously.
\end{itemize}
The AWGN channel with the $\mathbf{(AP)}$ constraint was first analyzed by Shannon~\cite{shannon}. Shannon showed that the capacity $C$ for this constraint is given by
\begin{equation}
C = \sup_{E[X^2] \leq P} I(X; Y) = \frac{1}{2} \log \left(1 + \frac{P}{\nu}\right),
\end{equation}
and the supremum is attained when $X \sim \cN(0, P)$. Here capacity is defined in the usual sense, due to Shannon. See Section \ref{section: capacity} for a precise definition.

Compared to the  $\mathbf{(AP)}$ constraint, fewer results exist about the $\mathbf{(PP)}$ constrained AWGN. The AWGN channel with the $\mathbf{(PP)}$ constraints was first analyzed by Smith \cite{smith1971information}. Smith showed that the channel capacity $C$ in this case is given by
\begin{equation}\label{eq: smith}
C = \sup_{|X| \leq A} I(X; Y).
\end{equation}
Unlike the $\mathbf{(AP)}$ case, the supremum in equation \eqref{eq: smith} does not have a closed form expression. Using tools from complex analysis, Smith established that the optimal input distribution attaining the supremum in  equation \eqref{eq: smith} is discrete, and is supported on a finite number on points in the interval $[-A,A]$. He proposed an algorithm to numerically evaluate this optimal distribution, and thus the capacity. Smith also analyzed the $\mathbf{(APP)}$ constrained AWGN channel and derived similar results. In a related problem, Shamai \& Bar-David \cite{shamai1995capacity} studied the quadrature Gaussian channel with $\mathbf{(APP)}$ constraints, and extended Smith's techniques to establish analogous capacity results for the same.
\medskip


Our work in this paper is primarily concerned with a  power constraint, which we call a $(\sigma, \rho)$-power constraint, defined as follows:

\begin{definition*}
Let $\sigma, \rho \geq 0$. A codeword $(x_1, x_2, \dots, x_n)$ is said to satisfy a \emph{$(\sigma, \rho)$-power constraint} if
\begin{equation}\label{eq: def}
\sum_{j = k+1}^l x_j^2 \leq \sigma + (l-k)\rho~,~\forall~ 0 \leq k < l \leq n.
\end{equation}
\end{definition*}

These constraints are motivated by \emph{energy harvesting communication systems}, a research area which has seen a surge of interest in recent years. Energy harvesting (EH) is a process by which energy derived from an external source is captured, stored, and harnessed for applications.  For example, harvested energy in the form of solar, thermal, or kinetic energy is converted into electrical energy using photoelectric, thermoelectric, or piezoelectric materials, and is used to power electronic devices. Energy which is harvested is generally present as ambient background and is free. EH devices are efficient, cheap, and require low maintenance, making them an attractive alternative to battery-powered devices. The problem of communicating over a noisy channel using harvested energy is encountered  in a prominent application of EH: wireless sensor networks. Typically, sensor nodes used in such networks are battery-powered and thus have finite lifetimes. Since EH sensor nodes are capable of harvesting energy for their functioning, they have potentially infinite lifetimes and thereby have many advantages over their battery-powered counterparts \cite{sudevalayam2011energy}.
\begin{figure}[hh]
\begin{center}
\includegraphics[scale = 0.4]{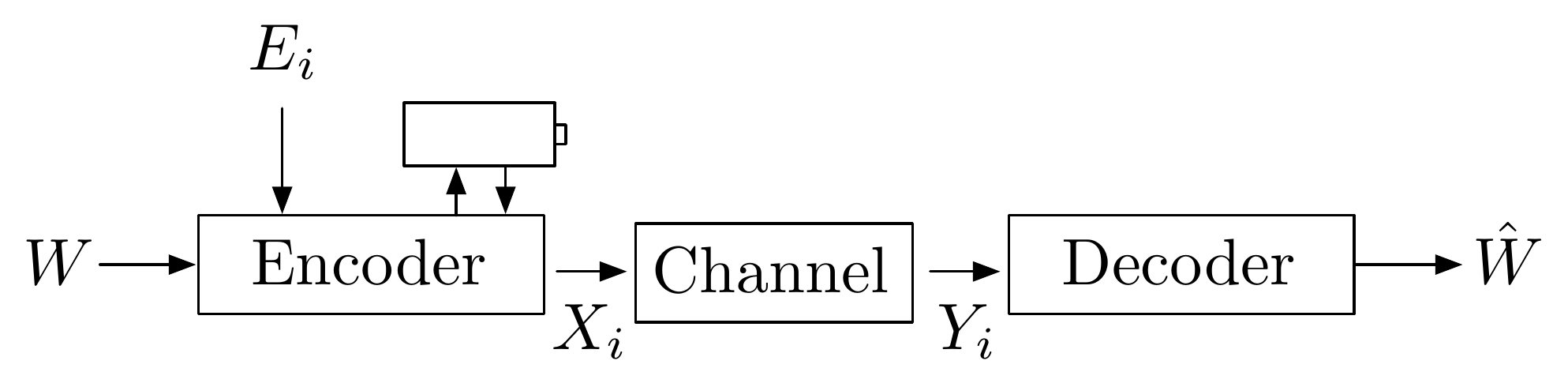}
\caption{Block diagram of a general energy harvesting communication system}\label{fig: generaleh}
\end{center}
 \end{figure}

We can model communication scenarios like the ``EH sensor node" via a general energy harvesting communication system shown in  Figure \ref{fig: generaleh}. Here, the transmitter is capable of harvesting energy, and uses it to transmit a codeword $X^n$, corresponding to a message $W$. The transmitter has a battery to store the excess unutilized energy, which can be used for transmission later. The amount of energy harvested in time slot $i$, denoted by $E_i$, can be modeled as a stochastic process. The process $E_i$, along with the battery capacity, determines the power constraints that the codeword $X^n$ has to satisfy. This codeword is transmitted over a noisy channel, and the receiver decodes $W$ using the channel output $Y^n$. 
A natural channel to study in this setting is the classical additive Gaussian noise (AWGN) channel. Suppose we have a channel model as in Figure \ref{fig: gaussianeh}; namely,  an AWGN channel with an energy harvesting transmitter which harvests a constant $\rho$ amount of energy per time slot, and which has a battery of capacity $\sigma$ attached to it. 

\begin{figure}[h]
\begin{center}
\includegraphics[scale = 0.45]{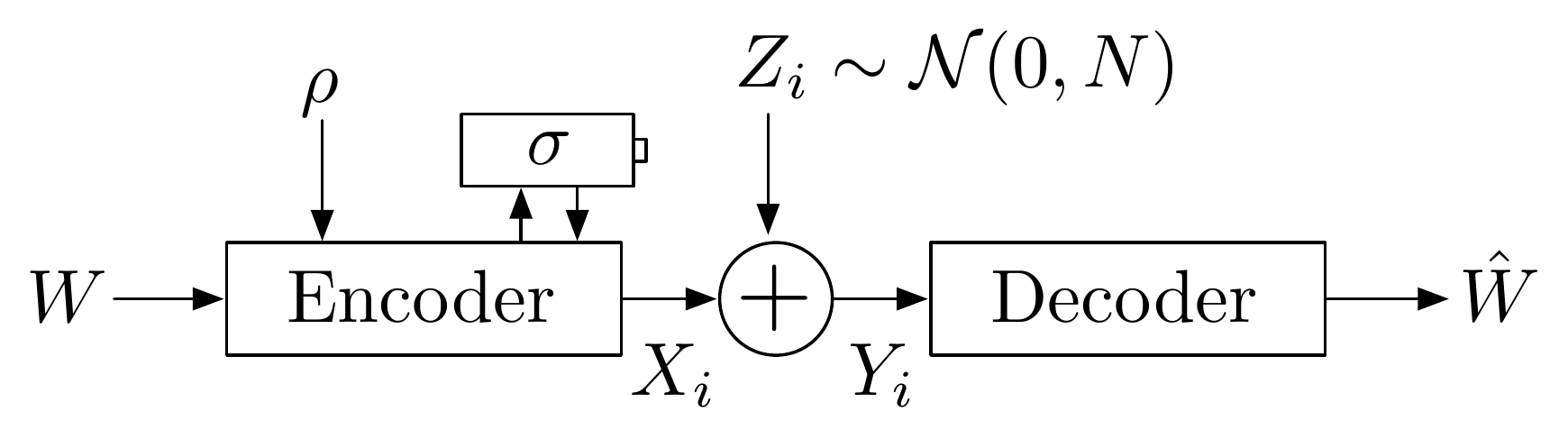}
\caption{$(\sigma, \rho)$-power constrained AWGN channel} \label{fig: gaussianeh}
\end{center}
\end{figure}

To understand the power constraints imposed on a transmitted codeword $(x_1, x_2, \dots, x_n)$ in this scenario, we define a \emph{state} $\sigma_i$, for each $i \geq 0$ as
\begin{align}
\sigma_0 = \sigma, \text{ and }
\sigma_{i+1} = \min( \sigma, \sigma_i + \rho - x_i^2 )~. \label{eq: batterycentered}
\end{align}
From the energy harvesting viewpoint, we can think of the state $\sigma_i$ as the charge in the battery at time $i$ before transmitting $x_i$, assuming the battery started out fully charged at time $0$. Denote by $\cS_n(\sigma, \rho) \subseteq \mathbb{R}^n$ the set
\begin{equation}
\cS_n(\sigma, \rho) = \{ x^n \in \mathbb R^n : \sigma_i \geq 0~,~ \forall~ 0 \leq i \leq n \}.
\end{equation}

In words, the set $\cS_n(\sigma, \rho)$ consists of sequences $(x_1, x_2, \dots, x_n)$ such that at no point during its transmission, is there a need to overdraw the battery. Thus, this set is precisely the set of all possible length $n$ sequences which the transmitter is capable of transmitting. Telescoping the minimum in equation (\ref{eq: batterycentered}), we get that for all $i \geq 0$, 
\begin{equation}
\sigma_{i+1} = \min\left(\sigma,~ \sigma+\rho - x_i^2,~ \cdots,~  \sigma + i\rho - \sum_{j=1}^{i} x_j^2 \right).
\end{equation}
Using the condition $\sigma_i \geq 0$ for all $i$, we obtain another characterization of $\cS_n(\sigma, \rho)$:
\begin{equation} 
\cS_n(\sigma, \rho) = \{ x^n \in \mathbb R^n: \sum_{j = k+1}^l x_j^2 \leq \sigma + (l-k)\rho~,~\forall~ 0 \leq k < l \leq n \},
\end{equation}
which is exactly the $(\sigma, \rho)$-power constraint defined in equation \eqref{eq: def}.  It is interesting to note that such $(\sigma, \rho)$-constraints were originally introduced by Cruz \cite{cruz1991calculus1, cruz1991calculus2} in connection with the study of packet-switched networks. We first look at the $(\sigma, \rho)$-power constraint for the extreme cases; namely, $\sigma = 0$ and $\sigma = \infty$.

\subsection*{No battery:}
Suppose that the battery capacity $\sigma$ is $0$; i.e., unused energy in a time slot cannot be stored for future transmissions. We can easily check that for a  transmitted codeword $(x_1,x_2, ..., x_n)$, the power constraints 
\begin{equation}
x_i^2 \leq \rho \mbox{, for every } 1 \leq i \leq n
\end{equation}
are necessary and sufficient to satisfy the inequalities in \eqref{eq: def}. Thus, the case of $\sigma = 0$ is simply the $\mathbf{(PP)}$ constraint of $\sqrt \rho$. 

\subsection*{Infinite battery:}
Consider the case where the battery capacity is now infinite, so that any unused energy can be saved for future transmissions. We assume that the battery is initially empty, but we can equally well assume it to start with any finite amount of energy in this scenario. The constraints imposed on a transmitted codeword $(x_1, x_2, ..., x_n)$ are
\begin{equation}
\sum_{i=1}^k x_i^2 \leq k\rho  \mbox{, for every } 1 \leq i \leq n.
\end{equation}
It was shown by Ozel \& Ulukus \cite{ozel2012achieving} that the strategy of initially saving energy and then using a Gaussian codebook achieves capacity, which is $\frac{1}{2} \log \left(1 + \frac{\rho}{N} \right)$. In fact,  \cite{ozel2012achieving} considers not just constant $E_i$, but a more general case of i.i.d.\ $E_i$.

\subsection*{Finite battery:}

An examination of equations \eqref{eq: def} and \eqref{eq: batterycentered} reveals that the energy constraint on the $n+1$-th symbol $x_{n+1}$, depends on the entire history of symbols transmitted up to time $n$. This infinite memory makes the exact calculation of  channel capacity under these constraints a difficult task. For some recent work on \emph{discrete} channels with finite batteries, we refer the reader to Tutuncuoglu et. al. \cite{tutuncuoglu2013binary, tutuncuoglu2014improved} and Mao \& Hassibi \cite{mao2013capacity}. An alternative model of an  AWGN channel with a finite battery was also considered by Dong et. al. \cite{dong2014near}, where the authors established approximate capacity results for the same.  
\medskip

In this paper, we will primarily focus on getting bounds on the channel capacity of an AWGN channel with $(\sigma, \rho)$-power constraints. Our work can be broadly divided into two parts; the first part deals with getting a lower bound, and the second part with getting an upper bound. The approach for both these parts relies on analyzing the geometric properties  of the sets $\cS_n(\sigma, \rho)$. In what follows, we briefly describe our results.

\subsection{Lower bound on capacity}
We obtain a lower bound on the channel capacity in terms of the volume of $\cS_n(\sigma, \rho)$. More precisely, we define $v(\sigma, \rho)$ to be the \emph{exponential growth rate of volume} of the family $\{\cS_n(\sigma, \rho)\}$:
\begin{equation}
v(\sigma, \rho) := \lim_{n\to\infty} \frac{1}{n} \log \text{Vol}(\cS_n(\sigma, \rho)),
\end{equation}
where the limit can be shown to exist by subadditivity. 
Our first result is Theorem \ref{thm: bounds} in Section \ref{section: volume}, which contains a lower bound on the channel capacity:
\begin{reptheorem}{thm: bounds}
The capacity $C$ of  an AWGN channel with a $(\sigma, \rho)$-power constraints and noise power $\nu$ satisfies
\begin{equation}\label{eq: rep bounds UB}
\frac{1}{2} \log \left( 1 + \frac{e^{2v(\sigma, \rho)}}{2\pi e\nu} \right) \leq C \leq \frac{1}{2}\log \left( 1 + \frac{\rho}{\nu} \right).
\end{equation}
\end{reptheorem}

Having obtained this lower bound on $C$, it is natural to study the dependence of $v(\sigma, \rho)$ on its arguments. Theorem \ref{thm: range of v} in Section \ref{section: v} establishes the following:

\begin{reptheorem}{thm: range of v}
For a fixed $\rho$, $v(\sigma, \rho)$ is a monotonically increasing, continuous, and concave function of $\sigma$  over $[0, \infty)$, with its range being $[\log 2\sqrt\rho, \frac{1}{2}\log 2\pi e \rho)$.
\end{reptheorem}

In Section \ref{section: finding v}, we describe a numerical method to find $v(\sigma, \rho)$ for any value of the pair $(\sigma, \rho)$. This calculated value can be used to compare the lower and upper bounds in Theorem \ref{thm: bounds} for different values of $\sigma$ for a fixed $\rho$. From the energy-harvesting perspective, this comparison indicates the benefit that a finite battery of capacity $\sigma$ has on the channel capacity. With this we conclude the first part of the paper.

\subsection{Upper bound on capacity}
The upper bound on capacity in \eqref{eq: rep bounds UB} is not satisfactory as it does not depend on $\sigma$. Our approach to deriving an improved upper bound on capacity also involves a volume calculation. However, the improved upper bound is not in terms of the volume of $\cS_n(\sigma, \rho)$, but in terms of the volume of the Minkowski sum of $\cS_n(\sigma, \rho)$ and a ``noise ball." Let $B_n(\sqrt{n\nu})$ be the Euclidean ball of radius $\sqrt{n\nu}$. The Minkowski sum of $\cS_n(\sigma, \rho)$ and $B_n(\sqrt{n\nu})$ (also called the parallel body of $\cS_n(\sigma, \rho)$ at a distance $\sqrt{n\nu}$), is defined by
\begin{equation}
\cS_n(\sigma, \rho) \oplus B_n(\sqrt{n\nu}) = \{x^n + z^n ~|~ x^n \in \cS_n(\sigma, \rho) , z^n \in B_n(\sqrt{n\nu}) \}~.
\end{equation}
In Section \ref{section: outer}, we prove the following upper bound on capacity:
\begin{reptheorem}{thm: voronoi}
The capacity $C$ of an AWGN channel with a $(\sigma, \rho)$-power constraint and noise power $\nu$ satisfies
 \begin{equation} \label{eq: minkowski}
C \leq \lim_{\epsilon \to 0_+} \limsup_{n \to \infty} \frac{1}{n} \log  \frac{\text{Vol}(\cS_n(\sigma, \rho) \oplus B_n(\sqrt{n(\nu+ \epsilon)}~))}{\text{Vol}(B_n(\sqrt{n\nu}))}.
\end{equation}
\end{reptheorem}
 This motivates us to define a function $\ell:[0, \infty) \to \mathbb R$, giving the  growth rate of the volume of the parallel body as follows:
\begin{equation}
\ell(\nu) :=  \limsup_{n \to \infty} \frac{1}{n} \log \text{Vol}(\cS_n(\sigma, \rho)\oplus B_n(\sqrt{n\nu}~)).
\end{equation}
 The upper bound can be restated as
\begin{equation}
C \leq \limsup_{\epsilon \to 0_+} \left[\ell(\nu+\epsilon) - \frac{1}{2}\log 2\pi e\nu\right].
\end{equation}
To study the properties of $\ell(\cdot)$, we use the following result from convex geometry called \emph{Steiner's formula}:
\begin{reptheorem}{thm: steinerformula} 
Let $K_n \subset \mathbb{R}^n$ be a compact convex set and let $B_n \subset \mathbb{R}^n$ be the unit ball. Denote by $\mu_j(K_n)$ the $j$-th intrinsic volume $K_n$, and by $\epsilon_j$ the volume of $B_j$. Then for $t \geq 0$,
\begin{equation}\label{eq: repsteiner}
Vol(K_n \oplus tB_n) = \sum_{j=0}^{n}\mu_{n-j}(K_n)\epsilon_j t^j.
\end{equation}
\end{reptheorem}

Intrinsic volumes are a fundamental concept in convex and integral geometry. They describe the global characteristics of a set, including the volume, surface area, mean width, and the Euler characteristic. For more details, we refer the reader to Schneider \cite{schneider2013convex} and section $14.2$ of Schneider \& Weil \cite{schneider2008stochastic}.
\smallskip

In Section \ref{section: cube}, we focus on the $\sigma = 0$ case for two reasons. Firstly, intrinsic volumes are notoriously hard to compute for arbitrary convex bodies. But when $\sigma = 0$, the set $\cS_n(\sigma, \rho)$ is simply the cube $[-\sqrt \rho , \sqrt \rho]^n$. The intrinsic volumes of a cube are well known in a closed form, which permits an explicit evaluation of $\ell(\nu)$. In his paper, Smith \cite{smith1971information} numerically evaluated and plotted the capacity of a $\mathbf{(PP)}$ constrained AWGN channel. Based on the plots, Smith noted that as $\nu \to 0$, the channel capacity seemed to satisfy
\begin{equation}
C = \log 2A - \frac{1}{2}\log 2\pi e\nu + o(1),
\end{equation}
where the $o(1)$ terms goes to $0$ as $\nu \to 0$. He gave an intuitive explanation for this phenomenon as follows: Let $X$ be the amplitude-constrained input, let $Z \sim \cN(0,\nu)$ be the noise, and let $Y$ be the channel output. Then for a small noise power $\nu$, $h(Y) \approx h(X),$
and
\begin{align*}
C &= \sup_X I(X;Y)\\
&= \sup_X h(Y) - h(Y|X)\\
&\approx \sup_X h(X)- h(Y|X)\\
&= \log 2A - \frac{1}{2}\log 2\pi e\nu.
\end{align*}
Note that the crux of this argument is that when the noise power is small, $\sup_X h(Y) \approx \sup_X h(X) = \log 2A$. This argument can me made rigorous by establishing
\begin{equation} \label{eq: lim of h}
\lim_{\nu \to 0}\left[\sup_X h(X+Z)\right] - \log 2A = 0.
\end{equation}

Recall that our upper bound on capacity is $C \leq \limsup_{\epsilon \to 0_+} \left\{\ell(\nu+\epsilon) - \frac{1}{2}\log 2\pi e\nu\right\}.$ Since $\ell(0) = \log 2A$, the continuity of $\ell$ at $0$ would lead to asymptotic upper bound which agrees with Smith's intuition. The following theorems provide our main result for the case of $\sigma = 0$:
\begin{reptheorem}{thm:  leconte}
The function $\ell(\nu)$ is continuous on $[0, \infty).$ For $\nu > 0$, we can explicitly compute $\ell(\nu)$ via the expression
\begin{equation}
\ell(\nu) = H(\theta^*) + (1-\theta^*)\log 2A +  \frac{\theta^*}{2} \log \frac{2\pi e\nu}{\theta^*},
\end{equation}
where $H$ is the binary entropy function, and $\theta^* \in (0,1)$ satisfies $$\frac{(1-\theta^*)^2}{{\theta^*}^3} = \frac{2A^2}{\pi\nu}.$$
\end{reptheorem}

\begin{reptheorem}{thm: ass cap cube}
The capacity $C$  of an AWGN channel with an amplitude constraint of $A$, and with noise power $\nu$, satisfies the following:
\begin{itemize}
\item[1.] 
When the noise power $\nu \to 0$, capacity $C$ is given by
$$C =  \log 2A - \frac{1}{2}\log 2\pi e\nu + O(\nu^{\frac{1}{3}}).$$
\item[2.]
When the noise power $\nu \to \infty$, capacity $C$ is given by
$$C = \frac{\alpha^2}{2} - \frac{\alpha^4}{4} + \frac{\alpha^6}{6} - \frac{5\alpha^8}{24} + O(\alpha^{10}),$$
where $\alpha = A/\sqrt\nu$.
\end{itemize}
\end{reptheorem}

We also establish a general entropy upper bound, which does not require the noise $Z$ to be Gaussian:
\begin{reptheorem}{thm: AN bound}
Let $A, \nu \geq 0$. Let $X$ and $Z$ be random variables satisfying $|X| \leq A$ a.s.\ and $\text{Var}(Z) \leq \nu$. Then 
\begin{equation}
h(X + Z) \leq \ell(\nu).
\end{equation}
\end{reptheorem}
~\\

Finally, in Section \ref{section: sn} we turn to the case of $\sigma > 0$. Unlike the $\sigma = 0$ case, the intrinsic volumes of $\cS_n(\sigma, \rho)$ are not known in a closed form. For $n \geq 1$, we let $\{\mu_n(0), \cdots, \mu_n(n)\}$ be the intrinsic volumes of $\cS_n(\sigma, \rho)$. The sequence of intrinsic volumes $\{\mu_n(\cdot)\}_{n \geq 1}$ forms a \emph{sub-convolutive} sequence (analyzed in Appendix \ref{appendix:  subc}). Convergence properties of such sequences can be effectively studied using large deviation techniques; in particular, the G\"{a}rtner-Ellis theorem \cite{dembo1998large}. These convergence results for intrinsic volumes can be  used in conjunction with Steiner's formula to establish results about $\ell$ and the asymptotic capacity of a $(\sigma, \rho)$-constrained channel in the low noise regime. Our main results here are:

\begin{reptheorem}{thm:  sn leconte}
Define $\ell(\nu)$ as
\begin{equation}
\ell(\nu) =  \limsup_{n \to \infty} \frac{1}{n} \log \text{Vol}(\cS_n(\sigma, \rho)\oplus B_n(\sqrt{n\nu}~)).
\end{equation}
For $n \geq 1$, define $G_n: \mathbb{R} \to \mathbb{R}$ and $g_n: \mathbb R \to \mathbb R$ as 
\begin{equation}
G_n(t) = \log \sum_{j=0}^n \mu_n(j)e^{jt}, \text{~~and~~} g_n(t) = \frac{G_n(t)}{n}.
\end{equation}
Define $\Lambda$ to be the pointwise limit of the sequence of functions $\{g_n\}$, which we show exists.
Let $\Lambda^*$ be the convex conjugate of $\Lambda$. Then the following hold:
\begin{enumerate}
\item
$\ell(\nu)$ is continuous on $[0, \infty).$
\item
For $\nu > 0$, 
\begin{equation}
\ell(\nu) = \sup_{\theta \in [0,1]} \left[ -\Lambda^* (1-\theta)+ \frac{\theta}{2}\log \frac{2\pi e \nu}{\theta}\right].
\end{equation}
\end{enumerate}
 \end{reptheorem}

 \begin{reptheorem}{thm: ass cap sn}
The capacity $C$ of an AWGN channel with $(\sigma, \rho)$-power constraints and noise power $\nu$ satisfies the following:
\begin{itemize}
\item[1.]
When the noise power $\nu \to 0$, capacity $C$ is given by
$$C =  v(\sigma, \rho) - \frac{1}{2}\log 2\pi e\nu + \epsilon(\nu),$$
where $\epsilon(\cdot)$ is a function such that $\lim_{\nu \to 0}\epsilon(\nu) = 0.$
\item[2.]
When noise power $\nu \to \infty$, capacity $C$ is given by
$$C = \frac{1}{2}\left(\frac{\rho}{\nu}\right)^2 - \frac{1}{4}\left(\frac{\rho}{\nu}\right)^4 +  \frac{1}{6}\left(\frac{\rho}{\nu}\right)^6 +  O\left(\left(\frac{\rho}{\nu}\right)^8\right).$$
\end{itemize}
\end{reptheorem}


\section{Channel Capacity}\label{section: capacity}

We define channel capacity as per the usual convention \cite{cover1991}:
\begin{definition}
A $(2^{nR}, n)$ code for the AWGN channel with a $(\sigma, \rho)$-power constraint consists of the following:
\begin{enumerate}
\item
A set of messages $\{1, 2, \dots, 2^{\lfloor nR \rfloor}\}$
\item
An encoding function $f: \{1, 2, \dots, 2^{\lfloor nR \rfloor}\} \to  \cS_n(\sigma, \rho)$, yielding codewords $f(1),  \dots, f(2^{\lfloor nR \rfloor})$
\item
A decoding function $g : \mathbb R^n \to \{1, 2, \dots, 2^{\lfloor nR \rfloor}\}$
\end{enumerate}
A rate $R$ is said to be achievable if there exists a sequence of $(2^{nR}, n)$ codes such the that probability of decoding error diminishes to $0$ as $n \to \infty$. The capacity of this channel is the supremum of all achievable rates.
\end{definition}
 Shannon's formula for channel capacity
\begin{equation}\label{eq: shannon 1}
C = \sup_X I(X;Y),
\end{equation}
is valid if the channel is \emph{memoryless}. For a channel with memory, one can often generalize this expression to
\begin{equation}\label{eq: shannon 2}
C = \lim_{n \to \infty}\left[ \sup_{X^n} \frac{1}{n} I(X^n; Y^n) \right],
\end{equation}
but this formula does not always hold. Dobrushin \cite{dobrushin1963general} showed that channel  capacity is given by formula \eqref{eq: shannon 2} for a class of channels called \emph{information stable} channels. Checking information stability for specific channels can be quite challenging. Fortunately, in the case of a $(\sigma, \rho)$-power constrained AWGN channel, we can establish formula \eqref{eq: shannon 2} without having to check for information stability. We prove the following theorem:

\begin{theorem}\label{thm: capacity}
For $n \in \mathbb{N}$, let $\cF_n$ be the set of all probability distributions supported on $\cS_n(\sigma, \rho)$. The capacity $C$ of a $(\sigma, \rho)$-power-constrained scalar AWGN channel is given by
\begin{equation}
C = \lim_{n \to \infty} \frac{1}{n}  \sup_{p_{X^n}(x^n) \in \cF_n} I(X^n; Y^n).
\end{equation}
\end{theorem}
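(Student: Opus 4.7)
The plan is to establish the formula by proving the two inequalities $C \leq \liminf_{n \to \infty} \tfrac{1}{n} I_n$ and $C \geq \limsup_{n \to \infty} \tfrac{1}{n} I_n$ separately, where $I_n := \sup_{p_{X^n} \in \cF_n} I(X^n; Y^n)$. Since $\liminf \leq \limsup$ automatically, these two bounds together force $\lim_n \tfrac{1}{n} I_n$ to exist and to equal $C$, which sidesteps having to prove any direct sub- or super-additivity property of the sequence $\{I_n\}$ — an awkward task, since the concatenation of two feasible codewords need not be feasible for the $(\sigma, \rho)$-constraint.

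For the converse I would use a standard Fano argument. Given an achievable rate $R$, pick a $(2^{nR}, n)$ code with error probability $\epsilon_n \to 0$ and let $W$ be uniform over the message set. Because the encoder takes values in $\cS_n(\sigma, \rho)$, the induced law $p_{X^n}$ belongs to $\cF_n$, and Fano's inequality plus the data processing inequality give
\begin{equation*}
nR \leq I(X^n; Y^n) + 1 + n R \epsilon_n \leq I_n + 1 + n R \epsilon_n.
\end{equation*}
Dividing by $n$ and sending $n \to \infty$ yields $R \leq \liminf_n \tfrac{1}{n} I_n$.

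The more delicate direction is achievability, which must contend with the infinite memory of the $(\sigma, \rho)$-constraint. The key trick is to insert \emph{silent guard intervals} between short blocks. Fix $k \geq 1$, set $m := \lceil \sigma/\rho \rceil$, and pick any $p_{X^k} \in \cF_k$. Build a length-$n = N(k+m)$ codeword by concatenating $N$ independent super-symbols drawn i.i.d.\ from $p_{X^k}$, with $m$ zero symbols placed between consecutive super-symbols. Since each super-symbol lies in $\cS_k(\sigma, \rho)$ and $m$ zero symbols recharge the battery state from any value $\sigma_\ast \in [0, \sigma]$ back to the full level $\sigma$ (as $\sigma_\ast + m\rho \geq \sigma$), the entire concatenated codeword lies in $\cS_n(\sigma, \rho)$. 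At the super-symbol level the channel is a memoryless additive Gaussian channel with input in $\mathbb R^k$ and i.i.d.\ Gaussian noise vector $Z^k \sim \cN(0, \nu I_k)$, so Shannon's random-coding theorem applied to this memoryless super-channel delivers any rate $R_{\text{super}} < I(X^k; Y^k)$ per super-symbol with vanishing error probability as $N \to \infty$. This translates to a per-channel-use rate of $\tfrac{I(X^k; Y^k)}{k+m}$; optimizing over $p_{X^k} \in \cF_k$ and then letting $k \to \infty$ with $m$ fixed gives achievability of any $R < \limsup_k \tfrac{I_k}{k+m} = \limsup_k \tfrac{I_k}{k}$, and hence $C \geq \limsup_n \tfrac{1}{n} I_n$.

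The main obstacle is precisely this achievability step: the infinite memory of the constraint blocks any naive block-coding concatenation. The silent-guard trick is what overcomes it, by exploiting the boundedness of the battery state to reduce the problem to coding over a memoryless super-channel, incurring only a multiplicative rate penalty of $\tfrac{k}{k+m}$ that vanishes as $k \to \infty$.
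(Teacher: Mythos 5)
Your proof is correct and reaches the same conclusion, but your converse differs from the paper's. The paper's converse proceeds by a constraint-relaxation argument: it upper-bounds $C$ by the capacity of a relaxed channel in which only each consecutive block of $N$ symbols must lie in $\cS_N(\sigma,\rho)$, giving $C \leq C_N/N$ for every $N$; it then shows $\{C_N\}$ is subadditive (via $\cF_{M+N} \subseteq \cF_M \times \cF_N$) and invokes Fekete's lemma so that $\inf_N C_N/N = \lim_N C_N/N$. You instead use Fano's inequality directly on any achievable code, obtaining $C \leq \liminf_n \tfrac{1}{n} I_n$ without invoking Fekete at all, and then let the achievability bound $C \geq \limsup_n \tfrac{1}{n}I_n$ squeeze $\liminf$ and $\limsup$ together so the limit exists automatically. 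Both converses are valid; yours is arguably more standard and more self-contained, while the paper's relaxation argument has the pedagogical virtue of explicitly exhibiting the subadditivity structure of $\{C_N\}$ (the same product-containment $\cS_{m+n} \subseteq \cS_m \times \cS_n$ is reused throughout the paper, e.g.\ in Lemma~\ref{lemma: limit}), so it fits the paper's overall theme of extracting everything from the two geometric containment properties $\mathbf{(A)}$ and $\mathbf{(B)}$. For the achievability direction your silent-guard-interval construction is essentially identical to the paper's, including the choice $m = \lceil \sigma/\rho \rceil$ and the observation that the per-symbol rate penalty $k/(k+m)$ vanishes as $k \to \infty$ because $\tfrac{1}{k}I_k$ stays bounded (e.g.\ by $\tfrac{1}{2}\log(1 + (\rho + \sigma/k)/\nu)$).
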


\begin{proof}
Let $N$ be a positive integer. Without loss of generality, we can assume that coding is done for block lengths which are multiples  $N$, say $nN$. For codes over such blocks, we relax the $(\sigma, \rho)$ constraints as follows. For every transmitted codeword $(x_1, x_2, \cdots, x_{nN})$, each consecutive block of $N$ symbols has to lie in $\cS_N(\sigma, \rho)$; i.e.,
\begin{equation}\label{EqnBlock}
(x_{kN+1}, x_{kN+2}, ..., x_{(k+1)N}) \in \cS_N(\sigma, \rho), \mbox{ for } 0 \leq k \leq n-1.
\end{equation}
Note that this is indeed a relaxation because a codeword satisfying the constraint \eqref{EqnBlock} is not guaranteed to satisfy the $(\sigma, \rho)$-constraints but any codeword satisfying the $(\sigma, \rho)$-constraints necessarily satisfies the constraint \eqref{EqnBlock}. The capacity of this channel $C_N$ can be written as
\begin{equation}
C_N = \sup_{p_{X^N}(x^N) \in \cF_N} I(X^N; Y^N).
\end{equation}
This capacity provides an upper bound to  $NC$ for any choice of $N$. Thus, we have the bound
\begin{equation}\label{eq: ub1}
C \leq \inf_N \frac{C_N}{N}.
\end{equation}
To show that $\inf_N C_N/N$ is $\lim_N C_N/N$, we first note that
$$I(X_1^{M+N}; Y_1^{M+N}) \leq I(X_1^M; Y_1^M) + I(X_{M+1}^{M+N}; Y_{M+1}^{M+N})~.$$
Taking the supremum on both sides with $p_{X^{M+N}}$ ranging over $\cF_{M+N}$,
\begin{align} \label{eq: C_N}
C_{M+N} &\leq \sup_{p_{X^{M+N}}(x^{M+N}) \in \cF_{M+N}} \left(I(X_1^M; Y_1^M) + I(X_{M+1}^{M+N}; Y_{M+1}^{M+N}) \right) \\
&\stackrel{(a)}\leq \sup_{p_{X^M}(x^M) \in \cF_M} I(X_1^M; Y_1^M) + \sup_{p_{X^N}(x^N) \in \cF_N} I(X_1^N; Y_1^N)\\
&= C_M + C_N.
\end{align}
Here $(a)$ follows due to the containment $\cF_{M+N} \subseteq \cF_M \times \cF_N$. This calculation shows that $\{C_{N}\}$ is a sub-additive sequence. Applying Fekete's lemma \cite{steele1997probability} we conclude that
$\lim_{N} C_N/N$ exists and equals $\inf_N C_N/N$, and thereby establish the upper bound
\begin{equation}\label{eq: ub2}
C \leq \lim_{N \to \infty} \frac{C_N}{N}. 
\end{equation}
We now show that $C$ is lower bounded by $\lim_N C_N/N$. Given any $\mathbf{x}_1, \mathbf{x}_2, \cdots, \mathbf{x}_n \in \cS_N$, the concatenated sequence $\mathbf{x}_1 \cdots \mathbf{x}_n$ need not always satisfy  the $(\sigma, \rho)$ power constraints. However, if we append $k = \lceil \frac{\sigma}{\rho} \rceil$ zeros to each $\mathbf{x}_i$ and then concatenate them, the $n(N+k)$ length string so formed lies in $\cS_{n(N+k)}$. This is because transmitting $\lceil \frac{\sigma}{\rho} \rceil$ zeros  after  each $\mathbf x_i$ ensures that the state, as defined in equation \eqref{eq: batterycentered}, returns to $\sigma$ before the transmission of $\mathbf{x}_{i+1}$ begins. Let us define a new set 
$$\hat{\cS}_N = \{ x^{N+k} : x_1^N \in \cS_N, x_{N+1}^{N+k} = \mathbf{0}\}.$$
The earlier discussion implies that 
\begin{equation}\label{eq: lb0}
\underbrace{\hat \cS_N \times \cdots \hat \cS_N}_{n \text{ times}} \subseteq \cS_{n(N+k)}.
\end{equation}
Equation \eqref{eq: lb0} implies that any block coding scheme which uses symbols from $\hat \cS_N$ is also a valid coding scheme under the $(\sigma, \rho)$ power constraints. The achievable rate for such a scheme can therefore provide a lower bound to $C$. This achievable rate is simply ${C_N}$, as the final $k$ transmissions in each symbol carry no information. Thus the per transmission achievable rate is $\frac{C_N}{N+k}$, and we get that
\begin{equation}\label{eq: lb1}
C \geq \frac{C_N}{N+k},
\end{equation}
for all $N$. Taking the limit as $N \to \infty$, we arrive at the bound
\begin{equation}\label{eq: lb2}
C \geq \lim_{N \to \infty} \frac{C_N}{N}.
\end{equation}
The containment (\ref{eq: ub2}), together with the inequality (\ref{eq: lb2}), completes the proof.
\end{proof}

\section{Lower-bounding capacity}\label{section: volume}
Coding with the $(\sigma,\rho)$ constraints can be thought of as trying to fit the largest number of centers of noise balls in $\mathcal{S}_n$, such that the noise balls are asymptotically approximately disjoint. One might therefore hope to get a packing based upper bound on capacity through the volume of $\cS_n$. We shall show that the volume of $\cS_n$ surprisingly yields a neat lower bound on capacity. 
\smallskip

Let $V_n(\sigma, \rho)$ denote the volume of $\cS_n(\sigma, \rho)$. We look at the exponential growth rate of this volume defined by
\begin{equation}
 v(\sigma, \rho) := \lim_{n \to \infty} \frac{\log V_n(\sigma, \rho)}{n}.
\end{equation}
Our first lemma is to establish the existence of the limit in the definition of $v(\sigma, \rho)$.
\begin{lemma}\label{lemma: limit}
$\lim_{n \to \infty} \frac{\log V_n(\sigma, \rho)}{n} $ exists.
\end{lemma}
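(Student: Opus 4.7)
The plan is to establish subadditivity of the sequence $\{\log V_n(\sigma, \rho)\}$ and then invoke Fekete's lemma, which was already used in the proof of Theorem \ref{thm: capacity} in an analogous way.

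The key geometric observation is the set inclusion
\begin{equation*}
\cS_{m+n}(\sigma, \rho) \subseteq \cS_m(\sigma, \rho) \times \cS_n(\sigma, \rho),
\end{equation*}
where the right hand side is viewed as a subset of $\mathbb{R}^{m+n}$ under the natural identification. To verify this, take any $(x_1, \dots, x_{m+n}) \in \cS_{m+n}(\sigma, \rho)$. Applying the defining inequality \eqref{eq: def} to index pairs with $0 \leq k < l \leq m$ shows that the first $m$ coordinates lie in $\cS_m(\sigma, \rho)$; applying it to pairs with $m \leq k < l \leq m+n$ (and shifting the index by $m$) shows that the last $n$ coordinates lie in $\cS_n(\sigma, \rho)$. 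Taking volumes yields $V_{m+n}(\sigma, \rho) \leq V_m(\sigma, \rho) V_n(\sigma, \rho)$, so the sequence $a_n := \log V_n(\sigma, \rho)$ satisfies $a_{m+n} \leq a_m + a_n$.

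Next, I would apply Fekete's lemma for subadditive sequences to conclude that
\begin{equation*}
\lim_{n \to \infty} \frac{a_n}{n} = \inf_{n \geq 1} \frac{a_n}{n}
\end{equation*}
exists in $[-\infty, +\infty)$. To rule out the value $-\infty$, I would note that $\cS_n(\sigma, \rho) \supseteq [-\sqrt{\rho}, \sqrt{\rho}]^n$ (since each single coordinate already satisfies $x_j^2 \leq \rho \leq \sigma + \rho$), giving $V_n(\sigma, \rho) \geq (2\sqrt{\rho})^n$ and hence $a_n/n \geq \log(2\sqrt{\rho})$ for every $n$. Finiteness from above is automatic since $\cS_n(\sigma, \rho)$ is contained in the ball of radius $\sqrt{\sigma + n\rho}$, so each $V_n(\sigma, \rho)$ is finite.

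There is no real obstacle in this argument; the only point that requires care is the set-theoretic verification of the inclusion $\cS_{m+n} \subseteq \cS_m \times \cS_n$, which is immediate because the $(\sigma, \rho)$-constraints on any sub-interval of $\{1, \dots, m+n\}$ contained entirely in $\{1, \dots, m\}$ or entirely in $\{m+1, \dots, m+n\}$ are exactly those defining $\cS_m$ and $\cS_n$ respectively. The strict inclusion is typically proper because $\cS_{m+n}$ additionally enforces constraints spanning both blocks, which is morally the reason $v(\sigma, \rho)$ is an infimum and not attained at any finite $n$.
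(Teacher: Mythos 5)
Your proof is correct and takes essentially the same route as the paper: the containment $\cS_{m+n}(\sigma,\rho) \subseteq \cS_m(\sigma,\rho) \times \cS_n(\sigma,\rho)$, subadditivity of $\log V_n$, and Fekete's lemma. The paper itself leaves open the possibility that the limit is $-\infty$ (and indeed it is when $\rho = 0$), so your extra step ruling this out via the cube inclusion $[-\sqrt\rho,\sqrt\rho]^n \subseteq \cS_n(\sigma,\rho)$ is a harmless bit of additional detail that the paper records separately as a lower bound on $v(\sigma,\rho)$.
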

\begin{proof}
The containment $\cS_{m+n}(\sigma, \rho) \subseteq \cS_m(\sigma, \rho) \times \cS_n(\sigma, \rho)$ gives
$$V_{m+n}(\sigma, \rho) \leq V_m(\sigma, \rho)V_n(\sigma, \rho),$$
which implies
$$\log V_{m+n}(\sigma, \rho) \leq \log V_m(\sigma, \rho)+ \log V_n(\sigma, \rho).$$
This shows that $\log V_n(\sigma, \rho)$ is a sub-additive sequence, and by Fekete's Lemma, the limit $\lim_{n \to \infty} \frac{\log V_n(\sigma, \rho)}{n}$ exists and is equal to $\inf_n \frac{\log V_n(\sigma, \rho)}{n}$ (which may a priori be $-\infty$).
\end{proof}
\begin{theorem}\label{thm: bounds}
The capacity $C$ of  an AWGN channel with  $(\sigma, \rho)$-power constraints and noise power $\nu$ satisfies
\begin{equation}
\frac{1}{2} \log \left( 1 + \frac{e^{2v(\sigma, \rho)}}{2\pi e\nu} \right) \leq C \leq \frac{1}{2}\log \left( 1 + \frac{\rho}{\nu} \right).
\end{equation}
\end{theorem}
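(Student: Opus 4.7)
The plan is to prove the two inequalities separately using the $n$-letter capacity formula established in Theorem \ref{thm: capacity}, namely $C = \lim_{n\to\infty} \frac{1}{n} \sup_{p_{X^n}\in\cF_n} I(X^n;Y^n)$.

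For the upper bound, I would exploit the fact that a $(\sigma,\rho)$-power constraint implies an average power constraint. Specializing the defining inequality with $k=0$ and $l=n$ gives $\sum_{i=1}^n x_i^2 \leq \sigma + n\rho$ for every $x^n \in \cS_n(\sigma,\rho)$. Therefore, for any $p_{X^n} \in \cF_n$, we have $\frac{1}{n}\sum_{i=1}^n E[X_i^2] \leq \rho + \sigma/n$. Since the noise $Z^n$ is i.i.d.\ $\cN(0,\nu)$, the standard argument for the AWGN channel with an average power constraint (using $h(Y^n) \leq \sum_i h(Y_i)$ and the Gaussian maximizing differential entropy under a second-moment constraint) yields $I(X^n;Y^n) \leq \frac{n}{2}\log\bigl(1 + (\rho+\sigma/n)/\nu\bigr)$. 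Dividing by $n$ and letting $n \to \infty$, the $\sigma/n$ term vanishes and we obtain $C \leq \frac{1}{2}\log(1+\rho/\nu)$.

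For the lower bound, I would construct an explicit input distribution and lower bound the induced mutual information. Let $X^n$ be uniformly distributed on $\cS_n(\sigma,\rho)$; this is a valid element of $\cF_n$, and has differential entropy $h(X^n) = \log V_n(\sigma,\rho)$. Since $Z^n$ is independent of $X^n$ and i.i.d.\ $\cN(0,\nu)$, we have $h(Z^n) = \frac{n}{2}\log 2\pi e \nu$. Applying the entropy power inequality to $Y^n = X^n + Z^n$ gives
\begin{equation*}
e^{2h(Y^n)/n} \geq e^{2h(X^n)/n} + e^{2h(Z^n)/n} = e^{2h(X^n)/n} + 2\pi e\nu,
\end{equation*}
and therefore
\begin{equation*}
\frac{1}{n}I(X^n;Y^n) = \frac{1}{n}\bigl(h(Y^n) - h(Z^n)\bigr) \geq \frac{1}{2}\log\!\left(1 + \frac{e^{2h(X^n)/n}}{2\pi e\nu}\right) = \frac{1}{2}\log\!\left(1 + \frac{e^{(2/n)\log V_n(\sigma,\rho)}}{2\pi e\nu}\right).
\end{equation*}
By Lemma \ref{lemma: limit}, $\frac{1}{n}\log V_n(\sigma,\rho) \to v(\sigma,\rho)$. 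Taking $n\to\infty$ on both sides and using continuity of $x \mapsto \frac{1}{2}\log(1+e^{2x}/(2\pi e\nu))$, we conclude $C \geq \frac{1}{2}\log\bigl(1 + e^{2v(\sigma,\rho)}/(2\pi e\nu)\bigr)$.

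Neither direction presents a serious obstacle: the upper bound is a routine reduction to the average-power AWGN capacity, and the lower bound is an application of the entropy power inequality to the uniform input on $\cS_n(\sigma,\rho)$. The only minor care needed is verifying that the uniform distribution on $\cS_n(\sigma,\rho)$ is a well-defined probability measure (which follows since $\cS_n(\sigma,\rho)$ is a bounded set of positive volume in $\mathbb{R}^n$, as $\cS_n(\sigma,\rho)$ contains a small cube around the origin when $\rho > 0$) and that the passage to the limit can be justified via Theorem \ref{thm: capacity}.
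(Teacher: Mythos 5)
Your lower bound argument is essentially the paper's: both invoke the entropy power inequality on $Y^n = X^n + Z^n$ and reduce to the exponential growth rate $v(\sigma,\rho)$ of $\mathrm{Vol}(\cS_n)$; the only cosmetic difference is that you instantiate the uniform distribution on $\cS_n(\sigma,\rho)$ explicitly, whereas the paper takes a supremum over $\cF_n$ and uses that $\sup_{p_{X^n}\in\cF_n} h(X^n) = \log V_n$. For the upper bound, however, you take a genuinely different and more self-contained route. The paper observes that $\cS_n(\sigma,\rho)$ is contained in the feasible set for an infinite battery initialized with charge $\sigma$, and then cites the Ozel--Ulukus result that the $\sigma=\infty$ capacity equals $\frac{1}{2}\log(1+\rho/\nu)$; this is a one-line argument but leans on a nontrivial external theorem. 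You instead relax the $(\sigma,\rho)$ constraints to the single total-power inequality $\sum_i x_i^2 \leq \sigma + n\rho$ (the $k=0,l=n$ case) and then run the elementary average-power AWGN chain $I(X^n;Y^n) \leq \sum_i h(Y_i) - h(Z^n) \leq \sum_i \tfrac{1}{2}\log(1+E[X_i^2]/\nu) \leq \tfrac{n}{2}\log(1+(\rho+\sigma/n)/\nu)$ via Jensen, letting the $\sigma/n$ term vanish in the limit. Both are correct; your version is more elementary and makes the theorem statement independent of the save-and-transmit achievability analysis, at the modest cost of a slightly longer write-up. One minor point to be explicit about: when passing to the limit in the lower bound, one should take $\liminf$ of $\frac{1}{n}\sup_{p_{X^n}\in\cF_n}I(X^n;Y^n)$ (which exists and equals $C$ by Theorem \ref{thm: capacity}) rather than claiming a limit of $\frac{1}{n}I(X^n_{\mathrm{unif}};Y^n)$ exists, but this is a routine fix.
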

\begin{proof}
Clearly, $C$ is upper bounded by the capacity for the $\sigma = \infty$ case (with zero initial battery condition), which by \cite{ozel2012achieving} is $\frac{1}{2} \log (1 + \frac{\rho}{\nu})$.
\smallskip

Let the noise $Z \sim \cN(0, \nu)$. To prove the lower bound, recall the capacity expression in Theorem \ref{thm: capacity}:
\begin{align}
C &= \lim_{n \to \infty} \frac{1}{n}  \sup_{p_{X^n}(x^n) \in \cF_n} I(X^n; Y^n)\\
&= \lim_{n \to \infty} \frac{1}{n} \sup_{p_{X^n}(x^n) \in \cF_n} h(Y^n) - h(Z^n)\\
&= \lim_{n \to \infty} \frac{1}{n} \sup_{p_{X^n}(x^n) \in \cF_n} h(Y^n) - \frac{1}{2}\log 2 \pi e \nu  \label{eq: mutual}
\end{align}
Thus, calculating capacity requires maximizing the output differential entropy $h(Y^n)$. Using Shannon's entropy power inequality, we have
\begin{equation}\label{eq: epi}
e^{\frac{2h(Y^n)}{n}} \geq e^{\frac{2h(X^n)}{n}} + e^{\frac{2h(Z^n)}{n}}.\\
\end{equation}
Thus,
\begin{align*}
\sup_{p_{X^n}(x^n) \in \cF_n} e^{\frac{2h(Y^n)}{n}} &\geq \sup_{p_{X^n}(x^n) \in \cF_n} e^{2\frac{h(X^n)}{n}} + 2\pi e\nu\\
&= e^{2\frac{\log V_n}{n}} + 2\pi e\nu.
\end{align*}
Taking logarithms on both sides and letting $n$ tend to infinity, we have
\begin{align}
\lim_{n \to \infty} \sup_{p_{X^n}(x^n) \in \cF_n} \frac{h(Y^n)}{n} \geq \frac{1}{2}\log \left( e^{2v(\sigma, \rho)} + 2\pi e\nu \right),
\end{align}
which, combined with equation (\ref{eq: mutual}) concludes the proof.
\end{proof}

\section{Properties of $v(\sigma, \rho)$}\label{section: v}


We can readily see that $v(\sigma, \rho)$ is monotonically increasing in both of its arguments. With a little more effort, we can also establish the following simple bounds for $v(\sigma, \rho)$:
\begin{equation}\label{eq: simple bounds}
\log 2\sqrt{\rho} \leq v(\sigma, \rho) \leq \log \sqrt{2\pi e\rho}.
\end{equation}
To show the lower bound from inequality \eqref{eq: simple bounds}, observe that if $x^n$ is such that for every $1 \leq i \leq n$, 
$$ |x_i| \leq \sqrt{\rho},$$
then the $(\sigma, \rho)$-constraints are satisfied. Thus, the cube $[-\sqrt{\rho}, \sqrt \rho]^n$ of volume $(2\sqrt{\rho})^n$ lies inside the set $\cS_n(\sigma, \rho)$, giving the lower bound $$v(\sigma, \rho) \geq \log 2\sqrt{\rho}.$$
For the upper bound, we use the ``total power" constraint,
$$x_1^2 + x_2^2 + \ldots + x_n^n \leq \sigma + n\rho,$$ which implies that $\cS_n(\sigma, \rho) \subseteq B_n(\sqrt{\sigma + n\rho})$, where $B_n(\sqrt{\sigma + n\rho})$ is the Euclidean ball of radius $\sqrt{\sigma + n\rho}$. The volume $V_n(\sigma, \rho)$ of the set $\cS_n(\sigma, \rho)$  is bounded above by the volume of $B_n(\sqrt{\sigma + n\rho})$, which gives 
\begin{align*}
v(\sigma, \rho) &\leq \lim_{n \to \infty} \frac{1}{n} \log\left( \frac{\pi^{\frac{n}{2}}}{\Gamma\left(\frac{n}{2} +1 \right)} (\sigma + n\rho)^{\frac{n}{2}}\right)\\
&= \lim_{n \to \infty} \frac{1}{2}\left(\log \pi + \log(\sigma + n\rho) - \log\left(\frac{n}{2e} \right) \right)\\
&= \frac{1}{2} \log 2\pi e \rho.
\end{align*}
Note that when $\rho = 0$, then $v(\sigma, 0) = -\infty$ for any value of $\sigma$. Henceforth, we assume $\rho > 0$. When $\sigma = 0$, the set $\cS_n(\sigma, \rho)$ degenerates to the cube $[-\sqrt \rho, \sqrt \rho]^n$, which has the volume growth rate exponent of $\log 2\sqrt{\rho}$. It is clear that when $\sigma > 0$, the set $\cS_n(\sigma, \rho)$ contains the cube $[-\sqrt \rho, \sqrt \rho]^n$, implying that
$$V_n(\sigma, \rho) > (2\sqrt \rho)^n.$$
However, this does not immediately imply that $v(\sigma, \rho) > \log 2 \sqrt \rho$. The following theorem is the main result of this section, where we show that such a strict inequality holds, and also prove some other properties of the function $v(\sigma, \rho)$:
\begin{theorem}\label{thm: range of v}
For a fixed $\rho$, $v(\sigma, \rho)$ is a monotonically increasing, continuous, and concave function of $\sigma \in [0, \infty)$, with its range being $[\log 2\sqrt\rho, \frac{1}{2}\log 2\pi e \rho)$.
\end{theorem}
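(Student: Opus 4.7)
My plan is to prove the four properties in sequence. Monotonicity is immediate: if $\sigma_1 \leq \sigma_2$, then $\cS_n(\sigma_1, \rho) \subseteq \cS_n(\sigma_2, \rho)$, so $V_n(\sigma_1, \rho) \leq V_n(\sigma_2, \rho)$, and passing to the limit in $n$ gives $v(\sigma_1, \rho) \leq v(\sigma_2, \rho)$. For concavity, I would invoke the Brunn-Minkowski inequality. The key set-theoretic observation is that for $\lambda \in [0,1]$, $x \in \cS_n(\sigma_1, \rho)$, and $y \in \cS_n(\sigma_2, \rho)$, convexity of $t \mapsto t^2$ yields
$$\sum_{j=k+1}^l (\lambda x_j + (1-\lambda)y_j)^2 \leq \lambda\bigl(\sigma_1 + (l-k)\rho\bigr) + (1-\lambda)\bigl(\sigma_2 + (l-k)\rho\bigr),$$
so $\lambda \cS_n(\sigma_1, \rho) + (1-\lambda)\cS_n(\sigma_2, \rho) \subseteq \cS_n(\lambda\sigma_1 + (1-\lambda)\sigma_2, \rho)$. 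Applying Brunn-Minkowski to this inclusion gives concavity of $V_n^{1/n}$ in $\sigma$; taking logarithms and using concavity of $\log$ transfers this to $\frac{1}{n}\log V_n$, and passing to $n\to\infty$ preserves concavity of $v(\cdot, \rho)$.

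Continuity on $(0, \infty)$ is then automatic from concavity. At $\sigma=0$, the elementary constraint $x_j^2 \leq \sigma+\rho$ (the $(k,l)=(j-1,j)$ case of the definition) gives $\cS_n(\sigma, \rho) \subseteq [-\sqrt{\sigma+\rho}, \sqrt{\sigma+\rho}]^n$ and hence $v(\sigma, \rho) \leq \log 2\sqrt{\sigma+\rho}$. This upper bound, together with $v(\sigma, \rho) \geq v(0, \rho) = \log 2\sqrt\rho$ from monotonicity, forces $\lim_{\sigma\to 0^+} v(\sigma,\rho) = v(0,\rho)$.

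For the range, the lower endpoint $v(0,\rho) = \log 2\sqrt\rho$ is attained at $\sigma = 0$ since $\cS_n(0,\rho) = [-\sqrt\rho,\sqrt\rho]^n$, and the upper bound $v(\sigma,\rho) \leq \frac{1}{2}\log 2\pi e\rho$ is part of \eqref{eq: simple bounds}. What is left is (i) that the supremum equals $\frac{1}{2}\log 2\pi e\rho$, and (ii) that this supremum is never attained at a finite $\sigma$. For (i), I would fix a small $\epsilon > 0$ and consider the i.i.d.\ input $Y_j \sim \cN(0, \rho-\epsilon)$: the partial sums $S_i = \sum_{j\leq i}(Y_j^2-\rho)$ form a random walk with negative drift $-\epsilon$, so the maximal excursion $\sup_{0\leq k<l}(S_l-S_k)$ is a.s.\ finite, and $P(Y^n \in \cS_n(\sigma,\rho)) = P\bigl(\sup_{k<l}(S_l-S_k)\leq \sigma\bigr) \to 1$ as $\sigma\to\infty$, uniformly in $n$. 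Combining this with the continuous AEP (the density of $Y^n$ on its typical set $A_{\epsilon'}^{(n)}$ is at most $e^{-n(h-\epsilon')}$ with $h = \frac{1}{2}\log 2\pi e(\rho-\epsilon)$) yields
$$V_n(\sigma,\rho) \geq e^{n(h-\epsilon')}\, P\bigl(Y^n\in \cS_n(\sigma,\rho)\cap A_{\epsilon'}^{(n)}\bigr),$$
and letting $n\to\infty$, then $\sigma\to\infty$, and finally $\epsilon,\epsilon'\to 0$ gives $\sup_\sigma v(\sigma,\rho) \geq \frac{1}{2}\log 2\pi e\rho$. Point (ii), which I expect to be the main technical obstacle, requires strict inequality: the natural strategy is to upgrade Brunn-Minkowski to strict concavity, since $\cS_n(\sigma_1,\rho)$ and $\cS_n(\sigma_2,\rho)$ are not homothetic when $\sigma_1\neq \sigma_2$, so the Brunn-Minkowski inequality is strict at each finite $n$. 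A sufficiently quantitative form of this strictness, in tandem with the bound $v(\sigma,\rho)\leq \log 2\sqrt{\sigma+\rho}$ (which already rules out equality for $\sigma < (\pi e/2 - 1)\rho$), should exclude $v(\sigma,\rho) = \frac{1}{2}\log 2\pi e\rho$ for any finite $\sigma$ and complete the identification of the range.
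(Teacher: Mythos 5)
Your monotonicity and concavity arguments are correct and essentially equivalent to the paper's: the inclusion $\lambda\cS_n(\sigma_1,\rho)+(1-\lambda)\cS_n(\sigma_2,\rho)\subseteq\cS_n(\lambda\sigma_1+(1-\lambda)\sigma_2,\rho)$ you exhibit is the same observation the paper packages as convexity of the lifted set $\mathscr{S}_{n+1}$ before applying Brunn--Minkowski, and passing concavity of $\tfrac1n\log V_n$ to the pointwise limit is unproblematic. Your continuity argument is genuinely different and cleaner: the paper proves continuity directly via a scaling inclusion $\sqrt{1-2\delta}\,\cS_n(\sigma+\delta,1)\subseteq\cS_n(\sigma-\delta,1)$, whereas you deduce continuity on $(0,\infty)$ from concavity for free and handle $\sigma=0$ with the elementary squeeze $v(0,\rho)\leq v(\sigma,\rho)\leq\log 2\sqrt{\sigma+\rho}$. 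Your argument for $\sup_\sigma v(\sigma,\rho)=\tfrac12\log 2\pi e\rho$ also takes a different and arguably simpler route: the paper draws standard normals, bounds the burstiness by $O(\sqrt n)$ via Donsker's theorem, and uses a delayed recharge; you instead shrink the variance to $\rho-\epsilon$ so that the partial-sum process has negative drift, which makes the probability of staying in $\cS_n(\sigma,\rho)$ bounded below \emph{uniformly in $n$} once $\sigma$ is large, and then AEP finishes. Both work.

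The one place where your plan would fail as written is point (ii), ruling out $v(\sigma_0,\rho)=\tfrac12\log 2\pi e\rho$ at any finite $\sigma_0$. You propose to use strictness of Brunn--Minkowski (the sets are not homothetic), but strict concavity of each $\tfrac1n\log V_n$ does \emph{not} survive passage to the limit in $n$ — the pointwise limit of strictly concave functions is merely concave — so this route cannot close the gap. The elementary bound $v(\sigma,\rho)\leq\log 2\sqrt{\sigma+\rho}$, as you note, only rules out small $\sigma_0$. A correct and short argument is a Gaussian comparison: since $\|x\|^2\leq\sigma_0+n\rho$ on $\cS_n(\sigma_0,\rho)$,
\begin{equation*}
V_n(\sigma_0,\rho)=\int_{\cS_n}dx\;\leq\;e^{(\sigma_0+n\rho)/2\rho}\int_{\cS_n}e^{-\|x\|^2/2\rho}\,dx
=e^{\sigma_0/2\rho}\,(2\pi e\rho)^{n/2}\,\mathbb P\bigl(X^n\in\cS_n(\sigma_0,\rho)\bigr),
\end{equation*}
where $X^n\sim\cN(0,\rho)^{\otimes n}$. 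Using $\cS_n(\sigma_0,\rho)\subseteq[-\sqrt{\sigma_0+\rho},\sqrt{\sigma_0+\rho}]^n$ and independence, $\mathbb P(X^n\in\cS_n)\leq q^n$ with $q:=\mathbb P(|\cN(0,\rho)|\leq\sqrt{\sigma_0+\rho})<1$, so $v(\sigma_0,\rho)\leq\tfrac12\log 2\pi e\rho+\log q<\tfrac12\log 2\pi e\rho$. Worth noting: the paper's own proof of the theorem establishes monotonicity, continuity, concavity, and the limit as $\sigma\to\infty$, but does not spell out an argument excluding attainment of the supremum either, so you were right to flag this as the point requiring extra care.
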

\begin{proof}[Proof of Theorem \ref{thm: range of v}]
Theorem \ref{thm: range of v} relies on several lemmas. We state the lemmas here and defer their proofs to Appendix \ref{appendix: section: v}. We first show that it is enough to prove the theorem for $\rho = 1$:
 \begin{lemma}[Proof in Appendix \ref{proof: lemma: v1}]\label{lemma: v1}
Let $v_1(\sigma) = v(\sigma, 1)$. Then $v(\sigma, \rho)$ depends on $v_1(\sigma/\rho)$ according to
\begin{equation}\label{eq: scaling}
v(\sigma, \rho) = \log \sqrt{\rho} + v_1(\sigma/\rho).
\end{equation}
\end{lemma}

Thus, a different value of $\rho$ leads to a function $v(\sigma, \rho)$ which is essentially $v_1(\sigma)$ shifted by a constant. Therefore, if $v_1(\sigma)$ is monotonically increasing, continuous, and concave, so is $v(\sigma, \rho)$ for any other value of $\rho > 0$. In Lemmas \ref{lemma: continuity} and \ref{lemma: concavity of v1}, we establish that $v_1(\sigma)$ is a continuous and concave function on $[0, \infty)$:

\begin{lemma}[Proof in Appendix \ref{proof: lemma: continuity}]\label{lemma: continuity}
The function $v_1(\sigma)$ is continuous on $[0, \infty)$.
\end{lemma}

\begin{lemma}[Proof in Appendix \ref{proof: lemma: concavity of v1}]\label{lemma: concavity of v1}
The function $v_1(\sigma)$ is concave on $[0, \infty)$.
\end{lemma}

To finish the proof, we need to show that the limiting value of $v_1(\sigma)$ as $\sigma \to \infty$ is $\frac{1}{2} \log 2\pi e$. It is useful to define a quantity, which we call \emph{burstiness of a sequence}, as follows: Let $\cA_n$ denote the the $n$-dimensional ball of radius $\sqrt{n}$; i.e.,
$$\cA_n := \left\{x^n ~:~\sum_{i=1}^n x_i^2 \leq n \right\}.$$
Fix $x^n \in \cA_n$. We associate a \emph{burstiness} to each such sequence, defined by
\begin{equation}
\sigma(x^n) := \max_{ 0\leq k < l \leq n }\left( \sum_{i=k+1}^{l}x_i^2 - (l-k) \right).
\end{equation}
Let
$$\cA_n(\sigma) = \{x^n \in \cA_n~:~ \sigma(x^n)\leq \sigma\} .$$
Notice that $\cA_n(\sigma) \subseteq \cS_n(\sigma, 1)$. 
We have $\cA_n(0) = [-1,1]^n$ and $\cA_n(n-1) = \cA_n$. As $\sigma$ increases from $0$ to $n-1$, $\cA_n(\sigma)$ increases from the cube to the entire sphere. We have the following lemma:

\begin{lemma}[Proof in Appendix \ref{proof: lemma: Ansigman}]\label{lemma: Ansigman}
If there exists a sequence $\sigma(n)$ such that
\begin{subequations}
\begin{align}
&\lim_{n\to \infty} \frac{1}{n}\log \mbox{Vol}(\cA_n(\sigma(n))) = \frac{1}{2} \log 2\pi e~ \label{eq: cond1}, \quad \text{and}\\
&\lim_{n \to \infty} \frac{\sigma(n)}{n} = 0 \label{eq: cond2},
\end{align}
\end{subequations}
then $\lim_{\sigma \to \infty} v_1(\sigma) = \frac{1}{2} \log 2\pi e$.
\end{lemma}

Note that the natural choice which satisfies condition (\ref{eq: cond1}) is $\sigma(n) = n-1$, but this does not satisfy condition (\ref{eq: cond2}).
To complete the proof, we show that $\sigma(n) = c\sqrt{n}$ for a suitable constant $c$ satisfies both conditions of Lemma \ref{lemma: Ansigman}, and establish the following result:

\begin{lemma}[Proof in Appendix \ref{proof: lemma: sigma_infty}]\label{lemma: sigma_infty}
$\lim_{\sigma \to \infty} v(\sigma, 1) = \frac{1}{2} \log 2\pi e.$
\end{lemma}
This completes the proof of Theorem \ref{thm: range of v}.
\end{proof}


\section{Numerical method to compute $v(\sigma, \rho)$}\label{section: finding v}
In this section, we briefly discuss the numerical evaluation of $v(\sigma, \rho)$. This discussion is nontechnical and for all the technical details justifying the numerical method, we refer the reader to Appendix \ref{appendix: numerical}.\\

 Numerical computation of $v(\sigma, \rho)$ is enabled by exploiting the idea of state as defined in equation \eqref{eq: batterycentered}. However for ease of analysis and implementation, we define the state slightly differently. Given $(x_1, \dots, x_n) \in \cS_{n}(\sigma, 1)$, define
\begin{equation}\label{eq: phi1_rough}
\phi_n = 
\begin{cases}
\sigma_n &\text{ if } \sigma_n < \sigma,\\
\sigma_{n-1} + 1 - x_n^2 &\text{ if } \sigma_n = \sigma.\\
\end{cases}
\end{equation}
The state $\phi_n$ is a sum of two terms: $\sigma_n$, which is the amount of charge in the battery at time $n$, and the amount of energy wasted at time $n$ due to the limited battery capacity. Note that energy is wasted only when $\sigma_n = \sigma$; i.e., when the battery becomes full. Setting $\phi_0 = \sigma$, equation \eqref{eq: phi1_rough} can also be written as
\begin{equation}\label{eq: phi2_rough}
\phi_{n} = 
\begin{cases}
\phi_{n-1} + 1 - x_n^2 &\text{ if } \phi_{n-1} < \sigma,\\
\sigma + 1 - x_n^2 &\text{ if } \phi_{n-1} \geq \sigma.\\
\end{cases}
\end{equation}
Consider the function $\Phi_n: \cS_{n}(\sigma, 1) \to \mathbb R$, defined by $\Phi_n(x_1, \dots, x_n) = \phi_n$. Thus, $\Phi_n$ maps a point in $\cS_{n}(\sigma, 1)$ to its state at time $n$, as defined in equations \eqref{eq: phi1_rough} and \eqref{eq: phi2_rough}. Let $\lambda_n$ be the Lebesgue measure restricted to $\cS_{n}(\sigma, 1)$. The function $\Phi_n$ induces a measure on $\mathbb R$, which we call $\nu_n$. As $0 \leq \phi_n \leq \sigma+1$ for all $x^n \in \cS_n(\sigma, 1)$, we see that the measure $\nu_n$ is supported on $[0,\sigma+1]$, giving 
$$\nu_n([0,\sigma+1]) = \text{Vol}(\cS_n(\sigma, 1)).$$
Suppose $\nu_n$ is absolutely continuous with respect to the Lebesgue measure on $\mathbb R$; this implies existence of a density $f_n$ corresponding to $\nu_n$, which satisfies
\begin{equation}\label{eq: integral_rough}
\text{Vol}(\cS_n(\sigma,1)) = \int_{x=0}^{\sigma+1} f_n(x) dx.
\end{equation}
Given the state $\phi_n < \sigma$, the symbol $x_{n+1}$ is constrained to lie in $[-\sqrt{\phi_n+1}, \sqrt{\phi_n+1}]$. Furthermore, given $\phi_n$, the symbol $x_{n+1}$ has the Lebesgue measure restricted to this set. Similarly, for $\phi_n \geq \sigma$, the conditional measure of $x_{n+1}$ is the Lebesgue measure restricted to $[-\sqrt{\sigma+1}, \sqrt{\sigma+1}]$. Using equation \eqref{eq: phi2_rough}, we can find a relation between the measures $\nu_n$ and $\nu_{n+1}$ as follows: For $\phi \in [0, \sigma+1]$,
\begin{align}
F_{n+1}(\phi) &:= \nu_{n+1}((-\infty, \phi])\\
=& \int_{x = 0}^{\phi - 1} \int_{t = 0}^{x+1}\frac{f_n(x)}{\sqrt t} dt dx \nonumber\\
 +& \int_{x = \phi - 1}^{\sigma} \int_{t = x - (\phi - 1)}^{ x+1}\frac{f_n(x)}{\sqrt t} dt dx \nonumber\\
 +& \int_{x = \sigma}^{\sigma+1} \int_{t = \sigma - (\phi -1)}^{\sigma+1}\frac{f_n(x)}{\sqrt t} dt dx\\
 =& \int_{x=0}^{\phi-1} 2f_n(x)[\sqrt{x+1}]dx \nonumber\\
+& \int_{x=\phi-1}^\sigma 2f_n(x)[\sqrt{x+1}-\sqrt{x-(\phi-1)}]dx \nonumber\\
+& \int_{x = \sigma}^{\sigma+1} 2f_n(x)[\sqrt{\sigma+1} - \sqrt{\sigma-(\phi-1)}]dx.
\end{align}
Differentiating $F_{n+1}$, we obtain 
\begin{align}\label{eq: fn1_rough}
f_{n+1}(\phi) = \int_{\phi-1}^\sigma \frac{f_n(x)}{\sqrt{x - (\phi - 1)}} dx + \int_\sigma^{\sigma+1} \frac{f_n(x)}{\sqrt{\sigma - (\phi -1)}} dx.
\end{align}
Define the integral operator $A$ as follows:
\begin{equation}
A(x,t) = 
\begin{cases}
\frac{1}{\sqrt{x+1-t}} &\text{ if } 0 \leq x < \sigma \text{ and } 0 \leq t \leq  x+1,\\ 
 \frac{1}{\sqrt{\sigma+1-t}} &\text{ if } \sigma \leq x \leq \sigma+1 \text{ and } 0 \leq t \leq \sigma+1,\\
 0 &\text{ otherwise.}
\end{cases} 
\end{equation}
We can express equation \eqref{eq: fn1_rough} in another form,
\begin{equation}\label{eq: fn2_rough}
f_{n+1}(t) = \int A(x,t)f_n(x)dx,
\end{equation}
denoted by $f_{n+1} = A(f_n)$. Iterating this relation, we obtain
\begin{equation}\label{eq: fn3_rough}
f_{n+1} = A^n f_1.
\end{equation}
Our interest is in $v_1(\sigma)$, which by equations \eqref{eq: integral_rough} and \eqref{eq: fn3_rough} is
\begin{equation}
v_1(\sigma) = \lim_{n \to \infty} \frac{1}{n} \log \int_{0}^{\sigma+1} A^{n-1}f_1(x)dx.
\end{equation}
It seems natural to expect this limit to equal the largest eigenvalue of $A$. Our approach to finding the largest eigenvalue is to discretize $A$; let $h_n = \frac{\sigma+1}{n}$, and let $A_n$ be an $(n+1) \times (n+1)$-matrix such that 
$$A_n(i,j) = h_n \times A\big((i-1)h_n, (j-1)h_n\big) \text{~~for~~} 1 \leq i, j \leq n+1.$$
We can approximate the largest eigenvalue of the matrix $A_n$ using standard methods and expect this value to tend to the largest eigenvalue of $A$ as $n$ becomes large.
\smallskip

Figure \ref{fig: graph_v} shows the plot of $v_1(\sigma)$ obtained using the numerical procedure. Note that as $\sigma$ becomes large, $v_1(\sigma)$ tends to the limit $\frac{1}{2} \log 2\pi e$ in a concave manner,  as per Theorem \ref{thm: range of v}. 

\begin{figure}[h]
\begin{center}
\includegraphics[scale = 0.60]{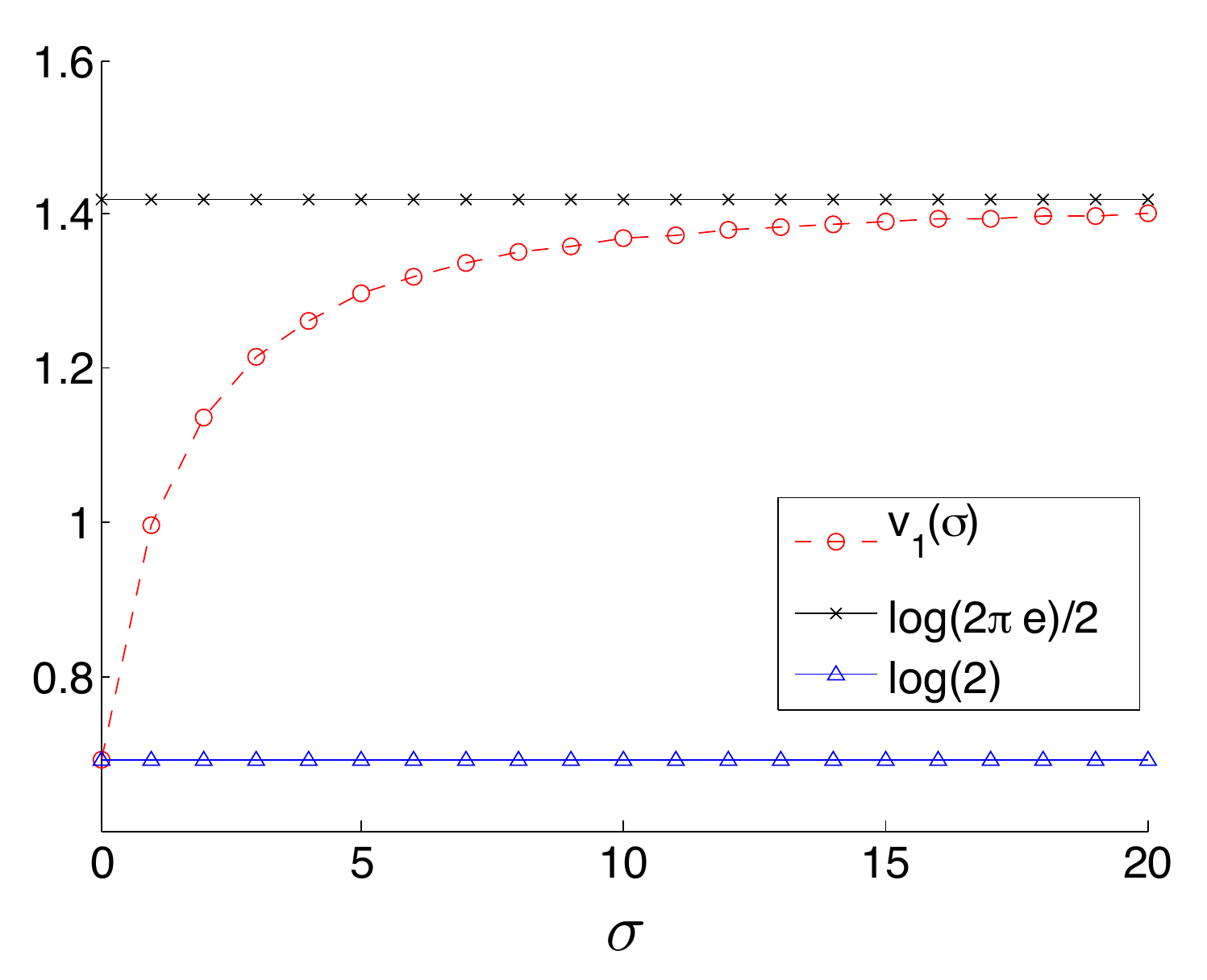}
 \caption{Graph of $v_1(\sigma)$ obtained numerically} \label{fig: graph_v}
\end{center}
\end{figure}

We are now in a position to plot the bounds on  capacity derived in Theorem \ref{thm: bounds}. Figure \ref{fig: bounds} shows a plot of the lower and upper bounds for a fixed value of $\rho$ $( = 1$) and for different values of the noise power $\nu$.
\begin{figure}[h]
\begin{center}
\includegraphics[scale = 0.55]{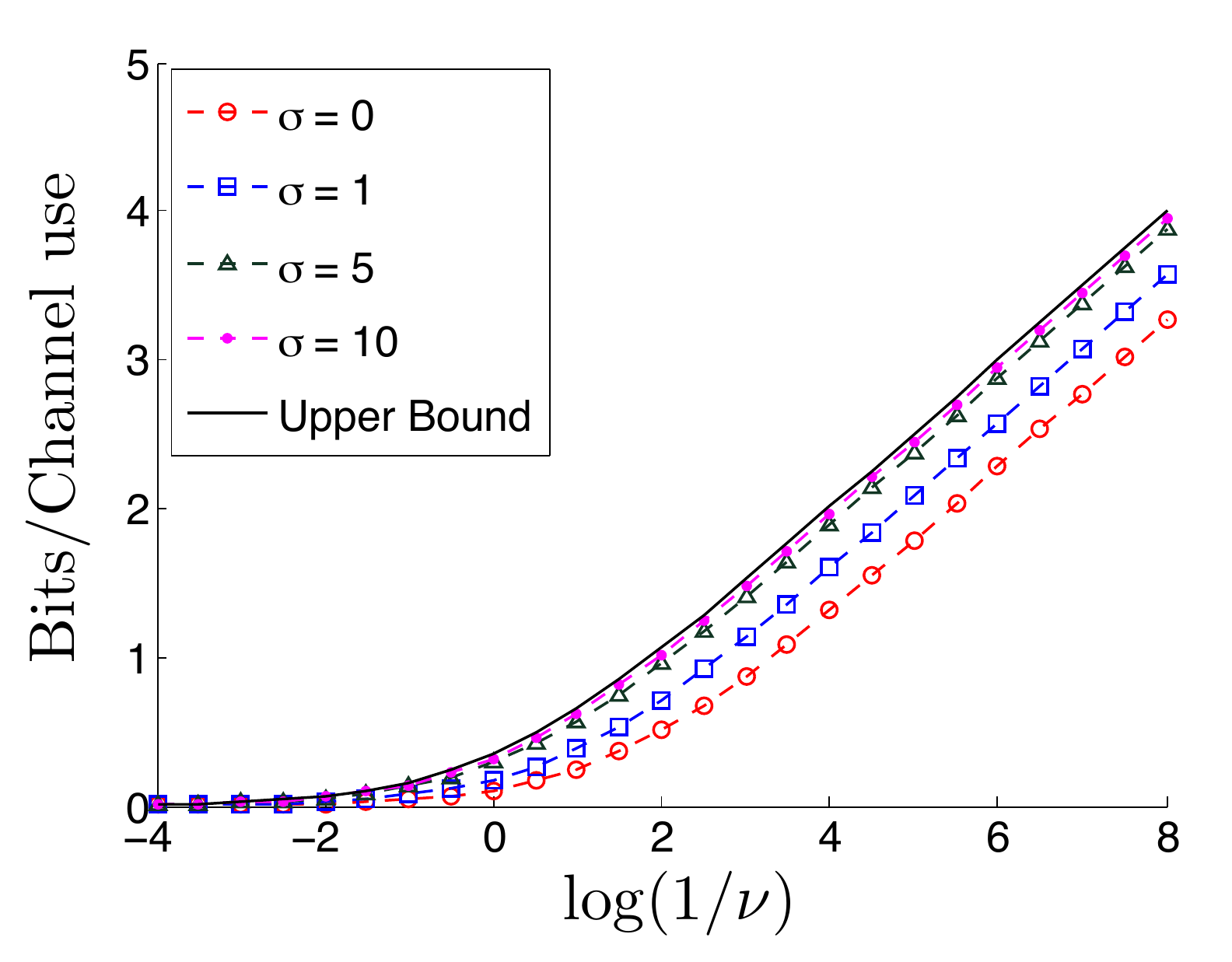}
 \caption{Capacity lower bounds for $\sigma = 0, 1, 5,$ and $10$, and the upper bound, from Theorem \ref{thm: bounds} plotted versus $\log(1/\nu)$} \label{fig: bounds}
\end{center}
\end{figure}
Note that even for relatively small values of $\sigma$, the volume based lower bound on capacity is close to the upper bound, which we recall is the channel capacity when $\sigma = \infty$. Thus, a small battery leads to significant gains in the capacity of a $(\sigma, \rho)$-power constrained AWGN channel.

\section{Upper-bounding capacity}\label{section: outer}

Theorem \ref{thm: bounds} states that $\frac{1}{2} \log \left(1+ \frac{\rho}{\nu}\right)$ upper-bounds the channel capacity. This bound is not entirely satisfactory since it is independent of the value of $\sigma$. Furthermore, Figure \ref{fig: bounds} indicates that the lower bound and the upper bound do not converge asymptotically: as $\nu \to 0$, the lower bound is $v(\sigma, \rho) - \frac{1}{2}\log 2\pi e\nu + O(\nu)$ and the upper bound is $\frac{1}{2}\log \frac{\rho}{\nu} + O(\nu)$, which differ by $O(1)$. This implies that either the upper bound, or the lower bound, or both, are loose in the low-noise regime. It is natural to expect the upper bound to be loose, since it disregards the effects of a finite $\sigma$ on capacity. To obtain some insight on the low-noise capacity, it is useful to think of coding with the $(\sigma,\rho)$-constraints as trying to fit the largest number of centers of noise balls in $\mathcal{S}_n(\sigma, \rho)$, such that the noise balls are asymptotically approximately disjoint. As the noise power $\nu$ decreases, so does the size of the noise balls, and one can imagine a very efficient packing of these small balls so that they occupy almost all the available space. The total number of balls one can pack is then roughly given by
\begin{equation}
\# \text{ of balls } \approx \frac{\text{Vol}(\cS_n(\sigma,\rho))}{\text{Vol(Noise ball)}}, 
\end{equation}
so the capacity is roughly
\begin{align}
\frac{1}{n} \log \# \text{ of balls } &= \frac{1}{n} \log \frac{\text{Vol}(\cS_n(\sigma,\rho)} {\text{Vol(Noise ball)}} \label{eq: voronoi1}\\
&\approx v(\sigma,\rho) - \frac{1}{2}\log 2\pi e \nu \label{eq: asymptote}.
\end{align}
We can make the statement in equation \eqref{eq: asymptote} rigorous, as follows:

 
 \begin{theorem}\label{thm: voronoi}
 Let $B_n(\sqrt{n\nu})$ be the $n$-dimensional Euclidean ball of radius $\sqrt{n\nu}$. The Minkowski sum of $\cS_n(\sigma, \rho)$ and $B_n(\sqrt{n\nu})$ is the set
\begin{equation}
\cS_n(\sigma, \rho) \oplus B_n(\sqrt{n\nu}) := \{x^n + z^n ~|~ x^n \in \cS_n(\sigma, \rho) , z^n \in B_n(\sqrt{n\nu}) \}.
\end{equation}
The capacity $C$ of an AWGN channel with a $(\sigma, \rho)$-power constraint and noise power $\nu$ satisfies
\begin{equation} \label{eq: minkowski}
C \leq \lim_{\epsilon \to 0_+} \limsup_{n \to \infty} \frac{1}{n} \log  \text{Vol}(\cS_n(\sigma, \rho) \oplus B_n(\sqrt{n(\nu+ \epsilon)}~)) - \frac{1}{2} \log 2\pi e \nu.
\end{equation}
 \end{theorem}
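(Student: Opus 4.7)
The plan is a sphere-packing style volume upper bound built on top of the $n$-letter capacity formula of Theorem~\ref{thm: capacity}. Since $I(X^n;Y^n) = h(Y^n) - \frac{n}{2}\log 2\pi e\nu$ under Gaussian noise, the task reduces to showing, for each fixed $\epsilon > 0$, that
\begin{equation*}
h(Y^n) \;\le\; \log \text{Vol}\bigl(\cS_n(\sigma, \rho) \oplus B_n(\sqrt{n(\nu+\epsilon)})\bigr) + o(n)
\end{equation*}
uniformly in the choice of $X^n$ supported on $\cS_n(\sigma, \rho)$. Dividing by $n$, taking the supremum over $\cF_n$, then $\limsup_n$, and finally $\epsilon \to 0_+$, then gives the claimed inequality.

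To obtain the bound on $h(Y^n)$, I would introduce the indicator $T_n = \mathbf{1}\{\|Z^n\|^2 \le n(\nu+\epsilon)\}$, marking the typical set for the Gaussian noise. Conditioning on this discrete auxiliary gives
\begin{equation*}
h(Y^n) \;\le\; H(T_n) + h(Y^n \mid T_n) \;\le\; 1 + P(T_n=1)\,h(Y^n\mid T_n=1) + P(T_n=0)\,h(Y^n\mid T_n=0).
\end{equation*}
On $\{T_n=1\}$, the sum $Y^n = X^n+Z^n$ lies in $\cS_n(\sigma,\rho) \oplus B_n(\sqrt{n(\nu+\epsilon)})$, so the compact-support bound for differential entropy yields $h(Y^n\mid T_n=1) \le \log \text{Vol}(\cS_n(\sigma,\rho) \oplus B_n(\sqrt{n(\nu+\epsilon)}))$. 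It only remains to verify that the atypical term contributes $o(n)$.

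For the tail estimate, the $(\sigma,\rho)$-constraint applied with $k=0, l=n$ gives the deterministic bound $\|X^n\|^2 \le \sigma + n\rho$, so $E[\|Y^n\|^2] = O(n)$ uniformly in $X^n \in \cF_n$. Combining $E[\|Y^n\|^2 \mid T_n=0] \le E[\|Y^n\|^2]/P(T_n=0)$ with the Gaussian maximum-entropy inequality $h(V) \le \frac{n}{2}\log(2\pi e\,\mathrm{tr}\,\mathrm{Cov}(V)/n)$, we obtain $h(Y^n\mid T_n=0) \le \frac{n}{2}\log(Cn/P(T_n=0))$ for a constant $C = C(\sigma,\rho,\nu)$. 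A standard Cram\'er-type large-deviation bound for $\|Z^n\|^2 \sim \nu\chi^2_n$ gives $P(T_n=0) \le e^{-cn}$ for some $c = c(\epsilon,\nu) > 0$, hence $P(T_n=0)\cdot h(Y^n\mid T_n=0) = O(n^2 e^{-cn}) = o(1)$, uniformly in $X^n \in \cF_n$.

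The main obstacle is precisely this tail handling: the conditional differential entropy $h(Y^n \mid T_n = 0)$ can be large since the conditioning on an atypical event distorts the noise distribution, but the hard pointwise norm bound $\|X^n\|^2 \le \sigma+n\rho$ built into the $(\sigma,\rho)$-constraint keeps the second moment of $Y^n$ at $O(n)$, and when multiplied by the exponentially small $P(T_n=0)$ the product becomes negligible. This is where the finite-battery geometry plays an essential role; in contrast, under the pure $\mathbf{(AP)}$ constraint only an average-power bound on $\|X^n\|^2$ is available and such a clean pointwise argument is not possible.
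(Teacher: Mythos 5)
Your proposal is correct and follows essentially the same route as the paper: start from the $n$-letter capacity formula, split $h(Y^n)$ according to a typical/atypical indicator, bound the typical-set contribution by $\log\text{Vol}(\cS_n(\sigma,\rho)\oplus B_n(\sqrt{n(\nu+\epsilon)}))$, and control the atypical term via the deterministic bound $\|X^n\|^2\le\sigma+n\rho$ together with the Gaussian maximum-entropy inequality. Your one small refinement is to condition on the noise-typicality event $\{\|Z^n\|^2\le n(\nu+\epsilon)\}$ rather than the output event $\{Y^n\in\cS_n\oplus B_n\}$ as the paper does; this is a bit cleaner since your tail probability $P(T_n=0)$ is manifestly independent of the input distribution, and your use of an exponential large-deviation tail bound is stronger than the mere $\delta_n\to0$ the paper needs, but neither changes the substance of the argument.
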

 
\begin{proof}
For $n \in \mathbb{N}$, let $\cF_n$ be the set of all probability distributions supported on $\cS_n(\sigma, \rho)$. From Theorem \ref{thm: capacity}, we know that the capacity $C$   is given by
\begin{equation}
C = \lim_{n \to \infty} \frac{1}{n}  \sup_{p_{X^n}(x^n) \in \cF_n} I(X^n; Y^n).
\end{equation}

Let $p_{X^n}(x^n) \in \cF_n$. Denote $\cS_n(\sigma, \rho) \oplus B_n(\sqrt{n(\nu+\epsilon)}~)$ by $C_n$, where $\epsilon > 0$, and let  $\delta_n := \mathbb P \left( Y^n \notin C_n \right).$ By the law of large numbers, we have $\delta_n \to 0$. Let $\chi$ be the indicator variable for the event $\{Y^n \in C_n\}$. Then
\begin{align}
h(Y^n) &= H(\delta_n) + \bar \delta_nh(Y^n | \chi =1) + \delta_n h(Y^n | \chi =0) \nonumber \\
&\leq H(\delta_n) + \bar \delta_n \log \text{Vol}(C_n) + \delta_n h(Y^n|\chi = 0) \label{eq: hyn_sn}.
\end{align}
where $\bar a = 1-a$. Since $\|X^n\|^2 \leq \sigma+n\rho$ with probability $1$, we have the following bound on power of $Y^n$:
\begin{align*}
E[ \|Y^n\|^2 ] = E[ \|X^n\|^2 ] + E[ \|Z^n\|^2 ] \leq \sigma+ n\rho + n\nu.
\end{align*}
This translates to the bound 
\begin{align*}
E[ \|Y^n\|^2 \mid \chi =0] &\leq \frac{n(\rho + \nu + \sigma/n)}{\delta_n},
\end{align*}
so
\begin{align*}
h(Y^n\mid \bar\chi) &\leq \frac{n}{2} \log\frac{ 2\pi e (\rho + \nu + \sigma/n)}{\delta_n}.
\end{align*}
Substituting into inequality \eqref{eq: hyn_sn} and dividing by $n$ gives
\begin{align*}
\frac{h(Y^n)}{n}  \leq \frac{H(\delta_n)}{n} +  \bar\delta_n\frac{ \log \text{Vol}(C_n)}{n} + \frac{\delta_n}{2} \log\frac{ 2\pi e (\rho + \nu + \sigma/n)}{\delta_n}.
\end{align*}
Since this holds for any choice of $p_{X^n} \in \cF_n$, we obtain
\begin{align*}
\sup_{p_{X^n} \in \cF_n} \frac{1}{n} h(Y^n) \leq &\frac{H(\delta_n)}{n}+  \bar\delta_n\frac{ \log \text{Vol}(C_n)}{n} + \frac{\delta_n}{2} \log\frac{ 2\pi e (\rho + \nu + \sigma/n)}{\delta_n}.
\end{align*}
Taking the limsup in $n$, we arrive at
\begin{align*}
\limsup_{n \to \infty}\sup_{p_{X^n} \in \cF_n}\frac{1}{n} h(Y^n) \leq \limsup_{n \to \infty} \frac{\log \text{Vol}(C_n)}{n}
= \limsup_{n \to \infty} \frac{\log \text{Vol}(\cS_n(\sigma, \rho) \oplus B_n(\sqrt{n(\nu+\epsilon)}~))}{n}.
\end{align*}
Taking the limit as $\epsilon \to 0_+$ and noting that capacity is $\lim_{n \to \infty}\sup_{p_{X^n} \in \cF_n}\frac{1}{n} h(Y^n)  - \frac{1}{2}\log 2\pi e \nu$, we arrive at the bound in expression \eqref{eq: minkowski}.
\end{proof}
To simplify notation, define $\ell:[0, \infty) \to \mathbb R$ as
\begin{equation}\label{def: l}
\ell(\nu) :=  \limsup_{n \to \infty} \frac{1}{n} \log \text{Vol}(\cS_n(\sigma, \rho)\oplus B_n(\sqrt{n\nu}~)).
\end{equation}
We can restate the upper bound in Theorem \ref{thm: voronoi} as
\begin{equation}\label{eq: ell+epsilon}
C \leq \lim_{\epsilon \to 0_+} \ell(\nu+\epsilon) - \frac{1}{2} \log 2\pi e \nu.
\end{equation}
If $\ell$ happens to be continuous at $\nu$, we can drop the $\epsilon$ from inequality \eqref{eq: ell+epsilon} to obtain a simplified expression
\begin{align}
C &\leq  \ell(\nu) - \frac{1}{2}\log 2\pi e \nu.
\end{align}
Note that $\ell(0) = v(\sigma,\rho)$. The continuity of $\ell$ at $\nu = 0$ can be used to rigorously establish the asymptotic capacity expression in equation \eqref{eq: asymptote}. These continuity properties will be established later in this paper.

The upper bound expression involves the volume of the Minkowski sum of $\cS_n(\sigma, \rho)$ with a ball. We state here a result from convex geometry called Steiner's formula \cite{klainrota}, which gives an expression for the volume of such a Minkowski sum:
\begin{theorem}[Steiner's formula]\label{thm: steinerformula} 
Let $K_n \subset \mathbb{R}^n$ be a compact convex set and let $B_n \subset \mathbb{R}^n$ be the unit ball. Denote by $\mu_j(K_n)$ the $j$-th intrinsic volume $K_n$, and by $\epsilon_j$ the volume of $B_j$. Then for $t \geq 0$,
\begin{equation}\label{eq: steiner}
Vol(K_n \oplus tB_n) = \sum_{j=0}^{n}\mu_{n-j}(K_n)\epsilon_j t^j.
\end{equation}
\end{theorem}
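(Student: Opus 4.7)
The plan is to prove Steiner's formula in two stages: first for convex polytopes, where the Minkowski sum admits an explicit decomposition indexed by faces, and then extend to arbitrary compact convex bodies by approximation and continuity.

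For the polytope case, let $P \subset \mathbb{R}^n$ be a convex polytope. For each face $F$ of $P$, write $N(F)$ for the outer normal cone at $F$, i.e.\ the cone of outward normals to supporting hyperplanes of $P$ along $F$. The key geometric fact is that every point of $P \oplus t B_n$ outside $P$ has a unique metric projection onto $P$, and the projection lands in the relative interior of exactly one face. Collecting points above a given face $F$ of dimension $j$, the resulting slab is isometric to $F \times (N(F) \cap t B_n)$. Since $N(F)$ lies in the $(n-j)$-dimensional subspace orthogonal to the affine hull of $F$, its intersection with $t B_n$ is a spherical cone of radius $t$ with volume $\gamma(F,P) \, \epsilon_{n-j} \, t^{n-j}$, where $\gamma(F,P)$ denotes the normalized external angle. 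Adding back $P$ itself (the contribution from the top-dimensional face $F = P$) and summing gives
\begin{equation*}
\text{Vol}(P \oplus t B_n) = \sum_{j=0}^{n} \epsilon_{n-j} \, t^{n-j} \sum_{\substack{F \text{ face of } P\\ \dim F = j}} \gamma(F,P) \, \text{Vol}_j(F).
\end{equation*}
Reindexing $j \mapsto n-j$ and identifying the coefficient of $\epsilon_{j} t^{j}$ as the intrinsic volume $\mu_{n-j}(P)$, in agreement with the standard combinatorial definition for polytopes, yields the formula in the polytope case.

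To handle a general compact convex $K_n$, approximate it in Hausdorff distance by a sequence of polytopes $P^{(k)} \to K_n$. For each fixed $t \geq 0$ the Minkowski sums satisfy $P^{(k)} \oplus t B_n \to K_n \oplus t B_n$ in the Hausdorff metric, and Lebesgue volume is continuous on compact convex sets under Hausdorff convergence; the intrinsic volumes $\mu_j$ share this continuity property (indeed, they may be characterized by it together with additivity and rigid motion invariance, via Hadwiger's theorem). Passing to the limit in the polynomial identity for $P^{(k)}$ produces the identity for $K_n$ with the correct coefficients.

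The main obstacle is the polytope decomposition step, which requires verifying (i) uniqueness of the metric projection onto the convex set $P$, (ii) the partition of $P \oplus t B_n \setminus P$ according to the relative interior of the receiving face, and (iii) the orthogonal product structure over each face. Once these geometric facts are in place, the extension to arbitrary convex bodies is essentially automatic given the continuity of $\text{Vol}$ and of the $\mu_j$; equivalently one may take the polynomial expansion itself as the defining property of $\mu_j(K_n)$, in which case the content of the theorem is precisely that the polytope formula persists in the limit.
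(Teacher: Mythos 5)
The paper does not prove Steiner's formula; it is stated as a known result and cited to Klain and Rota, so there is no proof of record in the paper to compare against. Your proposal is the standard textbook argument, and it is essentially the one found in that very reference: decompose $P \oplus tB_n$ for a polytope $P$ into the union of $P$ itself and, over each proper face $F$ of dimension $d$, the cylinder $\mathrm{relint}(F) \times \bigl(N(F) \cap tB_n\bigr)$ picked out by nearest-point projection onto $P$; compute the volume of each piece as $\gamma(F,P)\,\mathrm{Vol}_d(F)\,\epsilon_{n-d}\,t^{n-d}$; sum over faces to recover the polytopal definition of $\mu_d(P)$; and reindex $j = n-d$. The extension to general compact convex $K_n$ by Hausdorff approximation and continuity of volume (together with either Hadwiger's characterization of the $\mu_j$, or simply taking the polynomial expansion as the definition of $\mu_j$ for non-polytopal bodies) is also standard and correct. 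Your outline is sound; the items you flag as needing verification — uniqueness of the metric projection onto a convex set, disjointness (up to measure zero) of the face cylinders, and the orthogonal product structure over each face — are indeed the load-bearing lemmas, and they are all true and routine. There is no gap in the argument beyond the details you have already identified as requiring work, and no divergence from the cited source to discuss.
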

Steiner's formula states that the volume of $\cS_n(\sigma, \rho) \oplus B_n(\sqrt{n\nu})$ depends not only on the volumes of these sets, but also on the intrinsic volumes of $\cS_n(\sigma, \rho)$.  Intrinsic volumes are notoriously hard to compute even for simple enough sets such as polytopes \cite{klainrota}. So it is optimistic to expect a closed form expression for the intrinsic volumes of $\cS_n(\sigma, \rho)$. Furthermore, the sets $\{\cS_n(\sigma, \rho)\}$ evolve with the dimension $n$, and to compute the volume via Steiner's formula it is necessary to keep track of how the intrinsic volumes of these sets evolve with $n$. 
\smallskip

As mentioned earlier, the case of $\sigma = 0$ is the amplitude-constrained Gaussian noise channel, the capacity of which was numerically evaluated by Smith \cite{smith1971information}. In the following section, we concentrate on evaluating the upper bound for this special case. 

\section{The case of $\sigma = 0$}\label{section: cube}
To simplify notation, we denote $A := \sqrt{\rho}$ in this section. We consider the scalar Gaussian noise channel with noise power $\nu$ and an input amplitude constraint of $A$. Let the capacity of this channel be $C$. Recall that the function $\ell(\nu)$ is defined as
\begin{equation}\label{eq: def ell cube}
\ell(\nu) = \limsup_{n\to\infty} \frac{1}{n} \log \text{Vol}([-A,A]^n \oplus B_n(\sqrt{n\nu})),
\end{equation}
and the upper bound on channel capacity is given by
$$C \leq \lim_{\epsilon \to 0_+} \ell(\nu+\epsilon) - \frac{1}{2}\log 2\pi e \nu.$$
The main result of this section is as follows:
\begin{theorem}\label{thm:  leconte}
The function $\ell(\nu)$ is continuous on $[0, \infty).$ For $\nu > 0$, we can explicitly compute $\ell(\nu)$ via the expression
\begin{equation}
\ell(\nu) = H(\theta^*) + (1-\theta^*)\log 2A +  \frac{\theta^*}{2} \log \frac{2\pi e\nu}{\theta^*},
\end{equation}
where $H$ is the binary entropy function, and $\theta^* \in (0,1)$ is the unique solution to $$\frac{(1-\theta^*)^2}{{\theta^*}^3} = \frac{2A^2}{\pi\nu}.$$
\end{theorem}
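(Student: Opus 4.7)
The plan is to apply Steiner's formula (Theorem \ref{thm: steinerformula}) directly, exploiting the fact that the intrinsic volumes of the cube $[-A,A]^n$ are explicit: since every $j$-face is a $j$-cube of side length $2A$ and there are $\binom{n}{j}2^{n-j}$ such faces, one has $\mu_j([-A,A]^n) = \binom{n}{j}(2A)^j$. Substituting into Steiner's formula with $t = \sqrt{n\nu}$ gives
\begin{equation*}
\text{Vol}([-A,A]^n \oplus B_n(\sqrt{n\nu})) = \sum_{j=0}^n \binom{n}{j}(2A)^{n-j}\,\epsilon_j\,(n\nu)^{j/2},
\end{equation*}
where $\epsilon_j = \pi^{j/2}/\Gamma(j/2+1)$. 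The task reduces to finding the exponential growth rate of this sum.

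Next, I would apply Laplace's method. Writing $j = \theta n$ for $\theta \in [0,1]$, Stirling's approximation yields $\binom{n}{\theta n} \doteq e^{nH(\theta)}$ (with $H$ in nats) and $\epsilon_{\theta n} \doteq (2\pi e/(\theta n))^{\theta n/2}$. The $n$-factor in $(n\nu)^{\theta n/2}$ cancels precisely with the $(\theta n)^{-\theta n/2}$ in $\epsilon_{\theta n}$, so the $j$-th term grows like $\exp(n\phi(\theta))$, where
\begin{equation*}
\phi(\theta) := H(\theta) + (1-\theta)\log 2A + \frac{\theta}{2}\log\frac{2\pi e\nu}{\theta}.
\end{equation*}
Because there are only $n+1$ terms, sandwiching the sum between its max and $(n+1)$ times its max shows
$\frac{1}{n}\log\text{Vol}([-A,A]^n\oplus B_n(\sqrt{n\nu})) \to \sup_{\theta \in [0,1]}\phi(\theta)$, so the $\limsup$ in the definition \eqref{def: l} is in fact a limit. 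To make this rigorous I would partition $[0,1]$ into a central region $[\epsilon,1-\epsilon]$, on which Stirling's approximation has uniformly controlled error, and boundary strips; on the strips the terms are easily shown to be negligible compared to $\phi(\theta^*)$ provided $\theta^* \in (0,1)$, which I verify below. Then I would maximize $\phi$: each summand is concave in $\theta$ (the nontrivial one being $-(\theta/2)\log\theta$, whose second derivative is $-1/(2\theta) < 0$), so $\phi$ is strictly concave and has a unique interior maximizer $\theta^* \in (0,1)$. Setting $\phi'(\theta^*) = 0$ gives
\begin{equation*}
\log\frac{1-\theta^*}{\theta^*} - \log 2A + \frac{1}{2}\log\frac{2\pi\nu}{\theta^*} = 0,
\end{equation*}
which after exponentiating and squaring rearranges precisely to $(1-\theta^*)^2/(\theta^*)^3 = 2A^2/(\pi\nu)$. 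The left-hand side is strictly decreasing from $+\infty$ to $0$ on $(0,1)$, confirming uniqueness.

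The continuity of $\ell$ on $(0,\infty)$ follows from the maximum theorem applied to the jointly continuous $(\theta,\nu)\mapsto \phi(\theta;\nu)$ on the compact set $[0,1]$. The main delicate step is continuity at $\nu = 0$: one needs $\ell(\nu) \to \ell(0) = \log 2A$. As $\nu\to 0$, the defining equation forces $\theta^*(\nu)\to 0$, and substituting into the formula gives $H(\theta^*) \to 0$, $(1-\theta^*)\log 2A \to \log 2A$, and $\frac{\theta^*}{2}\log\frac{2\pi e\nu}{\theta^*}\to 0$ (using $\theta^*\log(1/\theta^*)\to 0$ together with the rate $\theta^* \asymp (\pi\nu/(2A^2))^{1/3}$ read off from the optimality condition). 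Finally, $\ell(0) = \log 2A$ is immediate from $\text{Vol}([-A,A]^n)^{1/n} = 2A$. The main obstacle I anticipate is rigorously controlling the Stirling/Gamma approximations near $\theta = 0$ and $\theta = 1$ (where the relative error explodes) uniformly in $n$; I would handle this by bounding those boundary terms crudely and showing they are exponentially dominated by the value at a fixed interior $\theta$ close to $\theta^*$.
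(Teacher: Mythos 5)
Your proposal follows essentially the same route as the paper: Steiner's formula for the cube, identification of the exponential growth rate of the dominant term (your $\phi$ is exactly the paper's $f^\nu$), strict concavity yielding a unique interior maximizer, the same cubic optimality condition, and continuity at $\nu=0$ via $\theta^*(\nu)\to 0$. The one place where the paper does something you don't is in handling the rigor of the Laplace step. You propose to control the Stirling errors by splitting $[0,1]$ into a central region and boundary strips and bounding the strips crudely; the paper instead observes that each $f^\nu_n(\cdot)$ is itself concave on $[0,1]$ (by log-convexity of the Gamma function), so the pointwise convergence $f^\nu_n\to f^\nu$ automatically upgrades to uniform convergence on $[0,1]$ via a standard lemma on pointwise limits of convex functions (the paper's Lemma \ref{lemma: convex}). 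This uniform convergence, plus equicontinuity, gives $\max_j f^\nu_n(j/n) \to \max_\theta f^\nu(\theta)$ with no boundary case analysis at all. Your approach would also work, but the concavity route completely sidesteps the uniformity-near-the-boundary issue you rightly flagged as the main obstacle; it is worth knowing because the same device is reused in the $\sigma>0$ case where no explicit formula for the intrinsic volumes is available.
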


\begin{proof}[Proof of Theorem \ref{thm: leconte}]
The proof of Theorem \ref{thm: leconte} relies on a number of lemmas. Here we shall merely state the lemmas and defer their proofs to Appendix \ref{proofs: section: cube}.

We first prove a lemma, which makes it possible to replace $\limsup$ by $\lim$ in the expression of $\ell(\nu)$ given in equation \eqref{eq: def ell cube}.
\begin{lemma}[Proof in Appendix \ref{proof: lemma: limit cube}]\label{lemma: limit cube}
For all $\nu \geq 0$, the limit
$$\lim_{n\to\infty} \frac{1}{n} \log \text{Vol}([-A,A]^n \oplus B_n(\sqrt{n\nu}))$$  exists and is finite and equals $\ell(\nu)$, as defined in equation \eqref{eq: def ell cube}.
\end{lemma}

The special case of Steiner's formula \eqref{eq: steiner} when $K_n$ is the cube $[-A, A]^n$ and $t = \sqrt{n\nu}$ is given by
\begin{equation}\label{eq: steiner cube}
\text{Vol}([-A,A]^n \oplus B_n(\sqrt{n\nu})) = \sum_{j=0}^n {n \choose j}(2A)^{n-j}\epsilon_j(\sqrt{n\nu})^j,
\end{equation}
where $\epsilon_j$ is the volume of the $j$-dimensional unit ball. Replacing $\epsilon_j$ in equation (\ref{eq: steiner cube}), 
\begin{align}
\text{Vol}([-A,A]^n \oplus B_n(\sqrt{n\nu})) &= \sum_{j=0}^n {n \choose j}(2A)^{n-j}\frac{\pi^{j/2}}{\Gamma(j/2 + 1)}(\sqrt{n\nu})^j\\
&= \sum_{j=0}^n \frac{\Gamma(n+1)}{\Gamma(n-j+1)\Gamma(j+1)} \frac{\pi^{j/2}}{\Gamma(j/2 + 1)}(2A)^{n-j}(\sqrt{n\nu})^j.
\end{align}
Letting $\theta = \frac{j}{n}$, we rewrite the term inside the summation as
\begin{equation}
\frac{\Gamma(n+1)}{\Gamma(n(1-\theta)+1)\Gamma(n\theta+1)} \frac{\pi^{n\theta/2}}{\Gamma(n\theta/2 + 1)}(2A)^{n(1-\theta)}(\sqrt{n\nu})^{n\theta}.
\end{equation}
For $\nu > 0$, define $f^\nu_n(\theta)$ as follows:
\begin{align}
f^\nu_n(\theta) &= \frac{1}{n} \log \left( \frac{\Gamma(n+1)}{\Gamma(n(1-\theta)+1)\Gamma(n\theta+1)} \frac{\pi^{n\theta/2}}{\Gamma(n\theta/2 + 1)}(2A)^{n(1-\theta)}(\sqrt{n\nu})^{n\theta} \right)\\
&= \frac{1}{n} \log \left( \frac{\Gamma(n+1)n^{n\theta/2}}{\Gamma(n(1-\theta)+1)\Gamma(n\theta+1)\Gamma(n\theta/2 + 1)}\right) + (1-\theta)\log 2A + \theta \log \sqrt \nu + \frac{\theta}{2}\log \pi \label{eq: fntheta}.
\end{align}
Note that $f^\nu_n(\theta)$ is defined for all $n \in \mathbb{N}$, for all $\theta \in [0,1]$, and for all $\nu > 0$. Using this notation, we can rewrite the volume as
\begin{align}
\text{Vol}([-A,A]^n \oplus B_n(\sqrt{n\nu})) &= \sum_{j=0}^n e^{nf^\nu_n(j/n)}. 
\end{align}
We argue that since the volume is a sum of $n+1$ terms, the exponential growth rate of the volume is determined by the growth rate of the largest term amongst these $n+1$ terms. To be precise, we define
\begin{equation}
\hat \theta_n = \arg\max_{j/n} f^\nu_n(j/n),
\end{equation}
and  prove the following lemma:
\begin{lemma}[Proof in Appendix \ref{proof: lemma: thetan}]\label{lemma: thetan}
The limit $\lim_{n \to \infty} f^\nu_n(\hat \theta_n)$ exists and equals $\ell(\nu)$.
\end{lemma}
The next few lemmas aim to identify the limit of $f^\nu_n(\hat\theta_n)$. We first show that the functions $f^\nu_n(\cdot)$ converge uniformly to a limit function $f^\nu(\cdot)$.

\begin{lemma}[Proof in Appendix \ref{proof: lemma: concave fn}]\label{lemma: concave fn}
 The sequence of functions $\{f^\nu_n\}_{n=1}^\infty$ converges uniformly for all $\theta \in [0,1]$ to a function $f^\nu$ given by
\begin{equation}\label{eq: limit function}
f^\nu(\theta) = H(\theta) + (1-\theta)\log 2A + \frac{\theta}{2} \log \frac{2\pi e \nu}{\theta},
\end{equation}
where $H(\theta) = -\theta\log\theta - (1-\theta)\log(1-\theta)$ is the binary entropy function.
\end{lemma}

With this uniform convergence in hand, we show that the limit of $f^\nu_n(\hat \theta_n)$ can be expressed as follows: 
\begin{lemma}[Proof in Appendix \ref{proof: lemma: limfn}]\label{lemma: limfn}
We claim that  
\begin{equation}\label{eq: limfn1}
\lim_{n\to \infty} f^\nu_n(\hat\theta_n) = \max_\theta f^\nu(\theta),
\end{equation}
and therefore
\begin{equation}\label{eq: limfn2}
\ell(\nu) = \max_\theta f^\nu(\theta).
\end{equation}
\end{lemma}

We are now in a position to prove the continuity of $\ell(\nu)$. Fix a $\nu_0 > 0$, and let $\epsilon > 0$ be given. Choose a $\delta > 0$ such that for all $\nu \in (\nu_0 - \delta, \nu_0 + \delta)$,
$$||f^\nu - f^{\nu_0}||_\infty < \epsilon.$$
We can verify from equation \eqref{eq: limit function} that picking such a $\delta$ is indeed possible.
This implies
\begin{align}
|\sup_\theta f^\nu(\theta) - \sup_\theta f^{\nu_0}(\theta)| < \epsilon.
\end{align}
Using Lemma \ref{lemma: limfn}, this implies
\begin{align}
|\ell(\nu) - \ell(\nu_0)| < \epsilon,
\end{align}
which establishes continuity of $\ell$ at all points $\nu_0 > 0$.\\
 
 To show continuity at $0$, we first explicitly evaluate $\ell(\nu)$. Let $\theta^*(\nu) = \arg \max_{\theta} f^\nu(\theta).$ Using Lemma \ref{lemma: limfn}, we have $\ell(\nu) = f^\nu(\theta^*(\nu))$. Recall the expression for $f^\nu(\theta)$:
\begin{align}
f^\nu(\theta) &= H(\theta) + (1-\theta)\log 2A +\frac{\theta}{2} \log \frac{2\pi e\nu}{\theta}.
\end{align}

Differentiating $f^\nu(\theta)$ with respect to $\theta$,
\begin{align}
\frac{d}{d\theta}f^\nu(\theta) = \log \frac{1-\theta}{\theta} + \log \sqrt{\nu} - \log 2A + \frac{1}{2} \log 2\pi e - \frac{\log e}{2} - \frac{1}{2}\log \theta.
\end{align}
Setting the derivative equal to $0$ gives
\begin{align}
\log \frac{1-\theta}{\theta} + \log \sqrt{\nu} - \log 2A + \frac{1}{2} \log 2\pi e - \frac{\log e}{2} - \frac{1}{2}\log \theta = 0.
\end{align}
Simplifying this and removing the logarithms, we arrive at
\begin{align}
\frac{(1-\theta)^2}{\theta^3}  = \frac{2A^2}{\pi\nu} \label{eq: cubic}.
\end{align}
The function $\frac{(1-\theta)^2}{\theta^3}$ tends to $+\infty$ as $\theta \to 0_+$, and equals $0$ when $\theta = 1$. Thus, equation \eqref{eq: cubic} has at least one solution in the interval $(0,1)$. We can easily check that $\frac{(1-\theta)^2}{\theta^3}$ is strictly decreasing in $(0,1)$, and thus this solution must be unique. The optimal $\theta^*(\nu)$ satisfies the cubic equation \eqref{eq: cubic}, and we can see that
\begin{equation}\label{eq: lim0}
\lim_{\nu \to 0} \theta^*(\nu) = 0.
\end{equation}
Using equations \eqref{eq: cubic}  and \eqref{eq: lim0}, we have 
\begin{equation}\label{eq: lim cube}
\lim_{\nu \to 0} \frac{\nu}{\theta^*(\nu)^3} = \frac{2A^2}{\pi}.
\end{equation}
Thus,
\begin{align}
\lim_{\nu \to 0} \ell(\nu) &= \lim_{\nu \to 0} H(\theta^*(\nu)) + (1-\theta^*(\nu)) \log 2A+ \frac{\theta^*(\nu)}{2} \log \frac{2\pi e \nu}{\theta^*(\nu)}\\
&\stackrel{(a)}= \log 2A + \lim_{\nu \to 0}\frac{\theta^*(\nu)}{2} \log \frac{2\pi e \nu}{\theta^*(\nu)}\\
&\stackrel{(b)}= \log 2A\\
&= \ell(0),
\end{align}
where in $(a)$ we used equation \eqref{eq: lim0}, and in $(b)$ we used equation \eqref{eq: lim cube}. This shows that $\ell$ is continuous over $[0, \infty)$, and concludes the proof of Theorem \ref{thm: leconte}.
\end{proof}

The above bound can also be used to prove an asymptotic capacity result. We prove the following theorem:

\begin{theorem}\label{thm: ass cap cube}
The capacity $C$  of an AWGN channel with an amplitude constraint of $A$, and with noise power $\nu$, satisfies the following:
\begin{itemize}
\item[1.] 
When the noise power $\nu \to 0$, capacity $C$ is given by
$$C =  \log 2A - \frac{1}{2}\log 2\pi e\nu + O(\nu^{\frac{1}{3}}).$$
\item[2.]
When the noise power $\nu \to \infty$, capacity $C$ is given by
$$C = \frac{\alpha^2}{2} - \frac{\alpha^4}{4} + \frac{\alpha^6}{6} - \frac{5\alpha^8}{24} + O(\alpha^{10}),$$
where $\alpha = A/\sqrt\nu$.
\end{itemize}
\end{theorem}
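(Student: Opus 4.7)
The proof naturally divides into two regimes.

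\emph{Part 1 (low noise, $\nu \to 0$).} The plan is to sandwich $C$ between the Minkowski-sum upper bound of Theorem \ref{thm: voronoi} and the entropy-power lower bound of Theorem \ref{thm: bounds}. Since $\cS_n(0,\rho) = [-A,A]^n$, we have $v(0,\rho) = \log 2A$, so Theorem \ref{thm: bounds} yields
\begin{equation*}
C \;\geq\; \tfrac{1}{2}\log\!\Bigl(1 + \tfrac{4A^2}{2\pi e\nu}\Bigr) \;=\; \log 2A - \tfrac{1}{2}\log 2\pi e\nu + O(\nu).
\end{equation*}
For the upper bound, the continuity of $\ell$ at $\nu$ (proved in Theorem \ref{thm: leconte}) lets us drop the $\epsilon$ from Theorem \ref{thm: voronoi} to obtain $C \leq \ell(\nu) - \tfrac{1}{2}\log 2\pi e\nu$. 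It therefore suffices to show $\ell(\nu) - \log 2A = O(\nu^{1/3})$. Starting from the closed form $\ell(\nu) = H(\theta^*) + (1-\theta^*)\log 2A + \tfrac{\theta^*}{2}\log\tfrac{2\pi e\nu}{\theta^*}$, I would use the optimality equation $(1-\theta^*)^2/\theta^{*3} = 2A^2/(\pi\nu)$ to eliminate the $\log\nu$ term: taking logarithms of this equation expresses $-\tfrac{3}{2}\log\theta^*$ in terms of $\log(1-\theta^*)$ and $\log(2A^2/(\pi\nu))$, and after substituting back into $\ell(\nu) - \log 2A$ the $\log\nu$ contributions cancel, leaving the remarkably simple expression $-\log(1-\theta^*) + \theta^*/2$. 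From the cubic equation one reads off $\theta^* \sim (\pi\nu/(2A^2))^{1/3}$ as $\nu\to 0$, so this is $O(\nu^{1/3})$. Combined with the lower bound (whose error $O(\nu)$ is negligible compared with $O(\nu^{1/3})$), the two-sided conclusion follows.

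\emph{Part 2 (high noise, $\nu \to \infty$).} Because $\sigma = 0$, the constraint reduces to the pointwise bound $|x_i| \leq A$ and the channel is memoryless; by Theorem \ref{thm: capacity} this collapses to the single-letter formula $C = \sup_{|X| \leq A} I(X;Y)$. I would then invoke Smith's result \cite{smith1971information}, which asserts that the capacity-achieving input distribution is uniform on $\{-A, A\}$ whenever $\alpha := A/\sqrt{\nu}$ lies below a certain positive threshold; since $\alpha \to 0$, this holds for all sufficiently large $\nu$, and hence exactly $C = I(X_\pm; Y)$ with $X_\pm$ uniform on $\{-A,A\}$. A direct computation gives the closed form
\begin{equation*}
C \;=\; \ln 2 \;-\; E\bigl[\ln\bigl(1 + e^{-2\alpha Z - 2\alpha^2}\bigr)\bigr], \qquad Z \sim \cN(0,1).
\end{equation*}
The remaining task is Taylor expansion: using $\ln(1+e^t) = \ln 2 + \tfrac{t}{2} + \tfrac{t^2}{8} - \tfrac{t^4}{192} + \tfrac{t^6}{2880} - \tfrac{17 t^8}{645120} + O(t^{10})$, substitute $t = -2\alpha Z - 2\alpha^2$, expand each power $(\alpha Z + \alpha^2)^k$ with the binomial theorem through total degree $8$, and take Gaussian expectations using $E[Z^{2k}] = (2k-1)!!$ and $E[Z^{2k+1}]=0$. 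Collecting the coefficients of $\alpha^2,\alpha^4,\alpha^6,\alpha^8$ yields $\tfrac{1}{2},-\tfrac{1}{4},\tfrac{1}{6},-\tfrac{5}{24}$ respectively, matching the claim.

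The conceptually subtle step is Part 1: spotting that the cubic optimality condition forces the algebraic collapse $\ell(\nu) - \log 2A = -\log(1-\theta^*) + \theta^*/2$, which is what elevates the bound from the naive $O(\nu^{1/3}\log(1/\nu))$ to the sharp $O(\nu^{1/3})$. In Part 2 the main obstacle is purely computational, namely the careful bookkeeping of the six distinct sources of $\alpha^8$ contributions (one each from the $t^2$, $t^4$, $t^6$, and $t^8$ terms of the expansion of $\ln(1+e^t)$, together with the cross-terms inside $t^2$ and $t^4$), which must conspire to produce the exact fraction $5/24$ rather than the $1/8$ predicted by the loose Shannon bound $\tfrac{1}{2}\log(1+\alpha^2)$.
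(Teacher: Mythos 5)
Your proof is correct and follows the same overall strategy as the paper: for Part~1, sandwich $C$ between the EPI lower bound of Theorem~\ref{thm: bounds} and the Minkowski-sum upper bound and control $\ell(\nu) - \log 2A$ via the cubic optimality equation; for Part~2, invoke Smith's two-point result and Taylor expand. Your execution is cleaner in two places. In Part~1 you observe that substituting the optimality condition $(1-\theta^*)^2/{\theta^*}^3 = 2A^2/(\pi\nu)$ into the closed form for $\ell(\nu)$ collapses $\ell(\nu) - \log 2A$ to $-\log(1-\theta^*) + \theta^*/2$ \emph{exactly}, from which the $O(\nu^{1/3})$ rate is immediate since $\theta^* \sim (\pi\nu/(2A^2))^{1/3}$; the paper instead expands the three constituent pieces of $\ell(\nu) - \log 2A$ separately in powers of $\nu^{1/3}$ (with $c = (\pi/(2A^2))^{1/3}$) and tracks the cancellation of the $\nu^{1/3}\log c$ contributions to arrive at the same leading coefficient $\tfrac{3c}{2}$, so your identity makes that cancellation structural rather than apparently coincidental. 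In Part~2 you compute $I(X;Y)$ directly via the logistic formula $C = \ln 2 - E\bigl[\ln\bigl(1 + e^{-2\alpha Z - 2\alpha^2}\bigr)\bigr]$ and expand $\ln(1+e^t)$, whereas the paper writes $C = h(p_Y) - \tfrac{1}{2}\log 2\pi e\nu$ and cites \cite{michalowicz2008calculation} for an integral representation of the Gaussian-mixture entropy before expanding; both routes produce identical coefficients through $\alpha^8$.
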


\begin{proof}[Proof of Theorem \ref{thm: ass cap cube}]
Note that all the logarithms in this proof are assumed to be to base $e$.
\begin{itemize}
\item[1.]
Using the lower bound in Theorem \ref{thm: bounds},  
\begin{align}
C &\geq \frac{1}{2} \log \left(1 + \frac{(2A)^2}{2\pi e \nu}\right)\\
&= \log 2A - \frac{1}{2}\log 2\pi e\nu + \log\left(1 + \frac{2\pi e \nu}{(2A)^2}\right)\\
&= \log 2A - \frac{1}{2}\log 2\pi e\nu  + O(\nu). \label{eq: ass lb}
\end{align}
For the upper bound, we have
\begin{align}
\lim_{\nu \to 0} \ell(\nu) &= \log 2A + \lim_{\nu \to 0}\left[H(\theta^*)- \theta^*\log 2A +\frac{\theta^*}{2} \log \frac{2\pi e \nu}{\theta^*}\right]\\
&= \log 2A + \lim_{\nu \to 0} -(1-\theta^*)\log(1-\theta^*) + \frac{\theta^*}{2} \log \frac{\nu}{{\theta^*}^3} + \frac{\theta^*}{2} \log \frac{\pi e}{2A^2}.
\end{align}

Let $c = \left(\frac{\pi}{2A^2}\right)^{1/3}$. Using equation \eqref{eq: lim cube}, we can check that as $\nu \to 0$,
\begin{align*}
  -(1-\theta^*)\log(1-\theta^*) &= c\nu^{1/3} + o(\nu^{1/3}),\\
\frac{\theta^*}{2} \log \frac{\nu}{{\theta^*}^3} &= \frac{-3c\log c}{2}\nu^{1/3} + o(\nu^{1/3}),\\
\frac{\theta^*}{2} \log \frac{\pi e}{2A^2} &= \left(\frac{c}{2} + \frac{3c\log c}{2}\right)\nu^{1/3} + o(\nu^{1/3}).
\end{align*}
This gives the following asymptotic upper bound as $\nu \to 0$:
\begin{equation} \label{eq: ass ub}
C \leq \log 2A - \frac{1}{2}\log 2\pi e \nu +  \frac{3c}{2}\nu^{1/3} + o(\nu^{1/3}).
\end{equation}
From equations \eqref{eq: ass lb}  and \eqref{eq: ass ub}, our claim follows.
\item[2.] As noted by Smith \cite{smith1971information}, for large $\nu$ the optimal input distribution is discrete, and is supported equally on the two points $-A$ and $+A$. The output $Y$ is then distributed as
\begin{align}
Y \sim p_Y(y) = \frac{1}{2} \exp\left(\frac{(y - A)^2}{2\nu}\right) + \frac{1}{2} \exp\left(\frac{(y + A)^2}{2\nu}\right)
\end{align}
Capacity is then given by
\begin{align}
C = h(p_Y) - \frac{1}{2}\log 2\pi e \nu.
\end{align}
The entropy term $h(p_Y)$ can be manipulated as in \cite{michalowicz2008calculation} to arrive at
\begin{align}
h(p_Y) = \frac{1}{2} \log 2\pi e \nu + \alpha^2 - \frac{2}{\sqrt{(2\pi)} \alpha}e^{-\alpha^2/2}\int_0^\infty e^{-y^2/2\alpha^2}\cosh(y)\ln \cosh(y) dy,
\end{align}
where $\alpha = A/\sqrt \nu$. 
Let $$f(\alpha) = \frac{2}{\sqrt {2\pi}} \int_0^\infty e^{\frac{-y^2}{2\alpha^2}}\cosh(y)\ln\big(\cosh(y)\big) dy.$$ We consider the Taylor series expansion of $\cosh(y)\ln \big(\cosh(y)\big)$ at $y=0$, and arrive at
\begin{align}
\cosh(y)\log\big(\cosh(y)\big)=\frac{y^2}{2}+\frac{y^4}{6}+ \frac{y^6}{720} + \frac{y^8}{630}+ O\left(y^9\right).
\end{align}

Using the following definite integral expression,   
\begin{equation}
\int_0^{\infty} e^{\frac{-y^2}{2\alpha^2}}y^{2k}dy =2^{k-\frac{1}{2}} \left({\alpha ^2}\right)^{k+\frac{1}{2}} \Gamma
   \left(k+\frac{1}{2}\right),
 \end{equation}
 and substituting, we obtain 
 \begin{equation}
 f(\alpha)=\frac{\alpha ^3}{2}+\frac{\alpha ^5}{2} + \frac{\alpha ^7}{48} + \frac{\alpha^9}{6} + O(\alpha^{11}).
 \end{equation}
 Thus,  
\begin{align}
 h(p_Y) &= \frac{1}{2} \log 2\pi e \nu + \alpha^2 \nonumber\\
 &~~~ - \left(\frac{\alpha ^2}{2}+\frac{\alpha ^4}{2} + \frac{\alpha^6}{48} + \frac{\alpha^8}{6} + O(\alpha^{10})\right)\left(1-\frac{\alpha^2}{2} + \frac{\alpha^4}{8} - \frac{\alpha^6}{48} + O(\alpha^8)\right)\\
 &= \frac{1}{2} \log 2\pi e \nu + \frac{\alpha^2}{2} - \frac{\alpha^4}{4} + \frac{\alpha^6}{6} - \frac{5\alpha^8}{24} + O(\alpha^{10}).
 \end{align}
 Capacity is therefore given by
 $$C = \frac{\alpha^2}{2} - \frac{\alpha^4}{4} + \frac{\alpha^6}{6} - \frac{5\alpha^8}{24} + O(\alpha^{10}).$$
This establishes the claim. Shannon \cite{shannon} had proved that capacity at high noise for the peak power constrained (by $A^2$) AWGN channels is essentially the same as that of an average power constrained (by $A^2$) AWGN; i.e.,  
\begin{align*}
C \approx \frac{1}{2} \log (1+ A^2/\nu) &= \frac{1}{2} \log (1+ \alpha^2)\\
&= \frac{\alpha^2}{2} - \frac{\alpha^4}{4} + \frac{\alpha^6}{6} - \frac{\alpha^8}{8} + O(\alpha^{10}).
\end{align*}
It is interesting to note that the first three terms of this approximation agrees with the actual capacity.

\end{itemize}
\end{proof}

We can use Theorem \ref{thm: leconte} to numerically evaluate $\theta^*(\nu)$ and plot the corresponding upper bound from Theorem \ref{thm: voronoi}. Figure \ref{fig: cubebounds} shows the resulting plot. Note that the upper bound from Theorem \ref{thm: bounds} is not asymptotically tight in the low-noise regime, but the new upper bound is asymptotically tight.
\begin{figure}[h]
\begin{center}
\includegraphics[scale = 0.55]{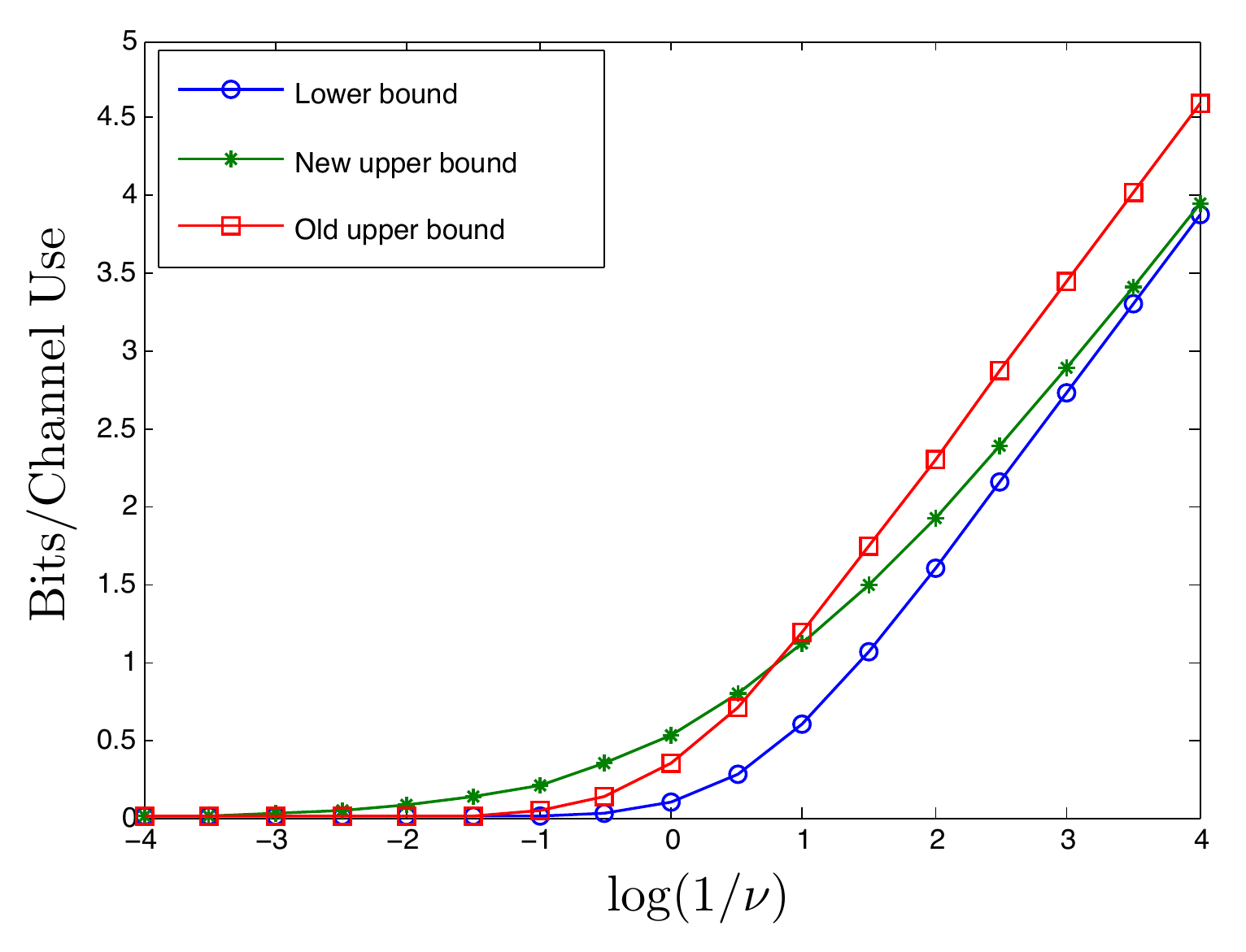}
 \caption{For the AWGN with an amplitude constraint of 1, the new upper bound and the lower bound converge asymptotically as $\nu \to 0$} \label{fig: cubebounds}
\end{center}
\end{figure}

In Theorem \ref{thm: leconte}, we essentially carried out a volume computation which answered the question: How does the volume of the Minkowski sum of a cube and a ball grow? The upper bound on capacity is then a consequence of the following facts:
\begin{enumerate}
\item
The channel capacity depends on the maximum output entropy $h(Y^n)$.
\item
The random variable $Y^n$ is (almost entirely) supported on the sum of a cube and a ball.
\item
The entropy of $Y^n$ is bounded from above by the logarithm of the volume of its (almost) support.
\end{enumerate}
Intuitively, points $2$ and $3$ should not depend on $Z$ being Gaussian, but only on $Z^n$  being almost entirely supported on $B_n(\sqrt{n\nu})$. We make this intuition precise in the following theorem: 

\begin{theorem}[Proof in Appendix \ref{proof: thm: AN bound}] \label{thm: AN bound}
Let $A, \nu \geq 0$. Let $X$ and $Z$ be random variables satisfying $|X| \leq A$ a.s.\ and $\text{Var}(Z) \leq \nu$. Then 
\begin{equation}
h(X + Z) \leq \ell(\nu),
\end{equation}
where $\ell(\nu)$ is as defined in equation \eqref{eq: def ell cube}.
\end{theorem}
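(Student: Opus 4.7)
The plan is to tensorize. Take $n$ i.i.d. copies $(X_i, Z_i)$ of $(X, Z)$, set $Y_i = X_i + Z_i$, and observe that since the $Y_i$ are i.i.d., $h(Y^n) = n h(X+Z)$. This reduces the scalar claim to an $n$-dimensional one where the proof of Theorem \ref{thm: voronoi} can be essentially replayed, with the Gaussian assumption on the noise replaced by the weak law of large numbers applied to $\{Z_i^2\}$.

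First, I would reduce to the zero-mean case. Since differential entropy is translation invariant, $h(X+Z) = h(X + (Z - E[Z]))$, and $Z - E[Z]$ has mean zero and second moment $\text{Var}(Z) \le \nu$; so WLOG $E[Z]=0$ and $E[Z^2] \le \nu$. Fix $\epsilon > 0$ and let $\chi_n$ be the indicator of the event $\{\|Z^n\|^2 \le n(\nu+\epsilon)\}$, with $\delta_n := \mathbb{P}(\chi_n = 0)$. By the weak law of large numbers applied to the i.i.d.\ sequence $Z_i^2$ (with finite mean $\le \nu$), $\delta_n \to 0$ as $n \to \infty$. On $\{\chi_n = 1\}$, the vector $Y^n = X^n + Z^n$ lies in $[-A,A]^n \oplus B_n(\sqrt{n(\nu+\epsilon)})$.

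Next I would run the standard indicator splitting used in the proof of Theorem \ref{thm: voronoi}:
\begin{equation*}
h(Y^n) \le H(\delta_n) + (1-\delta_n)\, h(Y^n \mid \chi_n = 1) + \delta_n\, h(Y^n \mid \chi_n = 0).
\end{equation*}
The first conditional entropy is bounded by $\log \text{Vol}([-A,A]^n \oplus B_n(\sqrt{n(\nu+\epsilon)}))$ since the conditional law is supported on that Minkowski sum. For the second, I use the Gaussian maximum entropy principle: since $X^n$ and $Z^n$ are independent with $E[Z^n] = 0$, the cross term vanishes and $E[\|Y^n\|^2] \le n(A^2 + \nu)$, so $E[\|Y^n\|^2 \mid \chi_n = 0] \le n(A^2+\nu)/\delta_n$, giving $h(Y^n \mid \chi_n = 0) \le \tfrac{n}{2} \log\bigl(2\pi e (A^2+\nu)/\delta_n\bigr)$.

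Dividing by $n$ and using $h(Y^n)/n = h(X+Z)$, the $H(\delta_n)/n$ term and the $\tfrac{\delta_n}{2}\log(1/\delta_n)$ term both vanish as $n\to\infty$, yielding
\begin{equation*}
h(X+Z) \le \limsup_{n\to\infty} \frac{1}{n}\log \text{Vol}\bigl([-A,A]^n \oplus B_n(\sqrt{n(\nu+\epsilon)})\bigr) = \ell(\nu+\epsilon).
\end{equation*}
Finally, by the continuity of $\ell$ established in Theorem \ref{thm: leconte}, sending $\epsilon \to 0_+$ gives $h(X+Z) \le \ell(\nu)$. The only mildly delicate step is the reduction to $E[Z]=0$ together with justifying $\delta_n \to 0$ under the sole assumption $\text{Var}(Z) \le \nu$; everything else is an exact copy of the argument used for Theorem \ref{thm: voronoi}, with the Minkowski-sum volume bound replacing the channel-specific Gaussian entropy computation.
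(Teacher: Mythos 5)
Your proposal is correct and takes essentially the same approach as the paper's proof in Appendix~\ref{proof: thm: AN bound}: tensorize via i.i.d.\ copies, split $h(Y^n)$ on whether $Y^n$ lands in the Minkowski sum $[-A,A]^n \oplus B_n(\sqrt{n(\nu+\epsilon)})$, bound the in-set conditional entropy by the log-volume and the out-of-set piece by the Gaussian maximum-entropy bound, and then let $\epsilon \to 0$ using the continuity of $\ell$ from Theorem~\ref{thm: leconte}. Your explicit reduction to $E[Z]=0$ by translation invariance is a small but genuine improvement, since it is what justifies both $E[\|Y^n\|^2] \le n(A^2+\nu)$ and the LLN applied to $Z_i^2$ under the stated hypothesis $\mathrm{Var}(Z) \le \nu$, a step the paper leaves implicit.
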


 By Theorem \ref{thm: AN bound}, we can assert that the capacity $C$ of any channel with input amplitude constrained by $A$ and with an additive noise $Z$ with power at most $\nu$ is bounded from above according to
\begin{equation}\label{eq: A nu bound 1}
C =  \sup_{|X| \leq A} I(X; X+Z) \leq  \ell(\nu) -h(Z).
\end{equation}
Noting that $\text{Var(Y)} \leq A^2 + \nu$, we also have the upper bound
\begin{equation}\label{eq: A nu bound 2}
C \leq \frac{1}{2} \log 2\pi e(\nu+A^2) - h(Z).
\end{equation}
giving
\begin{equation}
C \leq \min \left(\ell(\nu) -h(Z),  \frac{1}{2} \log 2\pi e(\nu+A^2) - h(Z) \right).
\end{equation}

From Figure \ref{fig: cubebounds}, it is interesting to note that there for large values of $\nu$, the bound in inequality \eqref{eq: A nu bound 2} is better, whereas for small values of $\nu$, the bound in inequality \eqref{eq: A nu bound 1} is better. Both of these bounds are asymptotically tight as $\nu \to \infty$, but only inequality \eqref{eq: A nu bound 1} is tight for $\nu \to 0$.

 
\section{The case of $\sigma > 0$}\label{section: sn}
 In this section, our aim is to parallel the upper-bounding technique used in Section \ref{section: cube} and obtain analogues of Theorem \ref{thm: leconte} and Theorem \ref{thm: ass cap cube}, when $\sigma$ is strictly greater than $0$. When $\sigma > 0$, the set $\cS_n(\sigma, \rho)$ is no longer an easily identifiable set like the $n$-dimensional cube from Section \ref{section: cube}. In particular, the intrinsic volumes of $\cS_n(\sigma, \rho)$ do not have a closed form expression. Despite this difficulty, we shall see that it is still possible to obtain results similar to those in Section \ref{section: cube}.

Our main result in this section is the following:
\begin{theorem}\label{thm:  sn leconte}
Define $\ell(\nu)$ as
\begin{equation}
\ell(\nu) =  \limsup_{n \to \infty} \frac{1}{n} \log \text{Vol}(\cS_n(\sigma, \rho)\oplus B_n(\sqrt{n\nu}~)).
\end{equation}
For $n \geq 1$, denote the intrinsic volumes of $\cS_n(\sigma, \rho)$ by $\mu_n(i)$ for $0 \leq i \leq n$ and define $G_n: \mathbb{R} \to \mathbb{R}$ and $g_n: \mathbb R \to \mathbb R$ as 
\begin{equation}
G_n(t) = \log \sum_{j=0}^n \mu_n(j)e^{jt}, ~~g_n(t) = \frac{G_n(t)}{n}.
\end{equation}
Define $\Lambda$ to be the pointwise limit of the sequence of functions $\{g_n\}$, which we will show exists.
Let $\Lambda^*$ be the convex conjugate of $\Lambda$. Then the following hold:
\begin{enumerate}
\item
$\ell(\nu)$ is continuous on $[0, \infty).$
\item
For $\nu > 0$, 
\begin{equation}
\ell(\nu) = \sup_{\theta \in [0,1]} \left[ -\Lambda^* (1-\theta)+ \frac{\theta}{2}\log \frac{2\pi e \nu}{\theta}\right].
\end{equation}
\end{enumerate}
 \end{theorem}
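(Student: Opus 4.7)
The plan is to mirror the strategy used in Theorem \ref{thm:  leconte} for the cube case, replacing the explicit binomial intrinsic volumes of $[-A,A]^n$ by large-deviation asymptotics for the intrinsic volumes $\mu_n(j)$ of $\cS_n(\sigma,\rho)$. Starting from Steiner's formula (Theorem \ref{thm: steinerformula}), the Minkowski sum volume is
$$\text{Vol}\bigl(\cS_n(\sigma,\rho)\oplus B_n(\sqrt{n\nu})\bigr) = \sum_{j=0}^n \mu_{n-j}(\cS_n(\sigma,\rho))\,\epsilon_j\,(\sqrt{n\nu})^j.$$
Reparametrizing with $\theta = j/n$, dividing the log of the $j$-th summand by $n$, and applying Stirling to $\epsilon_j = \pi^{j/2}/\Gamma(j/2+1)$ yields the per-summand asymptotic
$$\tfrac{1}{n}\log \mu_{n-j}(\cS_n(\sigma,\rho)) + \tfrac{\theta}{2}\log\tfrac{2\pi e\nu}{\theta} + o(1).$$
If the first term converges to $-\Lambda^*(1-\theta)$, the rate of the sum equals the supremum of this expression over $\theta$, by the same max-is-sum reasoning as Lemmas \ref{lemma: thetan} and \ref{lemma: limfn}.

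To establish existence of $\Lambda$, I would use the Cartesian product formula for intrinsic volumes, $\mu_k(K\times L)=\sum_{i+j=k}\mu_i(K)\mu_j(L)$, combined with the monotonicity of intrinsic volumes under the inclusion $\cS_{m+n}(\sigma,\rho)\subseteq \cS_m(\sigma,\rho)\times\cS_n(\sigma,\rho)$ that was already used in Lemma \ref{lemma: limit}. This yields the sub-convolutive inequality $\mu_{m+n}(k)\leq\sum_{i+j=k}\mu_m(i)\mu_n(j)$; multiplying by $e^{kt}$ and summing gives sub-additivity $G_{m+n}(t)\leq G_m(t)+G_n(t)$, so Fekete's lemma forces $g_n(t)\to \Lambda(t) = \inf_n g_n(t)$ for every $t$. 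As $\Lambda$ is convex (pointwise limit of convex functions), the G\"artner--Ellis theorem applied to the log moment generating functions $G_n$ then gives
$$\lim_{n\to\infty}\tfrac{1}{n}\log \mu_n(\lfloor n\theta\rfloor) = -\Lambda^*(\theta)$$
for $\theta$ ranging in the interior of the effective domain of $\Lambda^*$, which for $\cS_n(\sigma,\rho)$ is contained in $[0,1]$; the sub-convolutive machinery developed in Appendix \ref{appendix:  subc} supplies the regularity needed to apply G\"artner--Ellis at all interior $\theta$.

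Combining these ingredients, the $j$-th Steiner summand has $(1/n)$-rate $f^\nu(\theta):=-\Lambda^*(1-\theta)+\frac{\theta}{2}\log\frac{2\pi e\nu}{\theta}$, and since the sum has only $n+1$ terms its exponential rate equals $\sup_{\theta\in[0,1]}f^\nu(\theta)$, which is part~2 of the theorem. For part~1, continuity on $(0,\infty)$ follows because $f^\nu(\theta)$ is jointly continuous in $(\theta,\nu)$ on the compact parameter set $\theta\in[0,1]$, so the supremum is continuous in $\nu$. Continuity at $\nu=0$ uses the fact that $\mu_n(n)=\text{Vol}(\cS_n(\sigma,\rho))$, so $-\Lambda^*(1)=v(\sigma,\rho)=\ell(0)$; as $\nu\to 0_+$ the term $\tfrac{\theta}{2}\log\tfrac{2\pi e\nu}{\theta}$ drives the maximizer $\theta^*(\nu)\to 0$, and an argument paralleling equation \eqref{eq: lim cube} shows $\sup_\theta f^\nu(\theta)\to -\Lambda^*(1)$.

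The main obstacle is the G\"artner--Ellis step: one must verify that $\Lambda$ meets the regularity hypotheses (essential smoothness, or at least steepness at the boundary of its effective domain) to upgrade the pointwise-in-$\theta$ statement into something strong enough to handle the supremum over $[0,1]$ including the endpoints. Concretely, the challenge is to show that the pointwise convergence $f^\nu_n(\theta):=\tfrac{1}{n}\log[\mu_{n-\lfloor n\theta\rfloor}(\cS_n(\sigma,\rho))\,\epsilon_{\lfloor n\theta\rfloor}(\sqrt{n\nu})^{\lfloor n\theta\rfloor}]\to f^\nu(\theta)$ is uniform enough on $[0,1]$ to justify $\lim_n \sup_\theta f^\nu_n(\theta) = \sup_\theta f^\nu(\theta)$; this is exactly the role played by Lemma \ref{lemma: concave fn} in the cube case, and in the general case it is where the sub-convolutive structure of $\{\mu_n\}$ must be invoked to promote G\"artner--Ellis from a pointwise to a locally uniform large-deviation statement.
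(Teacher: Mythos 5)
Your proposal has the right skeleton --- Steiner's formula, reparametrize by $\theta = j/n$, extract a per-summand exponential rate $f^\nu(\theta)$, then take a supremum --- and you correctly flag, at the end, the place where the argument could break: the large-deviations regularity needed to upgrade the G\"artner--Ellis output. But that flagged step is precisely where the paper does something you do not anticipate, and your proposed fix (that ``the sub-convolutive machinery \dots supplies the regularity needed'') is wrong. Sub-convolutivity and Fekete give you the \emph{upper} bound of G\"artner--Ellis unconditionally (Theorem~\ref{thm: subc upper bound}), but they do \emph{not} make $\Lambda$ essentially smooth; the paper explicitly remarks after Theorem~\ref{thm: subc upper bound} that one can build sub-convolutive sequences whose $\Lambda$ is non-differentiable, so your asserted equality $\lim_n \tfrac{1}{n}\log\mu_n(\lfloor n\theta\rfloor)=-\Lambda^*(\theta)$ does not follow from G\"artner--Ellis with only the hypotheses you invoke. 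The paper's Lemma~\ref{lemma: sn upper lower bound} gets the large-deviations \emph{lower} bound from a second geometric property of $\{\cS_n(\sigma,\rho)\}$ that your proof never touches: pad each length-$a$ block with $\gamma=\lceil\sigma/\rho\rceil$ zeros so that concatenations land inside $\cS_{m(a+\gamma)}(\sigma,\rho)$. This yields minorizing measures $\hat\mu_n=\mu_a^{\star\lfloor n/(a+\gamma)\rfloor}\le\mu_n$ whose scaled log-MGF limit is $G_a(t)/(a+\gamma)$, which \emph{is} differentiable, so G\"artner--Ellis applies to $\hat\mu_n$; letting $a\to\infty$ and invoking Theorem~\ref{theorem: subc gnstar} (uniform convergence of $g_n^*\to\Lambda^*$ on $[0,1]$) transfers the lower bound to $\mu_n$. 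Without this zero-padding construction the lower bound is genuinely open, and proving essential smoothness of $\Lambda$ directly --- the route you gesture at --- is not available.

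There is a second gap on the uniformity side. You ask for $f^\nu_n\to f^\nu$ ``uniformly enough on $[0,1]$'' and suggest sub-convolutivity as the mechanism, but sub-convolutivity does not deliver this either. In the cube case, Lemma~\ref{lemma: concave fn} exploited that $\binom{n}{j}$ is log-concave in $j$, so each $f^\nu_n$ is concave and Lemma~\ref{lemma: convex} upgrades pointwise convergence of concave functions on a compact interval to uniform convergence. For general convex $\cS_n(\sigma,\rho)$ the analogue requires the Alexandrov--Fenchel inequalities, $\mu_n(j)^2 \ge \mu_n(j-1)\mu_n(j+1)$, which the paper uses in Lemma~\ref{lemma: alex} to establish concavity of the interpolated $a_n(\theta)$; only then do Lemmas~\ref{lemma: sn linearized}--\ref{lemma: sn thetan} give the max-is-sum conclusion and the continuity of $\ell$ on $(0,\infty)$. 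Your write-up does not mention Alexandrov--Fenchel. So the two obstacles you correctly identify are both real, but neither is closed by ``the sub-convolutive structure'' as you hope: they are closed by two separate ingredients, the zero-padding inclusion (property $\mathbf{B}$ in the paper's conclusion) and Alexandrov--Fenchel log-concavity of intrinsic volumes.
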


 \begin{proof}[Proof of Theorem \ref{thm: sn leconte}]
Note that for the statement of Theorem \ref{thm: sn leconte} to make sense, several results need to be established. We establish these in the Lemmas \ref{lemma: sn is convex} and \ref{lemma: gn to lambda}, where we prove the following:
\begin{lemma}[Proof in Appendix \ref{proof: lemma: sn is convex}]\label{lemma: sn is convex}
For all $n \geq 1$, the set $\cS_n(\sigma, \rho)$ is a convex set, and therefore it has well defined intrinsic volumes $\{\mu_n(i)\}_{i=0}^n$.
\end{lemma}

\begin{lemma}[Proof in Appendix \ref{proof: lemma: gn to lambda}]\label{lemma: gn to lambda}
The following results hold:
\begin{enumerate}
\item
The functions $\{g_n\}$ converge pointwise to a function $\Lambda(t): \mathbb R \to \mathbb R$ given by
 \begin{equation}
\Lambda(t) := \lim_{n \to \infty} g_n(t).
\end{equation}
\item
The convex conjugate of $\Lambda$, denoted by $\Lambda^*$, has its domain the set $[0,1]$.
\end{enumerate}
\end{lemma}

By Lemma \ref{lemma: sn is convex}, we can use Steiner's formula for the convex set $\cS_n(\sigma, \rho)$ to get
\begin{equation}\label{eq: steiner sn}
\text{Vol}(\cS_n(\sigma, \rho)\oplus B_n(\sqrt{n\nu}~)) = \sum_{j=0}^{n}\mu_n(n-j)\epsilon_j \sqrt{n\nu}^j.
\end{equation}
Define the functions $a_n(\theta)$ and $b^\nu_n(\theta)$ for $\theta \in [0,1]$ as follows. The function $a_n(\theta)$ is obtained by linearly interpolating the values of $a_n(j/n)$, where the value of $a_n(j/n)$ is given by:
\begin{equation}
a_n\left(\frac{j}{n}\right) = \frac{1}{n} \log \mu_n(n-j)  \text{~~for~~} 0\leq j \leq n.
\end{equation}
The function $b^\nu_n(\theta)$ is given by
\begin{equation}
b^\nu_n(\theta) = \frac{1}{n} \log \frac{\pi^{n\theta/2}}{\Gamma(n\theta/2 + 1)}(n\nu)^{n\theta/2}~~\text{for}~~\theta \in [0,1].
\end{equation}
Define $f^\nu_n:[0,1] \to \mathbb R$ as
\begin{equation}
f^\nu_n(\theta) := f^\nu(n, \theta) = a_n(\theta) + b^\nu_n(\theta).
\end{equation}
With this notation, we can rewrite equation \eqref{eq: steiner sn} as
\begin{align}
\text{Vol}(\cS_n(\sigma, \rho) \oplus B_n(\sqrt{n\nu})) &= \sum_{j=0}^n e^{nf^\nu(n, j/n)}. 
\end{align}
Just as in the proof of Theorem \ref{thm: leconte}, we want to establish the convergence of $f^\nu_n(\cdot)$ to some function $f^\nu(\cdot)$. Proving the convergence of $b^\nu_n(\cdot)$ is not hard, but proving the convergence of $a_n(\cdot)$ requires the application of Lemmas \ref{lemma: alex} and \ref{lemma: sn upper lower bound} given below. In Lemma \ref{lemma: alex} we establish the following:

\begin{lemma}[Proof in Appendix \ref{proof: lemma: alex}]\label{lemma: alex}
For each $n$, the following holds:
\begin{enumerate}
\item
The function $a_n(\cdot)$ is concave.
\item
The function $b^\nu_n(\cdot)$ is concave.
\item
The function $f^\nu_n(\cdot)$ is concave.
\end{enumerate}
\end{lemma}

In Lemma \ref{lemma: sn upper lower bound}, we show that the intrinsic volumes of $\{\cS_n(\sigma, \rho)\}$ satisfy a large deviations-type result, detailed below.
\begin{lemma}[Proof in Appendix \ref{proof: lemma: sn upper lower bound}]\label{lemma: sn upper lower bound}
Define a sequence of measures supported on $[0,1]$ by 
\begin{equation}
\mu_{n/n}\left(\frac{j}{n}\right) := \mu_n(j)  \text{~~for~~}  0 \leq j \leq n.
\end{equation}
The following bounds hold:
\begin{enumerate}
\item
 Let $I \subseteq \mathbb{R}$ be a closed set. The family of measures $\{\mu_{n/n}\}$ satisfies the large deviation upper bound
\begin{equation}
\limsup_{n \to \infty} \frac{1}{n} \log \mu_{n/n}(I) \leq - \inf_{x \in I} \Lambda^*(x).
\end{equation}
\item
Let $F \subseteq \mathbb{R}$ be an open set. The family of measures $\{\mu_{n/n}\}$ satisfies the large deviations lower bound
\begin{equation}
\liminf_{n \to \infty} \frac{1}{n} \log \mu_{n/n}(F) \geq - \inf_{x \in F} \Lambda^*(x).
\end{equation}
\end{enumerate}
\end{lemma}
Using the concavity and large deviations-type convergence from the two previous lemmas, we now prove the convergence of $\{f^\nu_n\}$ in the following lemma.
\begin{lemma}[Proof in Appendix \ref{proof: lemma: sn linearized}]\label{lemma: sn linearized}
The following convergence results hold:
\begin{enumerate}
\item
The sequence of functions $\{a_n\}$ converges uniformly to $-\Lambda^*(1-\theta)$ on $[0,1]$.
\item
The sequence of functions $\{b^\nu_n\}$ converges uniformly to the function $\frac{\theta}{2}\log \frac{2\pi e \nu}{\theta}$ on $[0,1]$.
\item
The sequence of functions $\{f^\nu_n\}$ converges uniformly to a function $f^\nu$ on the interval $[0,1]$, where $f^\nu$ is given by
$$f^\nu(\theta) = -\Lambda^* (1-\theta)+ \frac{\theta}{2}\log \frac{2\pi e \nu}{\theta}.$$
\end{enumerate}
\end{lemma}

We are now in a position to express $\ell(\nu)$ in terms of the limit function $f^\nu$. Let $$\hat \theta_n = \text{arg max}_{j/n} f^\nu_n(j/n).$$ 
In Lemma \ref{lemma: sn limfn} we prove the following:

\begin{lemma}[Proof in Appendix \ref{proof: lemma: sn limfn}]\label{lemma: sn limfn}
The following equality holds:  
\begin{equation}\label{eq: sn limfn1}
\lim_{n\to \infty} f^\nu_n(\hat\theta_n) = \max_\theta f^\nu(\theta).
\end{equation}
\end{lemma}

\begin{lemma}[Proof in Appendix \ref{proof: lemma: sn thetan}]\label{lemma: sn thetan}
The following equality holds:
\begin{equation}
\lim_{n\to \infty} f^\nu_n(\hat\theta_n) = \ell(\nu),
\end{equation}
and therefore
\begin{equation}
\ell(\nu) = \sup_\theta f^\nu(\theta).
\end{equation}
\end{lemma}
Part $2$ of Theorem \ref{thm: sn leconte} follows from Lemma \ref{lemma: sn thetan}. We now concentrate on proving the continuity of $\ell(\nu)$. We first show continuity at all points $\nu \neq 0$. 

Let $\nu_0 > 0$, and let $\epsilon > 0$ be given. Choose a $\delta > 0$ such that for all $\nu \in (\nu_0 - \delta, \nu_0 + \delta)$,
$$||f^\nu - f^{\nu_0}||_\infty < \epsilon.$$
This implies
\begin{align}
|\sup_\theta f^\nu(\theta) - \sup_\theta f^{\nu_0}(\theta)| < \epsilon
\implies |\ell(\nu) - \ell(\nu_0)| < \epsilon,
\end{align}
which establishes continuity of $\ell$ at all points $\nu_0 > 0$. 

Turning towards the $\nu = 0$ case, we define 
\begin{equation}
\theta^*(\nu) = \arg \max_{\theta} f^\nu(\theta).
\end{equation}
Proving the continuity of $\ell$ at $\nu = 0$ is slightly more challenging than the corresponding proof in Theorem \ref{thm: leconte} from Section \ref{section: cube}, since we do not know $\theta^*(\nu)$ explicitly in terms of $\nu$. Despite this, we can still prove the following lemma:

\begin{lemma}[Proof in Appendix \ref{proof: lemma: sn thetastar}]\label{lemma: sn thetastar}
The following equality holds:
\begin{equation}
 \limsup_{\nu \to 0} \theta^*(\nu) = 0.
\end{equation}
\end{lemma}
Now let $\nu_0 = 0$ and let $\epsilon > 0$ be given. Using continuity of $\Lambda^*$, choose an $\eta > 0$ such that 
\begin{equation}\label{eq: iv1}
|-\Lambda^*(1-\theta) - v(\sigma, \rho)| < \epsilon/2 \text{ for all } \theta \in [0, \eta).
\end{equation}
Using Lemma \ref{lemma: sn thetastar}, choose a $\delta_1$ such that 
\begin{equation}\label{eq: iv2}
\theta^*( \nu) < \eta \text{ for all } \nu \in [0, \delta_1).
\end{equation}
For all $\nu \in [0, \delta_1)$, we have
\begin{align}
\ell(\nu) &= \sup_\theta \left[-\Lambda^*(1-\theta) + \frac{\theta}{2} \log \frac{2\pi e\nu}{\theta}\right]\\
&= -\Lambda^*(\theta^*(\nu))+ \frac{\theta^*(\nu)}{2} \log \frac{2\pi e\nu}{\theta^*(\nu)}\\
&\stackrel{(a)}< v(\sigma, \rho) + \frac{\epsilon}{2} + \sup_\theta \frac{\theta}{2} \log \frac{2\pi e\nu}{\theta}\\
&\stackrel{(b)}\leq v(\sigma, \rho) + \frac{\epsilon}{2} + \pi \nu
\end{align}
where $(a)$ follows by inequalities \eqref{eq: iv1}  and \eqref{eq: iv2}, and $(b)$ follows from an evaluation of the supremum in $(a)$. Choose $\delta_2 = \frac{\epsilon}{2\pi}$, and choose $\delta = \min(\delta_1, \delta_2)$. We now have that for all $\nu \in [0, \delta)$,
\begin{equation}
\ell(\nu) < v(\sigma, \rho) + \epsilon.
\end{equation}
This combined with $\ell(\nu) \geq \ell(0) = v(\sigma, \rho)$ gives $|\ell(\nu) - \ell(0)| < \epsilon$, thus establishing continuity at $\nu_0 = 0$.
\end{proof} 
Using Theorem \ref{thm: sn leconte}, we  establish the following asymptotic capacity result:

 \begin{theorem}\label{thm: ass cap sn}
The capacity $C$ of an AWGN channel with $(\sigma, \rho)$-power constraints and noise power $\nu$ satisfies the following:
\begin{itemize}
\item[1.]
When the noise power $\nu \to 0$, capacity $C$ is given by
$$C =  v(\sigma, \rho) - \frac{1}{2}\log 2\pi e\nu + \epsilon(\nu),$$
where $\epsilon(\cdot)$ is a function such that $\lim_{\nu \to 0}\epsilon(\nu) = 0.$
\item[2.]
When noise power $\nu \to \infty$, capacity $C$ is given by
$$C = \frac{1}{2}\left(\frac{\rho}{\nu}\right)^2 - \frac{1}{4}\left(\frac{\rho}{\nu}\right)^4 +  \frac{1}{6}\left(\frac{\rho}{\nu}\right)^6 +  O\left(\left(\frac{\rho}{\nu}\right)^8\right).$$
\end{itemize}
\end{theorem}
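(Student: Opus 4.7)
My plan is to prove the two parts by sandwiching $C$ between a lower bound obtained from Theorem \ref{thm: bounds} (respectively, from the $\sigma = 0$ case handled by Theorem \ref{thm: ass cap cube}) and an upper bound obtained from Theorem \ref{thm: voronoi} together with Theorem \ref{thm: sn leconte} (respectively, from Theorem \ref{thm: bounds}), then Taylor-expanding in the relevant regime. The one monotonicity fact I will use repeatedly is that relaxing $\sigma$ can only increase capacity, so $C(0,\rho,\nu) \leq C(\sigma,\rho,\nu) \leq C(\infty,\rho,\nu) = \tfrac{1}{2}\log(1+\rho/\nu)$, the last equality being the Ozel--Ulukus result cited in the proof of Theorem \ref{thm: bounds}.

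For part 1 (low-noise regime), the lower bound in Theorem \ref{thm: bounds} gives $C \geq \tfrac{1}{2}\log\bigl(1 + e^{2v(\sigma,\rho)}/(2\pi e\nu)\bigr)$, which when rewritten as $v(\sigma,\rho) - \tfrac{1}{2}\log 2\pi e\nu + \tfrac{1}{2}\log\bigl(1 + 2\pi e\nu\, e^{-2v(\sigma,\rho)}\bigr)$ is already of the claimed form with an $O(\nu)$ remainder. For the matching upper bound, I invoke Theorem \ref{thm: voronoi} to get $C \leq \lim_{\epsilon \to 0_+}\ell(\nu+\epsilon) - \tfrac{1}{2}\log 2\pi e\nu$; by the continuity of $\ell$ on $[0,\infty)$ established in Theorem \ref{thm: sn leconte} the $\lim_{\epsilon\to 0_+}$ collapses to $\ell(\nu)$. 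Since $\ell(0) = v(\sigma,\rho)$ and $\ell$ is continuous at $0$, we have $\ell(\nu) = v(\sigma,\rho) + o(1)$ as $\nu \to 0$, giving $C \leq v(\sigma,\rho) - \tfrac{1}{2}\log 2\pi e\nu + o(1)$. The two bounds together yield $\epsilon(\nu) \to 0$ as claimed.

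For part 2 (high-noise regime), the upper bound $C \leq \tfrac{1}{2}\log(1+\rho/\nu)$ from Theorem \ref{thm: bounds} gives, via a straightforward Taylor expansion in $\rho/\nu$, the series with the stated coefficients and error $O((\rho/\nu)^{4})$. For the lower bound, by the monotonicity $C(\sigma,\rho,\nu) \geq C(0,\rho,\nu)$ observed above, it suffices to apply part 2 of Theorem \ref{thm: ass cap cube} to the amplitude-constrained channel with $A = \sqrt{\rho}$; substituting $\alpha^2 = A^2/\nu = \rho/\nu$ into the expansion $\alpha^2/2 - \alpha^4/4 + \alpha^6/6 - 5\alpha^8/24 + O(\alpha^{10})$ produces a series that matches the upper bound through the first three terms. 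The two sandwich bounds therefore agree up to and including the cubic term in $\rho/\nu$, absorbing the discrepancy into the stated error.

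The main obstacle is the low-noise part: the entire content is really the continuity of $\ell$ at $0$, and the fact that $\ell(0)$ equals the volume exponent $v(\sigma,\rho)$. Both of these have been isolated into Theorem \ref{thm: sn leconte}, so in the present argument they are invoked as black boxes; the remaining work here is purely a bookkeeping matter of combining the matching bounds and noting that the two remainders are $o(1)$. The high-noise part is essentially routine Taylor expansion once the sandwich between the $\sigma = 0$ and $\sigma = \infty$ capacities is set up.
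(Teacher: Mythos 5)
Your proposal is correct and follows essentially the same two-sided sandwich argument as the paper: for part 1 you combine the EPI lower bound from Theorem \ref{thm: bounds} with the upper bound $C \leq \ell(\nu) - \tfrac{1}{2}\log 2\pi e\nu$ and the continuity of $\ell$ at $0$ from Theorem \ref{thm: sn leconte}, and for part 2 you sandwich $C$ between the $\sigma = 0$ capacity (via Theorem \ref{thm: ass cap cube}) and the $\sigma = \infty$ upper bound $\tfrac{1}{2}\log(1+\rho/\nu)$, matching through the third order. Incidentally, your remainder $O((\rho/\nu)^4)$ is the correct one: the powers displayed in the theorem statement, $(\rho/\nu)^2, (\rho/\nu)^4, (\rho/\nu)^6$ with $O((\rho/\nu)^8)$, appear to be a typographical slip, since $\alpha^2 = \rho/\nu$ gives $C = \tfrac{1}{2}(\rho/\nu) - \tfrac{1}{4}(\rho/\nu)^2 + \tfrac{1}{6}(\rho/\nu)^3 + O((\rho/\nu)^4)$.
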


\begin{proof}[Proof of Theorem \ref{thm: ass cap sn}]
Note that all the logarithms used in this proof are taken to be at base $e$.
\begin{itemize}
\item[1.]
Using the lower bound in Theorem \ref{thm: bounds},  
\begin{align}
C &\geq \frac{1}{2} \log \left(1 + \frac{e^{2v(\sigma, \rho)}}{2\pi e \nu}\right)\\
&= v(\sigma, \rho) - \frac{1}{2}\log 2\pi e\nu + \log\left(1 + \frac{2\pi e \nu}{e^{2v(\sigma, \rho)}}\right)\\
&= v(\sigma, \rho) - \frac{1}{2}\log 2\pi e\nu  + O(\nu). \label{eq: sn ass lb}
\end{align}
By continuity of $\ell$ at $0$, we have that as $\nu \to 0$
\begin{align}
\ell(\nu) &= v(\sigma, \rho) + \epsilon(\nu)
\end{align}
for some $\epsilon(\cdot)$ satisfying $\lim_{\nu \to 0} \epsilon(\nu) = 0$.
This gives the upper bound
\begin{equation} \label{eq: sn ass ub}
C \leq v(\sigma, \rho) - \frac{1}{2}\log 2\pi e\nu + \epsilon(\nu).
\end{equation}
Our claim follows from the inequalities \eqref{eq: sn ass lb}  and \eqref{eq: sn ass ub}.
Unlike the case of $\sigma= 0$, we are unable to give any precise rate at which $\epsilon(\nu)$ goes to $0$. Since we don't know what the intrinsic volumes of $\cS_n(\sigma, \rho)$ are, we can only say that $-\Lambda^*(1-\theta)$ is continuous at $\theta = 0$, while not knowing how fast it approaches $v(\sigma, \rho)$ as $\theta \to 0$.
\item[2.] 
Note that $C$ is bounded from below by the capacity of an AWGN channel with an amplitude constraint of $\sqrt \rho$. Using Theorem \ref{thm: ass cap cube} we obtain for $\nu \to \infty$,
\begin{equation}\label{eq: ass cap sn 1}
C \geq \frac{1}{2}\left(\frac{\rho}{\nu}\right)^2 - \frac{1}{4}\left(\frac{\rho}{\nu}\right)^4 +  \frac{1}{6}\left(\frac{\rho}{\nu}\right)^6 +  O\left(\left(\frac{\rho}{\nu}\right)^8\right).
\end{equation}
In addition, the upper bound from Theorem \ref{thm: bounds} states that
\begin{equation}\label{eq: ass cap sn 2}
C \leq \frac{1}{2}\log \left(1 + \frac{\rho}{\nu}\right) = \frac{1}{2}\left(\frac{\rho}{\nu}\right)^2 - \frac{1}{4}\left(\frac{\rho}{\nu}\right)^4 +  \frac{1}{6}\left(\frac{\rho}{\nu}\right)^6 +  O\left(\left(\frac{\rho}{\nu}\right)^8\right).
\end{equation}
The claim now follows from equations \eqref{eq: ass cap sn 1} and \eqref{eq: ass cap sn 2}.
\end{itemize}
\end{proof}

\section{Conclusion}
In this paper, we studied in detail an AWGN channel with a power constraint motivated by energy harvesting communication systems, called the $(\sigma, \rho)$-power constraint. Such a power constraint induces an infinite memory in the channel. In general, finding capacity expressions for channels with memory is hard, even if we allow for $n$-letter capacity expressions. However, in this particular case, we are able to exploit the following geometric properties of $\{\cS_n(\sigma, \rho)\}$:
\begin{itemize}
\item[$\mathbf{A:}$]
$\cS_{m+n}(\sigma, \rho) \subseteq \cS_n(\sigma, \rho) \times \cS_m(\sigma, \rho)$,
\item[$\mathbf{B:}$]
$[\cS_{m}(\sigma, \rho) \times \mathbf{0}_k] \times [\cS_{n}(\sigma, \rho) \times \mathbf{0}_k] \subseteq \cS_{m+n+2k} (\sigma, \rho)$, when $k = \lceil \frac{\sigma}{\rho} \rceil$.
\end{itemize}
Property $\mathbf{(A)}$ allowed us to upper-bound channel capacity, and property $\mathbf{(B)}$ allowed us to lower-bound the same. In Section \ref{section: capacity}, we used these two properties to establish an $n$-letter capacity expression.

The main contribution of Section \ref{section: volume} was the EPI based lower bound. To arrive at this lower bound, we used the $n$-letter capacity expression from Section \ref{section: capacity}, and the following property:
\begin{itemize}
\item[$\mathbf{C:}$]
The limit $\lim_{n} \frac{1}{n} \log \text{Vol}(\cS_n(\sigma, \rho))$ exists, and is finite.
\end{itemize}
For most reasonable power constraints, an exponential volume growth rate as defined in property $\mathbf{C}$ can be shown to exist. The case of $(\sigma, \rho)$-constraints was especially interesting, because it was fairly easy to evaluate $v(\sigma, \rho)$ using the numerical method in Section \ref{section: finding v}. We  attribute this ease to the existence of a state $\sigma_n$, which is a single parameter that encapsulates all the relevant information about the history of the sequence. We used the computed value of $v(\sigma, \rho)$ to plot the EPI based lower bound. Our results show that energy harvesting communication systems have significant capacity gains even for a small battery. We then established an upper bound on capacity using the exponential growth rate of volume of the Minkowski sum of $\cS_n(\sigma, \rho)$ and a ball of radius $\sqrt{n\nu}$. For the special case of $\sigma = 0$, which is the peak power constrained AWGN channel, we explicitly evaluated this upper bound. This enabled us to derive new asymptotic capacity results for such a channel. We also established a new upper bound on the entropy $h(X+Z)$, when $X$ is amplitude-constrained, and $Z$ is variance-constrained. The analysis for the case of $\sigma > 0$ was more involved because the intrinsic volumes of $\cS_n(\sigma, \rho)$ are not known in a closed form. Using a new notion of sub-convolutive sequences, we showed that the logarithms of the intrinsic volumes of $\{\cS_n(\sigma, \rho)\}$ when appropriately normalized, converge to a limit function. We then established an asymptotic capacity result in terms of this limit function. Our analysis crucially depended on both, property $\mathbf{(A)}$ and property $\mathbf{(B)}$. It would be interesting to study how our methods can be adapted to study power-constrained channels when the constraint does not satisfy one (or both) of the properties $\mathbf{(A)}$ and property $\mathbf{(B)}$, and we intend to pursue this in the future.

\section*{Appendices}
\begin{appendix}

\section{Proofs for Section \ref{section: v}}\label{appendix: section: v}

\subsection{Proof of Lemma \ref{lemma: v1}}\label{proof: lemma: v1}
If we scale both $\sigma$ and $\rho$ by some $\alpha > 0$, by equation (\ref{eq: def}), $\cS_n(\alpha\sigma, \alpha\rho)$ is a $\sqrt{\alpha}$-scaled version of $\cS_n(\sigma, \rho)$. This means that $V_n(\alpha\sigma, \alpha\rho) = \alpha^{\frac{n}{2}} V_n(\sigma, \rho)$; i.e., $v(\alpha\sigma, \alpha\rho) = \log \sqrt{\alpha} + v(\sigma, \rho)$, which proves the lemma.

\subsection{Proof of Lemma \ref{lemma: continuity}}\label{proof: lemma: continuity}
Let $\sigma \in (0, \infty)$. Let $\epsilon > 0$ be given. We will show that there exists a $\delta ^*> 0$ such that for all $\sigma' \in (\sigma - \delta^*, \sigma + \delta^*)$,
$$|v_1(\sigma') - v_1(\sigma)| < \epsilon.$$
Since $v_1$ is a non-decreasing function, it will be enough to show that 
$$v_1(\sigma+\delta^*) - v_1(\sigma-\delta^*) < \epsilon.$$
Pick any $0 < \delta < \max( \sigma, \frac{1}{2})$. For $z \in (0,1)$, let $\cS_{n}(\sigma+\delta, 1)\times \sqrt{1-z}$ denote the set $ \cS_n(\sigma+\delta, 1)$ scaled by $\sqrt{1-z}$. Fix $z = 2\delta$. We will now show that $\cS_{n}(\sigma+\delta, 1)\times \sqrt{1-z} \subseteq S_n(\sigma-\delta, 1)$. 
\smallskip

Any $(x_1, x_2, \dots, x_n) \in \cS_n(\sigma+\delta, 1)$ satisfies
\begin{equation} \label{eq: outersn}
\sum_{i = k+1}^{l} x_i^2 \leq (l-k) + \sigma+\delta~ \text{for all } 0 \leq k < l \leq n.
\end{equation}
Let the $(\hat x_1, \dots, \hat x_n)$ be $(x_1, \dots, x_n)$ scaled by $\sqrt{1-z}$. If $(x_1, \dots, x_n)$ happens to lie in $\cS_n(\sigma-\delta, 1)$, then so does the scaled version $(\hat x_1, \dots, \hat x_n)$. If $(x_1, \dots, x_n) \in \cS_n(\sigma+\delta, 1) \setminus \cS_n(\sigma-\delta, 1)$, then for each choice of $0 \leq k < l \leq n$ such that
\begin{equation} \label{eq: badpoint}
(l-k) + \sigma + \delta \geq \sum_{i=k+1}^l x_i^2 > (l-k) + \sigma - \delta,
\end{equation}
the point $(\hat x_1, \cdots, \hat x_n)$ satisfies 
\begin{align}
\sum_{i=k+1}^l \hat x_i^2  &= \sum_{i = k+1}^l x_i^2 - z \sum_{i=k+1}^l x_i^2\\
&\leq [(l-k) + \sigma + \delta] - z[(l-k) + \sigma - \delta]\\
&\stackrel{(a)}\leq [(l-k) + \sigma + \delta] - z\\
&\stackrel{(b)}= (l-k) + \sigma - \delta,
\end{align}
where (a) follows since $l-k \geq 1$ and $\sigma -\delta > 0$, implying that $(l-k) + \sigma - \delta \geq 1$, and (b) follows by the choice $z = 2\delta$. Thus, the point $(\hat x_1, \dots, \hat x_n)$ lies in the set $\cS_n(\sigma-\delta, 1)$. The containment
\begin{equation}
\sqrt{1 - 2\delta} \times \cS_n(\sigma + \delta, 1) \subseteq \cS_n(\sigma-\delta, 1) \subseteq \cS_n(\sigma+\delta, 1)
\end{equation}
gives
\begin{equation}
\frac{1}{2} \log (1 - 2\delta) + v_1(\sigma+\delta)  \leq v_1(\sigma-\delta) \leq v_1(\sigma+\delta).
\end{equation}
Hence, we have
\begin{equation}
v_1(\sigma + \delta) - v_1(\sigma - \delta) \leq -\frac{1}{2} \log (1 - 2\delta).
\end{equation}
Picking $\delta^*$ small enough to satisfy 
$$-\frac{1}{2} \log (1 - 2\delta^*) < \epsilon,$$
we establish continuity of $v_1(\sigma)$ in the open set $(0, \infty)$.\\

Now consider the case when $\sigma = 0$.  We will show that there exists a $\delta ^*> 0$ such that for all $\sigma' \in [0, \delta^*)$,
$$|v_1(\sigma') - v_1(0)| < \epsilon.$$
Since $v_1$ is a non-decreasing function, it will be enough to show that 
$$v_1(\delta^*) - v_1(0) < \epsilon.$$
Pick any $\delta < 1$. Using the same strategy as before, we can show that $\cS_{n}(\delta, 1)\times \sqrt{1-\delta} \subseteq S_n(0, 1)$.  This gives
\begin{equation}
v_1(\delta) + \frac{1}{2}\log (1-\delta) \leq v_1(0),
\end{equation}
and thus
\begin{equation}
0 \leq v_1(\delta) - v_1(0) \leq -\frac{1}{2} \log(1-\delta).
\end{equation}
Choosing $\delta^*$ small enough such that $-\frac{1}{2} \log(1-\delta) < \epsilon$, we establish continuity at $\sigma = 0$.

\subsection{Proof of Lemma \ref{lemma: concavity of v1}}\label{proof: lemma: concavity of v1}
For every $n$, define the function $V_n(\sigma) = \frac{\log \text{Vol}(\cS_n(\sigma, 1)}{n}$. We'll first show that $V_n(\sigma)$ is concave. Define the set ${\mathscr S}_{n+1} \subseteq \mathbb{R}^{n+1}$ as follows:
\begin{equation}
{\mathscr S}_{n+1} = \{(x_1, \dots, x_n, \sigma_x) | (x_1, \dots, x_n) \in \cS_n(\sigma_x, 1)\}.
\end{equation}
We claim that ${\mathscr S}_{n+1}$ is convex. Let $\mathbf x = (x_1, \dots, x_n, \sigma_x)$ and $\mathbf y = (y_1, \dots, y_n, \sigma_y)$ be in ${\mathscr S}_{n+1}$. For $\lambda \in [0, 1]$, consider the point $\lambda \mathbf x + (1-\lambda) \mathbf y$. 
For any $0 \leq k < l \leq n$, we have
\begin{align}
\sum_{i=k+1}^l (\lambda x_i + (1-\lambda)y_i)^2 &= \lambda^2 \sum_{i=k+1}^l x_i^2 + (1-\lambda)^2\sum_{i=k+1}^l y_i^2 + 2\lambda(1-\lambda)\sum_{i=k+1}^l x_iy_i\\
&\leq \lambda^2 \sum_{i=k+1}^l x_i^2 + (1-\lambda)^2 \sum_{i=k+1}^l y_i^2 + \lambda(1-\lambda)\sum_{i=k+1}^l (x_i^2 + y_i^2)\\
&= \lambda \sum_{i=k+1}^l x_i^2 + (1-\lambda)\sum_{i=k+1}^l y_i^2 \\
&\leq (\lambda \sigma_x + (1-\lambda)\sigma_y) + (l-k).
\end{align}
Thus, $\lambda \mathbf x + (1-\lambda) \mathbf y \in \mathscr S_{n+1}$, which proves that $\mathscr S_{n+1}$ is a convex set.

Now the $n$-dimensional volume of the intersection of ${\mathscr S}_{n+1}$ with the hyperplane $\sigma_x =  \sigma$ is simply the volume of $\cS_n(\sigma, 1)$. Using the Brunn-Minkowski inequality \cite{schneider2013convex}, we see that $\text{Vol}(\cS_n(\sigma, 1))^\frac{1}{n}$ is concave in $\sigma$, so the logarithm is also concave. This establishes the concavity of $V_n(\sigma)$.

To show that $v_1(\sigma)$ is concave, we simply note that it is the pointwise limit of the sequence of concave functions $\{V_n\}$.

\subsection{Proof of Lemma \ref{lemma: Ansigman}}\label{proof: lemma: Ansigman}
For $x^n \in \cA_n(\sigma(n))$, the state at time $n$ is nonnegative. Suppose that after time $n$, we impose a restriction that the power used per symbol cannot be more than $\frac{1}{2}$. This means that the battery will charge by at least $\frac{1}{2}$ at each timestep, and after $2\sigma(n)$ steps, the battery will be fully charged to $\sigma(n)$. Denote the set of all such $(n+2\sigma(n))$-length sequences obtained by this process as $\hat{\cA}_n(\sigma(n))$. This set is contained in $\cS_{n + 2\sigma(n)}(\sigma(n),1)$, and its volume is
$$\mbox{Vol}(\cA_n(\sigma(n))) \times (\sqrt{2})^{2\sigma(n)}.$$
The key point is to note the containment
$$\hat{\cA}_n(\sigma(n)) \times \cdots \times \hat{\cA}_n(\sigma(n)) \subset \cS_{m(n + 2\sigma(n))}(\sigma(n),1)~,$$
for all $m \geq 1$, where there are $m$ copies in the product on the left hand side.
This holds because we ensure that the battery is fully charged to $\sigma(n)$ after each $(n+2\sigma(n))$-length block.
Taking the limit in $m$ and using Lemma \ref{lemma: limit}, we see that
\begin{align*}
v_1(\sigma(n)) &\geq \frac{1}{n + 2\sigma(n)} \log \left(\mbox{Vol}(\cA_n(\sigma(n))) \times \sqrt{2}^{2\sigma(n)}\right).
\end{align*}
Letting  $n$ tend to infinity and using conditions (\ref{eq: cond1}) and (\ref{eq: cond2}), we arrive at
$$\liminf_{n \to \infty} v_1(\sigma(n)) \geq  \frac{1}{2} \log 2\pi e,$$
which proves the claim.

\subsection{Proof of Lemma \ref{lemma: sigma_infty}}\label{proof: lemma: sigma_infty}
The key to proving Lemma \ref{lemma: sigma_infty} is to examine the distribution of the burstiness $\sigma(X^n)$, when $X^n$ is drawn from a uniform distribution on $\cA_n$. Since a high-dimensional Gaussian closely approximates the uniform distribution on $\cA_n$, it makes sense to look at the burstiness of $X^n$ when each $X_i$ is drawn independently from a standard normal distribution.
\smallskip

Let $X_1, X_2, \dots, X_n$ be i.i.d.\ standard normal random variables. Let $Y_i = X_i^2-1$, for $1\leq i \leq n$. These $Y_i$ are i.i.d.\ with zero mean and variance $2$. Define $S_0 = 0$ and
$$S_m = \sum_{i=1}^m Y_i, \text{ for }1\leq m \leq n.$$
Define $\Sigma_n$, the burstiness of the sequence of $X_i$, by
\begin{equation*}
\Sigma_n = \max_{0 \leq k < l \leq n} Y_{k+1} + Y_{k+2} + ...+ Y_l = \max_{0 \leq k < l \leq n} (S_l - S_k).
\end{equation*}
The following inequality holds:
\begin{equation}
\Sigma_n \leq \max_{0\leq l \leq n} S_l - \min_{0 \leq k \leq n} S_k := \tilde{\Sigma}_n.
\end{equation}
Fix some $\alpha > 0$. Then
\begin{align}
\liminf_n P(\Sigma_n \leq &\alpha \sqrt{n}) \geq \liminf_n P(\tilde{\Sigma}_n \leq \alpha\sqrt{n}) \nonumber \\
&= \liminf_n P( \max \frac{S_l}{\sqrt{2n}} - \min \frac{S_k}{\sqrt{2n}} \leq \frac{\alpha}{\sqrt{2}}) \nonumber \\
&\stackrel{(a)}{=} P(\max_{0 \leq t \leq 1} B_t - \min_{0 \leq t \leq 1} B_t \leq \frac{\alpha}{\sqrt{2}}) \label{eq: donsker}\\
&\geq P(\sup_{0 \leq t \leq 1} |B_t| \leq \frac{\alpha}{2\sqrt{2}}), \nonumber 
\end{align}
where in equation (\ref{eq: donsker}), $B_t$ is the standard Brownian motion and the equality in step (a) follows from Donsker's theorem  \cite{durrett2010probability}. We now choose $\alpha$ large enough so that
\begin{equation}
P(\sup_{0 \leq t \leq 1} |B_t| \leq \frac{\alpha}{2\sqrt{2}}) \geq \frac{3}{4}.
\end{equation}
Since $\lim_n P(X^n \in \cA_n) = 1/2$ by the central limit theorem for 
$Y_1, Y_2, \dots$,  we have
\begin{equation}
\liminf_n P(\cA_n(\alpha \sqrt n)) = \liminf_n P(\Sigma_n \leq \alpha \sqrt{n}, X^n \in \cA_n) \geq \frac{1}{4}~, \nonumber
\end{equation}
where $P(\cA_n(\alpha \sqrt n)) := P(X^n \in \cA_n(\alpha \sqrt n))$.
The volume of $\cA_n(\alpha \sqrt n))$ is upper-bounded by the volume of $\cA_n$. Furthermore, it is lower-bounded by the volume of a ball $\cB_n$ centered at the origin, such that $P(X^n \in \cB_n) =: P(\cB_n) = P(\cA_n(\alpha \sqrt n))$, since a Gaussian distribution decays radially. Using standard concentration bounds on the normal distribution \cite{hopcroftfoundations}, to satisfy $P(\cB_n) \geq1/4$, the radius of $\cB_n$  must be $\sqrt{n} + o(\sqrt{n})$. Thus,
$\lim_n \frac{\text{~Vol}(\cB_n)}{n} = \frac{1}{2} \log 2\pi e.$
 Using $\text{Vol}(\cB_n) \leq \text{Vol}(\cA_n(\alpha \sqrt n)) \leq \text{Vol}(\cA_n)$, we obtain $\lim_n \frac{\text{Vol}(\cA_n(\alpha\sqrt{n}))}{n} = \frac{1}{2}\log 2 \pi e.$
Thus, with $\sigma(n) = \alpha\sqrt n$, the pair $\cA_n(\sigma(n))$ and $\sigma(n)$ satisfy both the conditions in Lemma \ref{lemma: Ansigman}, thereby proving Lemma \ref{lemma: sigma_infty}.

\section{Appendix for Section \ref{section: finding v}} \label{appendix: numerical}
Let $0 < \gamma < 1$. We define a new set $\cS_{n, \gamma}(\sigma, 1)$ to be
\begin{equation}
\cS_{n, \gamma}(\sigma, 1) = \{x^n \in \cS_n(\sigma, \rho)~|~ x_i^2 \geq \gamma \text{ for every } 1 \leq i \leq n\}.
\end{equation}
Using Fekete's Lemma, it is easy to establish that following limit exists:
\begin{equation}
\lim_{n \to \infty} \frac{\text{Vol}(\cS_{n, \gamma}(\sigma, 1))}{n} := v_{1, \gamma}(\sigma).
\end{equation}
Clearly, $v_{1, \gamma}(\sigma) \leq v_1(\sigma)$ as $\cS_{n, \gamma}(\sigma, 1) \subseteq \cS_n(\sigma, 1)$. In Lemma \ref{lemma: use gamma}, we show that it is possible to choose a small enough value of $\gamma$ such that $v_{1,\gamma}(\sigma)$  approximates $v_1(\sigma)$ as closely as desired.

\begin{lemma}\label{lemma: use gamma}
We have $$v_1\left(\frac{\sigma}{1 - \eta}\right) + \frac{1}{2} \log (1 - \eta) \leq v_{1, \gamma}(\sigma) \leq v_1(\sigma),$$
where $\eta = \gamma  + 2(\sqrt{\sigma+1}\sqrt\gamma)$.
\end{lemma}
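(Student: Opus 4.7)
The upper bound $v_{1,\gamma}(\sigma) \leq v_1(\sigma)$ is immediate from the containment $\cS_{n,\gamma}(\sigma,1) \subseteq \cS_n(\sigma,1)$. For the lower bound, my plan is to construct an explicit, volume-controlled embedding $\psi: \cS_n(\sigma/(1-\eta), 1) \to \cS_{n,\gamma}(\sigma, 1)$ with constant Jacobian $(1-\eta)^{n/2}$; this would give $\text{Vol}(\cS_{n,\gamma}(\sigma,1)) \geq (1-\eta)^{n/2}\,\text{Vol}(\cS_n(\sigma/(1-\eta),1))$, and taking $\tfrac{1}{n}\log$ followed by $n \to \infty$ yields the claim.

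The natural candidate is to first shrink each coordinate by $\sqrt{1-\eta}$ and then push it out of the forbidden slab $(-\sqrt\gamma, \sqrt\gamma)$:
$$\psi(x)_i := \text{sign}(x_i)\bigl(\sqrt{1-\eta}\,|x_i| + \sqrt\gamma\bigr).$$
On each open orthant $\psi$ is affine with per-coordinate Jacobian $\sqrt{1-\eta}$, hence is injective a.e.\ with total Jacobian $(1-\eta)^{n/2}$; and by construction $|\psi(x)_i| \geq \sqrt\gamma$, so $\psi(x)_i^2 \geq \gamma$ automatically. What remains is to check the window constraints
$$\sum_{i=k+1}^{l} \psi(x)_i^2 \leq \sigma + (l-k), \qquad 0 \leq k < l \leq n.$$
Expanding $\psi(x)_i^2 = (1-\eta)x_i^2 + 2\sqrt{(1-\eta)\gamma}\,|x_i| + \gamma$, summing over $i \in \{k+1,\ldots,l\}$, applying Cauchy--Schwarz $\sum|x_i| \leq \sqrt{(l-k)\sum x_i^2}$, and using the hypothesis $\sum x_i^2 \leq \sigma/(1-\eta) + (l-k)$, the required inequality reduces after rearrangement to
$$4\gamma\bigl(\sigma + (1-\eta)(l-k)\bigr) \leq (l-k)(\eta - \gamma)^2.$$

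This last step is the algebraic heart and the main obstacle: two slacks --- the Cauchy--Schwarz slack in the cross term and the shrinkage factor $1-\eta$ in the quadratic term --- must be tuned so that the inequality holds uniformly in the window length $l - k \in \{1, \ldots, n\}$. The specific definition $\eta = \gamma + 2\sqrt{\gamma(\sigma+1)}$ is engineered precisely to yield $(\eta - \gamma)^2 = 4\gamma(\sigma+1)$, which collapses the inequality to $\sigma \leq (l-k)(\sigma+\eta)$, trivially true for $l-k \geq 1$. Once this containment is in hand the Jacobian calculation is routine and the lemma follows; implicit throughout is the assumption that $\gamma$ is small enough that $\eta < 1$, which holds for any $\gamma$ in the regime where the stated bound is non-vacuous.
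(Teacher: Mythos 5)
Your proof is correct and follows essentially the same shrink-and-shift construction as the paper, which maps $\cS_n(\sigma, 1-\eta) \cap \mathbb{R}_+^n$ into $\cS_{n,\gamma}(\sigma,1)$ by translating each coordinate by $\sqrt\gamma$ and then invokes the scaling identity $v(\sigma, 1-\eta) = \tfrac{1}{2}\log(1-\eta) + v_1(\sigma/(1-\eta))$. The only minor differences are that you handle all orthants at once via the sign-preserving piecewise-affine map and bound the cross term with Cauchy--Schwarz, whereas the paper works in the positive orthant (appealing implicitly to sign symmetry) and uses the cruder pointwise bound $|x_i| \leq \sqrt{\sigma+1}$; both choices lead to the same $\eta$.
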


\begin{proof}
Clearly, $v_{1, \gamma}(\sigma) \leq v_1(\sigma)$, since $\cS_{n, \gamma}(\sigma, 1) \subseteq \cS_n(\sigma, 1)$. 
\smallskip

Now let $x^n \in \cS_n(\sigma, 1-\eta) \cap \mathbb{R}_+^n$. We claim that $$(x_1+\sqrt\gamma, \dots, x_n+\sqrt{\gamma}) \in \cS_{n, \gamma}(\sigma, 1).$$ This would imply that a translated version of $\cS_n(\sigma, 1-\eta) \cap \mathbb R_+^n$ lies inside $\cS_{n,\gamma}(\sigma, 1) \cap \mathbb R_+^n$, which will give us a lower bound on the volume of the latter in terms of the former.
Since each $(x_i+\sqrt \gamma)^2 \geq \gamma$, the only condition we need to check is whether $(x_1+\sqrt\gamma, \dots, x_n+\sqrt{\gamma}) \in \cS_n(\sigma, 1)$. For any $0 \leq k < l \leq n$, we have
\begin{align}
\sum_{i = k+1}^l (x_i + \sqrt{\gamma})^2 &= \sum_{i = k+1}^l x_i^2 + 2\sqrt{\gamma}\sum_{i = k+1}^l x_i + (l-k)\gamma\\
&\leq (l-k)(1 - \eta) + \sigma +  2\sqrt{\gamma}\sum_{i = k+1}^l x_i + (l-k)\gamma\\
&\leq (l-k)(1 - \eta + \gamma) + \sigma + 2\sqrt{\gamma}\sum_{i = k+1}^l \sqrt{\sigma+1}\\
&\leq (l-k)(1 - \eta + \gamma + 2\sqrt \gamma \sqrt{\sigma+1})+\sigma.\\
&= (l-k) + \sigma
\end{align}
This gives us
\begin{align}
\text{Vol}(\cS_n(\sigma, 1 - \eta) \leq \text{Vol}(\cS_{n, \gamma}(\sigma, 1)) \leq \text{Vol}(\cS_n(\sigma, 1)),\\
\end{align}
implying that
\begin{align}
v_1\left(\frac{\sigma}{1 - \eta}\right) + \frac{1}{2} \log (1 - \eta) \leq v_{1, \gamma}(\sigma) \leq v_1(\sigma).
\end{align}
By the continuity of $v_1(\sigma)$,  we see that choosing a $\gamma$ (and consequently an $\eta$) small enough will give a value of $v_{1, \gamma}(\sigma)$ that is as close as desired to $v_1(\sigma)$.
\end{proof}

Lemma \ref{lemma: use gamma} ensures that a numerical method which can closely approximate $v_{1,\gamma}(\sigma)$ can also be used to closely approximate $v_1(\sigma)$ for small values of $\gamma$. Henceforth, we focus our attention on calculating $v_{1,\gamma}(\sigma)$. As noted in Section \ref{section: finding v}, we exploit the idea of battery state. Given $(x_1, \dots, x_n) \in \cS_{n,\gamma}(\sigma, 1)$, define
\begin{equation}\label{eq: phi1}
\phi_n = 
\begin{cases}
\sigma_n &\text{ if } \sigma_n < \sigma,\\
\sigma_{n-1} + 1 - x_n^2 &\text{ if } \sigma_n = \sigma.\\
\end{cases}
\end{equation}
Setting $\phi_0 = \sigma$, equation \eqref{eq: phi1} can also be written as
\begin{equation}\label{eq: phi2}
\phi_{n} = 
\begin{cases}
\phi_{n-1} + 1 - x_n^2 &\text{ if } \phi_{n-1} < \sigma,\\
\sigma + 1 - x_n^2 &\text{ if } \phi_{n-1} \geq \sigma.\\
\end{cases}
\end{equation}

Define the function $\Phi_n: \cS_{n,\gamma}(\sigma, 1) \to \mathbb R$ such that $\Phi_n(x_1, \dots, x_n) = \phi_n$. Let $\lambda_n$ be the Lebesgue measure restricted to $\cS_{n,\gamma}(\sigma, 1)$. Let $\nu_n$ be the measure induced by $\Phi_n$ on $\mathbb R$. In Lemma \ref{lemma: absolute} below, we show the following:

\begin{lemma}\label{lemma: absolute}
The measure $\nu_n$ is absolutely continuous with respect to the Lebesgue measure on $\mathbb R$.
\end{lemma}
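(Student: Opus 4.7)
The approach is to fix an arbitrary Lebesgue-null Borel set $B \subseteq \mathbb{R}$ and show that $\lambda_n\bigl(\Phi_n^{-1}(B)\bigr) = 0$, by integrating out the last coordinate $x_n$ using Tonelli/Fubini. The crucial leverage is that on $\cS_{n,\gamma}(\sigma,1)$ we have $x_n^2 \geq \gamma > 0$, so the map $x_n \mapsto x_n^2$ has Jacobian bounded away from zero on the relevant domain; this is precisely the reason for introducing the parameter $\gamma$ in the first place, and explains why the truncated set is easier to handle than $\cS_n(\sigma,1)$ itself.

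First I would write $\lambda_n = \lambda_{n-1} \otimes \lambda$ and condition on $(x_1,\dots,x_{n-1})$. For each fixed $(x_1,\dots,x_{n-1}) \in \cS_{n-1,\gamma}(\sigma,1)$, the value $\phi_{n-1}$ is determined, and equation \eqref{eq: phi2} shows that $\phi_n$ depends on $x_n$ only through
\[
\phi_n = c(\phi_{n-1}) - x_n^2, \qquad c(\phi_{n-1}) := \begin{cases} \phi_{n-1}+1, & \phi_{n-1}<\sigma,\\ \sigma+1, & \phi_{n-1}\geq\sigma.\end{cases}
\]
Thus the $x_n$-slice of $\Phi_n^{-1}(B)$ is the set
\[
\bigl\{x_n \in \mathbb{R} : \sqrt{\gamma} \leq |x_n| \leq \sqrt{c(\phi_{n-1})},\ x_n^2 \in c(\phi_{n-1}) - B\bigr\}.
\]
Let $B' := (c(\phi_{n-1}) - B)\cap[\gamma,\sigma+1]$, which is Lebesgue-null in $\mathbb{R}$ since $B$ is. The Lebesgue measure of the slice is bounded by
\[
2\int_{B'} \frac{1}{2\sqrt{t}}\,dt \leq \frac{1}{\sqrt{\gamma}}\,\lambda(B') = 0,
\]
where the first inequality uses the change of variables $t=x_n^2$ and the bound $|x_n| \geq \sqrt{\gamma}$, and the second uses that $t\mapsto 1/\sqrt{t}$ is bounded on $[\gamma,\sigma+1]$.

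Once every $x_n$-slice has Lebesgue measure zero, Tonelli's theorem gives $\lambda_n(\Phi_n^{-1}(B)) = 0$, hence $\nu_n(B) = 0$, establishing absolute continuity. The main (very mild) obstacle is bookkeeping the constraint set: one must check that intersecting the slice with the defining inequalities of $\cS_{n,\gamma}(\sigma,1)$ only shrinks it, so the bound above still applies; this is immediate since we took an upper bound on the unconstrained slice. Note that without the $\gamma$-truncation the density $1/\sqrt{t}$ would blow up at $t=0$ and the argument above would fail, which is consistent with why Lemma \ref{lemma: use gamma} is needed earlier to reduce the general problem to the truncated one.
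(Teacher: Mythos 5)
Your proof is correct, and it takes a genuinely different route from the paper's. The paper proceeds by induction on $n$: it explicitly computes the distribution function $F_n(\phi) = \nu_n((-\infty,\phi])$, shows it is Lipschitz (hence absolutely continuous), differentiates it to obtain a density $f_n$, and carries the induction forward by verifying that $f_n$ stays continuous and bounded on $[0,\sigma+1-\gamma]$ so that differentiation under the integral sign in the next step is legitimate. You instead verify absolute continuity directly from its definition—null sets are sent to null sets—by conditioning on $(x_1,\dots,x_{n-1})$, observing that $\phi_n = c(\phi_{n-1}) - x_n^2$ depends on $x_n$ only through $x_n^2$, and using the uniformly bounded Jacobian of $t = x_n^2$ on $\{|x_n|\geq\sqrt\gamma\}$ together with Tonelli. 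Your approach avoids induction and explicit density computation and is the cleaner argument for the lemma as stated; your remark about why the $\gamma$-truncation is needed (the $1/\sqrt t$ Jacobian blows up at $t=0$) correctly identifies the role of the truncation. What the paper's heavier proof buys, and yours does not, is the recursion $f_{n+1} = Af_n$ of equation \eqref{eq: fn1} together with the continuity and boundedness of each $f_n$, which is exactly the structure the numerical method of Section \ref{section: finding v} runs on; with your argument one would still need a separate step (essentially the paper's computation, now justified by knowing densities exist) to produce that recursion.
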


\begin{proof}
We first calculate $\nu_1$.  Define 
$$F_1(\phi_1) = \nu_1((-\infty, \phi_1]).$$
We have the relation  $\phi_1 = \sigma + 1 - x_1^2$, where $x_1$ has the Lebesgue measure on $\cS_{1, \gamma}(\sigma, 1)$: $[-\sqrt{\sigma+1}, -\sqrt \gamma] \cup [\sqrt{\sigma+1}, \sqrt \gamma]$. It is easy to see that
\begin{equation}
F_1(\phi) = 
\begin{cases}
0 &\text{ for } \phi < 0 \\
2(\sqrt{\sigma+1} - \sqrt{\sigma+1-\phi}) &\text{ for } 0 \leq \phi \leq \sigma + 1 - \gamma\\
2(\sqrt{\sigma + 1} - \sqrt \gamma) &\text{ for } \sigma + 1 - \gamma < \phi.
\end{cases}
\end{equation}
Observe that $F_1$, being Lipshitz, is an absolutely continuous function. This implies that the measure $\nu_1$ is absolutely continuous with respect to the Lebesgue measure on $\mathbb R$ and possesses a Radon-Nikodym derivative $f_1$, which equals the derivative of $F_1$ almost everywhere. We set $f_1$ as follows:
\begin{equation}
f_1(\phi) = 
\begin{cases}
0 &\text{ for } \phi < 0 \\
\frac{1}{\sqrt{\sigma+1-\phi}} &\text{ for } 0 \leq \phi \leq \sigma + 1 - \gamma\\
0 &\text{ for } \sigma + 1 - \gamma < \phi.
\end{cases}
\end{equation}
We note that $f_1$ is continuous and bounded on the closed interval $[0, \sigma+1-\gamma]$. Our proof now proceeds by induction. We assume that the measure $\nu_n$ admits a density $f_n$, which is continuous and bounded on the closed interval $[0, \sigma+1-\gamma]$, and prove that $\nu_{n+1}$ has a density $f_{n+1}$ which is continuous and bounded on $[0, \sigma+1-\gamma]$.  
\smallskip

Define $$F_{n+1}(\phi) = \nu_{n+1}((-\infty, \phi]).$$ Since $\nu_n$ is supported on $[0, \sigma+1-\gamma]$, we can use the expression in \eqref{eq: phi2} to conclude the same about $\nu_{n+1}$ and restrict our attention to $0 \leq \phi \leq \sigma+1-\gamma$. For $\phi$ in this range, we use relation \eqref{eq: phi2} and express $F_{n+1}$ in terms of $f_n$ as follows:
\begin{align}
F_{n+1}(\phi) =& \int_{x = 0}^{\phi - 1 + \gamma} \int_{t = \gamma}^{x+1}\frac{f_n(x)}{\sqrt t} dt dx \nonumber\\
 +& \int_{x = \phi+1-\gamma}^{\sigma} \int_{t = x - (\phi - 1)}^{ x+1}\frac{f_n(x)}{\sqrt t} dt dx \nonumber\\
 +& \int_{x = \sigma}^{\sigma+1-\gamma} \int_{t = \sigma - (\phi -1)}^{\sigma+1}\frac{f_n(x)}{\sqrt t} dt dx\\
 =& \int_{x=0}^{\phi-1+\gamma} 2f_n(x)[\sqrt{x+1} - \sqrt{\gamma}]dx \nonumber\\
+& \int_{x=\phi-1+\gamma}^\sigma 2f_n(x)[\sqrt{x+1}-\sqrt{x-(\phi-1)}]dx \nonumber\\
+& \int_{x = \sigma}^{\sigma+1-\gamma} 2f_n(x)[\sqrt{\sigma+1} - \sqrt{\sigma-(\phi-1)}]dx.
\end{align}
From the induction assumption of continuity and boundedness of $f_n$, it is easy to check that $F_{n+1}$ is Lipshitz and therefore absolutely continuous. This implies that $\nu_{n+1}$ permits a density, which is equal to the derivative of $F_n$ almost everywhere. We can evaluate this density by differentiating $F_{n+1}$ with respect to $\phi$. This involves differentiating under the integral sign, and the conditions for doing so are seen to be satisfied because of the continuity and boundedness of $f_n$ and the square root function. We then get
\begin{align}\label{eq: fn1}
f_{n+1}(\phi) = \int_{\phi-1+\gamma}^\sigma \frac{f_n(x)}{\sqrt{x - (\phi - 1)}} dx + \int_\sigma^{\sigma+1 -\gamma} \frac{f_n(x)}{\sqrt{\sigma - (\phi -1)}} dx,
\end{align}
which is supported on, and is bounded and continuous on, the interval $[0, \sigma+1-\gamma]$.
\end{proof}

Equation \eqref{eq: fn1} in the proof of Lemma \ref{lemma: absolute} describes the evolution of $f_n$ as the dimension $n$ increases. Let $C([0, \sigma+1-\gamma]$ be the set of continuous functions defined on the interval $[0, \sigma+1-\gamma]$.  Define the integral operator $A: C([0, \sigma+1-\gamma] \to C([0, \sigma+1-\gamma]$ as follows:
\begin{equation}
A(x,t) = 
\begin{cases}
\frac{1}{\sqrt{x+1-t}} &\text{ if } 0 \leq x < \sigma, ~ 0 \leq t \leq  x+1 -\gamma~,\\ 
 \frac{1}{\sqrt{\sigma+1-t}} &\text{ if } \sigma \leq x \leq \sigma+1-\gamma,~ 0 \leq t \leq \sigma+1-\gamma~,\\
 0 &\text{ otherwise. }
\end{cases} 
\end{equation}
We can express equation \eqref{eq: fn1} in another form,
\begin{equation}\label{eq: fn2}
f_{n+1}(t) = \int A(x,t)f_n(x)dx.
\end{equation}
We denote this $f_{n+1} = A(f_n)$. Iterating this relation, we obtain
\begin{equation}\label{eq: fn3}
f_{n+1} = A^n f_1.
\end{equation}
We make three crucial observations. Firstly, the kernel $A$ is bounded and piecewise continuous with the discontinuities confined to a single curve $t = x+1-\gamma$. It is also immediate that the spectral radius of $A$, defined by
$$r(A) = \sup_{||f||=1} ||Af||$$ 
is such that $r(A) > 0$. We use Theorem 2.13 from Anselone \cite{anselone1971} to obtain that such an operator $A$ is compact. In addition, we can apply the Krein Rutman theorem  from Schaefer \cite{schaefer1999} to establish that $r(A)$ is an eigenvalue with a positive eigenvector $u \in C([0, \sigma+1-\gamma] \setminus 0$.

Secondly, we have $$\nu_n([0, \sigma+1-\gamma]) = \int_{x = 0}^{\sigma+1-\gamma} f_n(x) dx = \text{Vol}(\cS_{n,\gamma}(\sigma, 1).$$
Thus we have
\begin{align}
v_{1,\gamma}(\sigma) &= \lim_{n\to\infty} \frac{1}{n} \log \text{Vol}(\cS_{n,\gamma}(\sigma, 1))\\
&= \lim_{n\to\infty} \frac{1}{n} \log \int_{x = 0}^{\sigma+1-\gamma} f_n(x) dx\\
&= \lim_{n\to\infty} \frac{1}{n} \log \int_{x = 0}^{\sigma+1-\gamma}A^{n-1}f_1(x) dx\\
&\stackrel{(a)}= r(A)
\end{align}
where $(a)$ follows because the projection of $f_1$ in the direction of $u$ is nonzero owing to the positivity of both these functions.

Thirdly, define a sequence of operators $\{A_n\}$ as discrete approximations of $A$ as follows. Let $h_n = \frac{\sigma+1-\gamma}{n}$, 
$$A_nf(t) = \sum_{j=0}^n A(jh_n, t)f(jh_n)h_n.$$
Using Theorem 2.13 from Anselone \cite{anselone1971} once more, we conclude that the sequence of operators $\{A_n\}$ is collectively compact and that $||A_n|| \to ||A||$. We can now use existing numerical techniques to find $r(A_n)$, which will provide an approximation to $r(A)$. The spectral radius $r(A)$ equals $v_{1, \gamma}(\sigma)$, which closely approximates $v_1(\sigma)$, and validates the numerical procedure as described in Section \ref{section: finding v}.

\section{Proofs for Section \ref{section: cube}}\label{proofs: section: cube}

\subsection{Proof of Lemma \ref{lemma: limit cube}}\label{proof: lemma: limit cube}
Denote $A_n = [-A,A]^n$, and $B_n = B_n(\sqrt{n\nu})$. Let $C_n =A_n \oplus B_n$. Note that for any $m,n \geq 1$
\begin{align}
B_n(\sqrt{n\nu}) \times B_m(\sqrt{m\nu}) &\subseteq B_{m+n}(\sqrt{(m+n)\nu})\\
[-A,A]^n \times [-A,A]^m &= [-A,A]^{m+n}.
\end{align}
It follows that 
\begin{align}
C_m \times C_n &= (A_m \oplus B_m) \times (A_n \oplus B_n)\\
&=(A_m \times A_n) \oplus (B_m \times B_n)\\
&\subseteq A_{m+n} \oplus B_{m+n}\\
&= C_{m+n}.
\end{align}
This implies
\begin{equation}
\text{Vol}(C_{m+n}) \geq \text{Vol}(C_m)\text{Vol}(C_m),
\end{equation}
which immediately implies existence of the limit $\lim_{n \to \infty} \frac{1}{n} \log \text{Vol}(C_n)$, which equals $\ell(\nu)$ as defined in equation \eqref{eq: def ell cube}. To show this limit is finite, we note that $A_n \subseteq B_n(\sqrt{nA^2})$. Thus $C_n \subseteq B_n(\sqrt{n}(\sqrt{\nu}+A))$, which gives $$\ell(\nu) \leq \frac{1}{2}\log 2\pi e (\sqrt \nu + A)^2 < \infty.$$

\subsection{Proof of Lemma \ref{lemma: thetan}}\label{proof: lemma: thetan}
 We have the trivial bounds
\begin{align}
&\frac{\text{Vol}([-A,A]^n \oplus B_n(\sqrt{n\nu}))}{n+1} \leq e^{nf^\nu_n(\hat \theta_n)} \leq  \text{Vol}([-A,A]^n \oplus B_n(\sqrt{n\nu})),
\end{align}
which implies
\begin{align}
&\frac{1}{n} \log \text{Vol}([-A,A]^n \oplus B_n(\sqrt{n\nu})) - \frac{\log (n+1)}{n} \leq {f^\nu_n(\hat \theta_n)} \leq \frac{1}{n} \log \text{Vol}([-A,A]^n \oplus B_n(\sqrt{n\nu})).
\end{align}
Taking the limit in $n$ and using Lemma \ref{lemma: limit cube} we see that
\begin{equation}
\lim_{n \to \infty} f^\nu_n(\hat \theta_n) = \ell(\nu).
\end{equation}

\subsection{Proof of Lemma \ref{lemma: concave fn}}\label{proof: lemma: concave fn}
We first prove pointwise convergence.
Looking at equation (\ref{eq: fntheta}), we see that all we need to prove is that for all $\theta \in [0,1]$,
\begin{equation}\label{eq: limit for fntheta}
\lim_{n \to \infty} \frac{1}{n} \log \left( \frac{\Gamma(n+1)n^{n\theta/2}}{\Gamma(n(1-\theta)+1)\Gamma(n\theta+1)\Gamma(n\theta/2 + 1)}\right) = H(\theta) + \frac{\theta}{2} \log 2e - \frac{\theta}{2} \log \theta.
\end{equation}

For $\theta = 0$, we can easily check the validity of this statement. Let $\theta > 0$. We use the approximation $$\log \Gamma(z) = z\log z - z + \log \frac{z}{2\pi} + o(z).$$ 
\begin{align}
&\frac{1}{n} \log \left( \frac{\Gamma(n+1)n^{n\theta/2}}{\Gamma(n(1-\theta)+1)\Gamma(n\theta+1)\Gamma(n\theta/2 + 1)}\right) = \nonumber \\
&\frac{1}{n}\bigg((n+1)\log \frac{n+1}{e} + \frac{n\theta}{2} \log n - (n(1-\theta)+1)\log \frac{n(1-\theta)+1}{e} - (n\theta+1)\log \frac{n\theta+1}{e} \nonumber \\
&- (n\theta/2+1) \log \frac{n\theta/2 + 1}{e} + o(n)\bigg)~.
\end{align}
Using $(x+1)\log (x+1) = x \log x + o(x)$, we can simplify the above to get
\begin{align}
&\frac{1}{n}\left( n\log n + \frac{n\theta}{2} \log n - n\bar\theta \log n\bar\theta - n\theta \log n\theta - (n\theta/2) \log n\theta/2e +  o(n) \right)~,\\
&= \frac{1}{n} \left(n H(\theta) - (n\theta/2)\log (\theta/2e) + o(n) \right)~.
\end{align}
Taking the limit as $n \to \infty$, we establish equality (\ref{eq: limit for fntheta}). 

To show uniform convergence, we first observe that the functions $f^\nu_n(\cdot)$ are concave. This concavity is immediately evident from the log-convexity of the $\Gamma$ function and from equation (\ref{eq: fntheta}). Therefore, $\{f^\nu_n\}$ are concave functions converging pointwise to a continuous functions $f^\nu$ on $[0,1]$. Uniform convergence now follows from Lemma \ref{lemma: convex}.

\subsection{Proof of Lemma \ref{lemma: limfn}}\label{proof: lemma: limfn}
 By Lemma \ref{lemma: concave fn}, the sequence of functions $\{f^\nu_n\}$ converges to $f^\nu$ uniformly. This uniform convergence implies that the family of functions $\{f^\nu_n\}$ is equicontinuous \cite{royden2011} (Section 10.1, Theorem 3, pg. 209). Let $\epsilon > 0$. Choose $N$ large such that $|f^\nu_n(x) - f^\nu_n(y)| < \epsilon/2$ if $|x-y| < 1/N$. This implies that for all $n > N$, 
\begin{equation}
\max_\theta f^\nu_n(\theta) \geq f^\nu_n(\hat \theta_n) > \max_\theta f^\nu_n(\theta) - \epsilon/2.
\end{equation}

Using the uniform convergence of $\{f^\nu_n\}$, we choose $M$ large enough such that $||f^\nu-f^\nu_n||_\infty < \epsilon/2$ for all $n > M$. Let $L = \max(M,N)$. For all $n > L$, we have
$$\max_\theta f^\nu(\theta) + \epsilon/2 > \max_\theta f^\nu_n(\theta) \geq f^\nu_n(\hat\theta_n) \geq  \max_\theta f^\nu_n(\theta) - \epsilon/2 \geq \max_\theta f^\nu(\theta) - \epsilon,$$
and thus
$$|f^\nu_n(\hat \theta _n) - \max_\theta f^\nu(\theta)| < \epsilon.$$
This concludes the proof of equation \eqref{eq: limfn1}. By Lemma \ref{lemma: thetan}, we immediately have the equality \eqref{eq: limfn2}.

\subsection{Proof of Theorem \ref{thm: AN bound}}\label{proof: thm: AN bound}

Let $\epsilon > 0$. Let $\{X_i\}_{i=1}^n$ and $\{Z_i\}_{i=1}^n$ be $n$ i.i.d copies of $X$ and $Z$ respectively. Let $\delta_n$ be given by 
\begin{equation}
\delta_n := \mathbb P \left( X^n + Z^n \notin [-A,A]^n \oplus B_n(\sqrt{n(\nu+\epsilon)}~)\right)
\end{equation}
Denote $$C_n := [-A,A]^n \oplus B_n(\sqrt{n(\nu+\epsilon)}~).$$
By the law of large numbers, the probability $\delta_n \to 0$.\\

Let $Y := X+Z$. We have
\begin{align}
nh(Y) = h(Y^n) &= H(\delta_n) + (1-\delta_n)h(Y^n | Y^n \in C_n) + \delta_n h(Y^n | Y^n \notin C_n)\\
&\leq H(\delta_n) + (1- \delta_n) \log \text{Vol}(C_n) + \delta_n h(Y^n | Y^n\notin C_n) \label{eq: hyn}.
\end{align}
Let $\hat Y^n \sim p(Y^n | Y^ n \notin C_n)$. We have following bound on $Y^n$
\begin{align}
E[ ||Y^n||^2 ] \leq n(\nu + A^2).
\end{align}
This translates to a bound on $\hat Y^n$
\begin{align}
E[ || \hat Y^n ||^2 ] &\leq \frac{n(\nu + A^2)}{\delta_n},
\end{align}
which implies
\begin{align}
h(\hat Y^n) &\leq \frac{n}{2} \log\frac{ 2\pi e (\nu+A^2)}{\delta_n}.
\end{align}
Substituting in inequality \eqref{eq: hyn},
\begin{align}
h(Y^n) \leq H(\delta_n) + (1- \delta_n) \log \text{Vol}(C_n) + \delta_n \frac{n}{2} \log\frac{ 2\pi e (\nu+A^2)}{\delta_n}
\end{align}
which implies
\begin{align}
h(Y) \leq \frac{H(\delta_n)}{n} +  (1- \delta_n)\frac{ \log \text{Vol}(C_n)}{n} + \frac{\delta_n}{2} \log\frac{ 2\pi e (\nu+A^2)}{\delta_n}.
\end{align}
Taking the limit in $n$, we get
\begin{align}
h(Y) \leq \ell(\nu + \epsilon).
\end{align}
As this holds for any choice of $\epsilon$, we let $\epsilon$ tend to $0$ and use the continuity from Theorem \ref{thm: leconte} to arrive at
\begin{equation}
h(Y) \leq \ell(\nu).
\end{equation}

\section{Proofs for Section \ref{section: sn}}
\subsection{Proof of Lemma \ref{lemma: sn is convex}}\label{proof: lemma: sn is convex}
Let $x^n, y^n \in \cS_n(\sigma, \rho)$ and let $z^n = \lambda x^n + (1 - \lambda) y^n$. By Jensen's inequality we have for every $1 \leq i \leq n$,
$$z_i^2 \leq \lambda x_i^2 + (1-\lambda)y_i^2.$$
Since both $x^n$ and $y^n$ both satisfy (\ref{eq: def}), the above inequality gives us that $z^n$ does so too; i.e., $z^n \in \cS_n(\sigma, \rho)$. 

\subsection{Proof of Lemma \ref{lemma: gn to lambda}}\label{proof: lemma: gn to lambda}
The sets $\{\cS_n(\sigma, \rho)\}$ satisfy the containment
\begin{equation}
\cS_{m+n} \subseteq \cS_m \times \cS_n  \text{ ~for every~} m,n \geq 1.
\end{equation}
This implies that the family of intrinsic volumes $\{\mu_n(\cdot)\}_{n\geq 1}$, is \emph{sub-convolutive}; i.e., it satisfies the following condition:
\begin{equation}\label{eq: sn define subc1}
\mu_m \star \mu_n \geq \mu_{m+n} \text{ for every } m,n \geq 1.
\end{equation}
Noting that $\mu_n(n)$ is the volume of $\cS_n(\sigma, \rho)$, and $\mu_n(0) = 1$ for all $\cS_n$, we can check that the sequence $\{\mu_n(\cdot)\}$ satisfies the assumptions $\mathbf{(A), (B)}$ and $\mathbf{(C)}$ detailed in Appendix \ref{appendix: subc}; namely,
\begin{align*}
&\mathbf{(A):} ~\alpha := \lim_{n \to \infty} \frac{1}{n} \log \mu_n(n) \text{~~~ is finite.}\\
&\mathbf{(B):} ~\beta := \lim_{n \to \infty} \frac{1}{n} \log \mu_n(0) \text{~~~ is finite.}\\
&\mathbf{(C):} \text{~For all } n,~ \mu_n(n) >0, \mu_n(0)>0.
\end{align*}
Lemma \ref{lemma: gn to lambda} then follows from the results in Appendix \ref{appendix: subc}, in particular Lemma \ref{lemma: lambda sub}.

\subsection{Proof of Lemma \ref{lemma: alex}}\label{proof: lemma: alex}
Note that the claims in points $1$ and $2$ immediately imply $3$, since $f^\nu_n = a_n + b^\nu_n$. 

We shall prove $2$ first. The expression for $b^\nu_n(\theta)$ is given by
\begin{equation}
b^\nu_n(\theta) = \frac{1}{n} \log \frac{\pi^{n\theta/2}}{\Gamma(n\theta/2 + 1)}(n\nu)^{n\theta/2}~~\text{for}~~\theta \in [0,1].
\end{equation}
Since the Gamma function is log-convex \cite{boyd2009convex} (Exercise 3.52), we see that $b^\nu_n(\cdot)$ is a concave function. 

To show $1$, note that all we need to prove is that
\begin{equation}
a_n\left( \frac{j}{n} \right) \geq \frac{ a_n\left(\frac{j-1}{n} \right) + a_n\left( \frac{j+1}{n}\right) }{2} \text{~~for all~~} 1 \leq j \leq n-1,
\end{equation}
as $a_n$ is a linear interpolation of the values at $\frac{j}{n}$. This is equivalent to proving
\begin{equation}
\mu_n(j)^2 \geq \mu_n(j-1)\mu_n(j+1) \text{~~for all~~} 1 \leq j \leq n-1.
\end{equation}
This is an easy application of the Alexandrov-Fenchel inequalities for mixed volumes. For a proof we refer to McMullen \cite{mcmullen1991inequalities}, where in fact the author obtains 
$$\mu_n(j)^2 \geq \frac{j+1}{j}\mu_n(j-1)\mu_n(j+1).$$

\subsection{Proof of Lemma \ref{lemma: sn upper lower bound}}\label{proof: lemma: sn upper lower bound}
As noted in Appendix \ref{proof: lemma: gn to lambda}, the family of intrinsic volumes $\{\mu_n(\cdot)\}_{n\geq 1}$, is sub-convolutive and it satisfies the assumptions $\mathbf{(A)}, \mathbf{(B)}$, and $\mathbf{(C)}$ detailed in Appendix \ref{appendix: subc}. Part $1$ of Lemma \ref{lemma: sn upper lower bound} is now an immediate consequence of Theorem \ref{thm: subc upper bound}.

To prove part $2$, let $F \subseteq \mathbb{R}$ be an open set. We assume that $F \cap [0,1]$ is nonempty, since the otherwise the result is trivial. We will construct a new sequence of functions $\{\hat\mu_n\}$ such that $\mu_n \geq \hat \mu_n$ for all $n$; i.e., $\mu_n$ pointwise dominates $\hat \mu_n$ for all $n$. The large deviations lower bound for the sequence  $\{\hat\mu_n\}$ will then serve as a large deviations lower bound for the sequence $\{\mu_n\}$.

For notational convenience, we write $\cS_n$ for $\cS_n(\sigma, \rho)$ in this proof. Fix an $a \geq 1$. Let $\gamma = \lceil \frac{\sigma}{\rho} \rceil$. Let 
$$\hat \cS_{a+\gamma} = \{x^{a+\gamma} \in \mathbb{R}^{a+\gamma} | x^a \in \cS_a, x_{a+1}^{a+\gamma} = \mathbf{0} \}.$$
For all $k \geq 0$, the $k^\text{th}$ intrinsic volume of a convex body is independent of the ambient dimension \cite{klainrota}. Thus, for $0 \leq k \leq a$, the $k^\text{th}$ intrinsic volume of $\hat \cS_{a+\gamma}$ is exactly  the same as that of $\cS_a$. For $ a+1 \leq k \leq a+\gamma$, the $k^\text{th}$ intrinsic volume of $\hat \cS_{a+\gamma}$ equals $0$. The sequence of intrinsic volumes of $\hat S_{a+\gamma}$ may therefore be considered to be simply $\mu_a$. In addition, note that for all $m \geq 1$,
$$\underbrace{\hat \cS_{a+\gamma} \times \cdots \times \hat \cS_{a+\gamma}}_{m} \subseteq \cS_{m(a+\gamma)},$$
which implies 
$$\underbrace{\mu_a \star \cdots \star \mu_a}_{m} \leq \mu_{m(a+\gamma)}.$$
This leads us to define the new sequence $\hat \mu_n$ as
$$\hat \mu_n = \underbrace{\mu_a \star \cdots \star \mu_a}_{\lfloor \frac{n}{a+\gamma}\rfloor} := \mu_a^{\star \lfloor \frac{n}{a+\gamma}\rfloor}.$$
Clearly $\hat \mu_n \leq \mu_n$. Define $\hat G_n(t)$ as follows,
$$\hat G_n(t) = \log \sum_{j=0}^n \hat\mu_n(j)e^{jt},$$
and consider the limit
\begin{align}\label{eq: sn gnhat}
\lim_{n \to \infty} \frac{1}{n} \hat G_n(t) &= \lim_{n \to \infty} \frac{1}{n}\lfloor \frac{n}{a+\gamma}\rfloor G_a(t)\\
&= \frac{G_a(t)}{a+\gamma}.
\end{align}
Applying  the G\"{a}rtner-Ellis theorem, stated in  Theorem \ref{thm: gartner ellis}, for $\{\hat \mu_n\}$ and noting that $\frac{G_a(t)}{a+\gamma}$ is differentiable, we get the lower bound  
\begin{align*}
&\liminf_{n \to \infty} \frac{1}{n} \log \hat\mu_{n/n}(F) \geq -\inf_{x \in F} \left( \frac{G_a(t)}{a+\gamma}\right)^*(x),\\
\end{align*}
which implies 
\begin{align*} &\liminf_{n \to \infty} \frac{1}{n} \log \mu_{n/n}(F) \geq -\inf_{x \in F} \frac{a}{a+\gamma}g_a^*\left(\frac{a+\gamma}{a}x \right).
\end{align*}
We claim that $\inf_{x \in F}\frac{a}{a+\gamma}g_a^*\left(\frac{a+\gamma}{a}x \right)$ converges to $\inf_{x \in F} \Lambda^*(x)$. Let $\epsilon > 0$. We can rewrite the infimum as
\begin{align*}
\inf_{x \in F}\frac{a}{a+\gamma}g_a^*\left(\frac{a+\gamma}{a}x \right) = \inf_{y \in \frac{a+\gamma}{a} F} \frac{a}{a+\gamma}g_a^*(y). 
\end{align*}
Using Theorem \ref{theorem: subc gnstar}, we know that $\{g_n^*\}$ converges uniformly $\Lambda^*$ over $[0,1]$. By the converse of the Arzela-Ascoli theorem, we have that $g_n^*$ are uniformly bounded and equicontinuous. Let $\delta > 0$ be such that 
\begin{align}
|\Lambda^*(x) - \Lambda^*(y)| < \epsilon/3 \text{ whenever } |x-y| < \delta.
\end{align}
Let $M$ be a uniform bound on $|g_n^*(\cdot)|$. Choose $A_0$ such that for all $a > A_0$, 
\begin{align}\label{eq: A0}
\frac{\gamma}{a+\gamma}M < \epsilon/3.
\end{align}
Choose $A_1$ such that for all $a > A_1$, 
\begin{align}\label{eq: A1}
||g_a^*  - \Lambda^*||_{\infty} < \epsilon/3.
\end{align}
Choose $A_2$ such that for all $a>A_2$,
\begin{align}\label{eq: A2}
\frac{\gamma}{a+\gamma} < \delta.
\end{align}
Choose $A_3$ such that for all $a >A_3$,
\begin{align}\label{eq: A3}
\frac{a+\gamma}{a}F \cap [0,1] \neq \phi.
\end{align}
Now for all $a > \max(A_0, A_1, A_2, A_3)$, 
\begin{align*}
\Bigg|\inf_{y \in \frac{a+\gamma}{a} F} \frac{a}{a+\gamma}g_a^*(y) - \inf_{y \in F} \Lambda^*(y) \Bigg| &\leq \Bigg| \inf_{y \in \frac{a+\gamma}{a} F} \frac{a}{a+\gamma}g_a^*(y)- \inf_{y \in \frac{a+\gamma}{a} F}g_a^*(y)\Bigg|\\
 &+ \Bigg|\inf_{y \in \frac{a+\gamma}{a} F}g_a^*(y) - \inf_{y \in \frac{a+\gamma}{a} F} \Lambda^*(y) \Bigg|\\
 &+ \Bigg| \inf_{y \in \frac{a+\gamma}{a} F} \Lambda^*(y) - \inf_{y \in  F} \Lambda^*(y)\Bigg|\\
 &\stackrel{(a)}< \epsilon/3 + \epsilon/3 + \epsilon/3\\
 &= \epsilon.
\end{align*}
By the relation (\ref{eq: A3}), all the infimums involved in the above sequence of inequalities are finite. In step $(a)$,  the first term is less that $\epsilon/3$ by inequality (\ref{eq: A0}), the second term is less that $\epsilon/3$ by inequality (\ref{eq: A1}), and the last term is less that $\epsilon/3$ by inequality (\ref{eq: A2}). This completes the proof of part $2$ of Lemma \ref{lemma: sn upper lower bound}, and thus completes the proof of Lemma \ref{lemma: sn upper lower bound}.

\subsection{Proof of Lemma \ref{lemma: sn linearized}}\label{proof: lemma: sn linearized}
Note that the claims in points $1$ and $2$ immediately imply $3$, since $f^\nu_n = a_n + b^\nu_n$. 

We'll first prove the claim in point $2$. We start by proving pointwise convergence of $\{b^\nu_n(\cdot)\}$. Recall the expression for $b^\nu_n(\theta)$,
$$b^\nu_n(\theta) = \frac{1}{n} \log \frac{\pi^{n\theta/2}}{\Gamma(n\theta/2 + 1)}(n\nu)^{n\theta/2}.$$
For $\theta = 0$, this convergence is obvious. Let $\theta > 0$. We use the approximation $$\log \Gamma(z) = z\log z - z + O(\log z),$$ 
and get that
\begin{align}
b^\nu_n(n\theta) &= \frac{1}{n}\left[ \frac{n\theta}{2}\log\pi n \nu - \frac{n\theta}{2}\log\frac{n\theta}{2e} + O(\log n\theta)\right]\\
&= \frac{1}{n} \left[ \frac{n\theta}{2} \log \frac{2\pi e \nu}{\theta} + O(\log n\theta)\right]\\
&= \frac{\theta}{2} \log \frac{2\pi e \nu}{\theta} + \frac{O(\log n\theta)}{n}.
\end{align}
Taking the limit as $n \to \infty$, the pointwise convergence of $b^\nu_n$ follows. Concavity of $b^\nu_n$ from point $2$ of Lemma \ref{lemma: alex}, combined with Lemma \ref{lemma: convex} then implies uniform convergence. \\

We shall now prove point $1$. We start by showing the pointwise convergence of $a_n(\theta)$ to $-\Lambda^*(1-\theta)$, or equivalently the convergence of $a_n(1-\theta)$ to $-\Lambda^*(\theta)$. Note that convergence at the boundary points is already known. Let $\theta_0 \in (0,1)$. For ease of notation, we denote 
\begin{align*}
\chi(\theta) &:= -\Lambda^*(\theta)\\
\bar a_n(\theta) &:= a_n(1-\theta).
\end{align*}
Note that $\bar a_n$ is linearly interpolated from its values at $j/n$, where $\bar a_n(j/n) = \frac{1}{n} \log \mu_n(j)$. 
Let $\epsilon > 0$ be given. The function $\chi$, being continuous on the bounded interval $[0,1]$, is uniformly continuous. Choose $\delta >0$ such that
$$|\chi(x) - \chi(y)| < \epsilon, \text{~~whenever~~} |x-y| < \delta.$$
Choose $N_0 > 1/(\delta/3)$, and divide the interval $[0,1]$ into the the $N_0$ intervals $I_j := \left[\frac{j}{N_0}, \frac{j+1}{N_0}\right]$ for $0 \leq j \leq N_0-1$. Note that each interval has length less than $\delta/3$. Without loss of generality, let $\theta_0$ lie in the interior of the $k$-th interval (we can always choose a different value of $N_0$ to make sure $\theta_0$ does not lie on the boundary of any interval). Thus,
$$\frac{k-1}{N_0} < \theta_0 < \frac{k}{N_0}.$$
Lemma \ref{lemma: sn upper lower bound} along with the continuity of $\chi$ imply that
\begin{equation}
\lim_{n \to \infty} \frac{1}{n} \log \mu_n(I_j) = \sup_{\theta \in I_j} \chi(\theta).
\end{equation}
For $n > 2/\min \left( \theta_0 - \frac{k-1}{N_0}, \frac{k}{N_0} - \theta_0\right)$, there exists an $i$ such that
\begin{equation}
\frac{k-1}{N_0} < \frac{i}{n} < \theta_0 < \frac{i+1}{n} < \frac{k}{N_0}.
\end{equation}
Thus for some $\lambda > 0$, we can write
\begin{equation}
\bar a_n(\theta_0) = \lambda\frac{1}{n}\log \mu_{n/n}(i/n) + (1-\lambda)\frac{1}{n} \log\mu_{n/n}((i+1)/n),
\end{equation}
and obtain the inequality 
\begin{align}
\bar a_n(\theta_0) &= \lambda\frac{1}{n}\log \mu_{n/n}(i/n) + (1-\lambda)\frac{1}{n} \log\mu_{n/n}((i+1)/n)\\
&\leq \max\left(\frac{1}{n}\log \mu_{n/n}(i/n), \frac{1}{n} \log\mu_{n/n}((i+1)/n)\right)\\
&\leq \frac{1}{n} \log \mu_{n/n}(I_k).
\end{align}
Thus we have the upper bound
\begin{align}\label{eq: an ub}
\limsup_n \bar a_n(\theta_0) &\leq \lim_{n\to \infty} \frac{1}{n} \log \mu_{n/n}(I_k)\\
 &= \sup_{\theta \in I_k} \chi(\theta)\\
 &\stackrel{(a)}\leq \chi(\theta_0) + \epsilon
\end{align}\\
where $(a)$ follows from the choice of $N_0$ and uniform continuity of $\chi$.\\

Define $$\hat \theta_n(j) = \arg \sup_{\frac{i}{n} \text{~s.t.~} \frac{i}{n} \in I_j} \mu_{n/n}\left(\frac{i}{n}\right).$$ As 
$$\mu_{n/n}(\hat \theta_n(j)) \leq \mu_{n/n}(I_j) \leq \left(\frac{n}{N_0}+2\right)\mu_{n/n}(\hat \theta_n(j)) \leq n\mu_{n/n}(\hat \theta_n(j)),$$ it is easy to see that 
\begin{align}
\lim_{n\to\infty}\frac{1}{n} \log \mu_{n/n}(\hat \theta_n(j)) &= \lim_{n \to \infty} \frac{1}{n} \log \mu_{n/n}(I_j)\\
&= \sup_{\theta \in I_j} \chi(\theta).
\end{align}
Note that 
$$\sup_{\theta \in I_j} \bar a_n(\theta) \geq \frac{1}{n} \log \mu_{n/n}(\hat\theta_n(j)).$$
This implies that for the intervals $I_{k-1}$ and $I_{k+1}$,
\begin{align}
\liminf_{n\to \infty} \left[\sup_{\theta \in I_{k-1}} \bar a_n(\theta)\right] \geq \sup_{\theta \in I_{k-1}} \chi(\theta) \geq \chi(\theta_0) - \epsilon\\
\liminf_{n\to \infty} \left[\sup_{\theta \in I_{k+1}}\bar a_n(\theta)\right] \geq \sup_{\theta \in I_{k+1}} \chi(\theta) \geq \chi(\theta_0) - \epsilon.
\end{align}
Since $\bar a_n(\theta)$ is concave, this implies 
\begin{align}
\bar a_n(\theta_0) &\geq \min (\sup_{\theta \in I_{k-1}} \bar a_n(\theta), \sup_{\theta \in I_{k+1}} \bar a_n(\theta)).
\end{align}
Taking the $\liminf$ on both sides,
\begin{align}\label{eq: an lb}
\liminf_{n \to \infty} \bar a_n(\theta_0) \geq \chi(\theta_0) - \epsilon.
\end{align}
Inequalities \eqref{eq: an ub} and \eqref{eq: an lb} prove the pointwise convergence of $\bar a_n(\theta_0)$ to $\chi(\theta_0)$.
Concavity of $a_n$ from point $2$ of Lemma \ref{lemma: alex}, combined with Lemma \ref{lemma: convex} then implies uniform convergence. 

\subsection{Proof of Lemma \ref{lemma: sn limfn}}\label{proof: lemma: sn limfn}
By Lemma  \ref{lemma: sn linearized}, the sequence of functions $\{f^\nu_n\}$ converges to $f^\nu$ uniformly. Using the converse of the Arzela-Ascoli theorem, this implies that the family of functions $\{f^\nu_n\}$ is equicontinuous. Let $\epsilon > 0$ be given. Choose $N$ large such that $|f^\nu_n(x) - f^\nu_n(y)| < \epsilon/2$ if $|x-y| < 1/N$. This implies that for all $n > N$, 
\begin{equation}
\max_\theta f^\nu_n(\theta) \geq f^\nu_n(\hat \theta_n) > \max_\theta f^\nu_n(\theta) - \epsilon/2.
\end{equation}

Using the uniform convergence of $\{f^\nu_n\}$, we choose $M$ large enough such that $||f^\nu-f^\nu_n||_\infty < \epsilon/2$ for all $n > M$. Let $L = \max(M,N)$. For all $n > L$, we have
$$\max_\theta f^\nu(\theta) + \epsilon/2 > \max_\theta f^\nu_n(\theta) \geq f^\nu_n(\hat\theta_n) \geq  \max_\theta f^\nu_n(\theta) - \epsilon/2 \geq \max_\theta f^\nu(\theta) - \epsilon,$$
and thus
$$|f^\nu_n(\hat \theta _n) - \max_\theta f^\nu(\theta)| < \epsilon.$$
This concludes the proof.

\subsection{Proof of Lemma \ref{lemma: sn thetan}}\label{proof: lemma: sn thetan}
Recall that
$$\text{Vol}(\cS_n(\sigma, \rho)^n \oplus B_n(\sqrt{n\nu})) = \sum_{j=0}^n e^{nf^\nu_n(j/n)}.$$
We have the trivial bounds
\begin{align}
e^{nf^\nu_n(\hat\theta_n)} \leq \text{Vol}(\cS_n(\sigma, \rho) \oplus B_n(\sqrt{n\nu})) \leq (n+1)e^{nf^\nu_n(\hat\theta_n)}
\end{align}
implying
\begin{align}
 f^\nu_n(\hat\theta_n) \leq \frac{1}{n} \log\text{Vol}(\cS_n(\sigma, \rho) \oplus B_n(\sqrt{n\nu})) \leq \frac{\log (n+1)}{n} + f^\nu_n(\hat\theta_n).
\end{align}
Taking the limit in $n$, we obtain
\begin{equation}
\lim_{n\to \infty} f^\nu_n(\hat\theta_n) = \ell(\nu).
\end{equation} 
An application of Lemma \ref{lemma: sn limfn} gives
\begin{equation}
\ell(\nu) = \sup_\theta f^\nu(\theta).
\end{equation}

\subsection{Proof of Lemma \ref{lemma: sn thetastar}}\label{proof: lemma: sn thetastar}

Recall the expression of $f^\nu(\theta)$:
\begin{align}
f^\nu(\theta) &= -\Lambda^* (1-\theta)+ \frac{\theta}{2}\log \frac{2\pi e \nu}{\theta}.
\end{align}
Suppose $\limsup_{\nu \to 0} \theta^*(\nu) = \eta > 0$. Choose a sequence $\{\nu_n\}$ such that 
\begin{align}
\lim_{n \to \infty} \nu_n &= 0\\
\theta^*(\nu_n) &> \frac{\eta}{2} \text{ for all } n \geq 1.
\end{align}
We have that for all $\nu > 0$, 
$$\ell(\nu) = \sup_\theta f^\nu(\theta) \geq f^\nu(0) = -\Lambda^*(1) = v(\sigma, \rho).$$
Thus,
\begin{align}
v(\sigma, \rho) &\leq \ell(\nu_n)\\
&= f^{\nu_n}(\theta^*(\nu_n))\\
&=  -\Lambda^*(\theta^*(\nu_n))+ \frac{\theta^*(\nu_n)}{2} \log \frac{2\pi e\nu_n}{\theta^*(\nu_n)}\\
&\leq \sup_{\theta}\left[-\Lambda^*(1-\theta) + \frac{\theta}{2} \log \frac{2\pi e}{\theta}\right] + \frac{\theta^*(\nu_n)}{2}\log \nu_n\\
&\stackrel{(a)} \leq C + \frac{\theta^*(\nu_n)}{2}\log \nu_n\\
&\stackrel{(b)}\leq C + \frac{\eta}{4} \log \nu_n
\end{align}
where in $(a)$, $C$ is a constant and in $(b)$ we assume $\log \nu_n < 0$. Taking the limit as $n \to \infty$, we get that
\begin{equation}
v(\sigma, \rho) \leq \lim_{n \to \infty} C + \frac{\eta}{4} \log \nu_n = -\infty,
\end{equation}
which is a contradiction. Thus, it must be that $\limsup_{\nu \to 0} \theta^*(\nu) = 0$.
\section{A convergence result for convex functions}\label{appendix: convex}
\begin{lemma}\label{lemma: convex}
Let $\{f_n\}$ be a sequence of continuous convex functions which converge point wise to a continuous function $f$ on an interval $[a, b]$. Then $f_n$ converge to $f$ uniformly.
\end{lemma}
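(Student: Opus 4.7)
The plan is to combine uniform continuity of the limit $f$ with two standard consequences of convexity of each $f_n$: the chord-above property (the chord joining two points of the graph of $f_n$ lies above $f_n$ between them) and its extrapolation counterpart (that chord, extended past either of its endpoints, lies below $f_n$ there). Fix $\epsilon > 0$. Since $f$ is continuous on the compact interval $[a,b]$ it is uniformly continuous, so I pick $\delta > 0$ with $|f(x)-f(y)| < \epsilon$ whenever $|x-y| < \delta$, and then a partition $a = x_0 < x_1 < \cdots < x_N = b$ of mesh at most $\delta$ with $N \ge 2$. Pointwise convergence at the (finitely many) partition points yields an $N_0$ with $|f_n(x_i) - f(x_i)| < \epsilon$ for every $i$ and every $n \ge N_0$.

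For the upper bound, given $x \in [x_i, x_{i+1}]$ and writing $x = \lambda x_i + (1-\lambda) x_{i+1}$, convexity gives $f_n(x) \le \lambda f_n(x_i) + (1-\lambda) f_n(x_{i+1}) \le \max(f_n(x_i), f_n(x_{i+1}))$, which together with the partition-point estimates and $|f(x_i)-f(x)|, |f(x_{i+1}) - f(x)| < \epsilon$ gives $f_n(x) \le f(x) + 2\epsilon$, uniformly in $x$ for every $n \ge N_0$.

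For the lower bound I use the extrapolation form of convexity. Given $x \in [x_i, x_{i+1}]$ I pick a neighboring subinterval: if $i \le N-2$ I use the chord of $f_n$ through $(x_{i+1}, f_n(x_{i+1}))$ and $(x_{i+2}, f_n(x_{i+2}))$, extended leftward to $x$; if instead $i = N-1$, I use the chord through $(x_{i-1}, f_n(x_{i-1}))$ and $(x_i, f_n(x_i))$, extended rightward. In either case the monotonicity of divided differences for the convex $f_n$ gives this extrapolated value as a lower bound on $f_n(x)$, and as $n \to \infty$ that extrapolated value converges to the corresponding extrapolation of the chord of $f$. The chord of $f$ on any subinterval of length at most $\delta$ has slope of absolute value at most $\epsilon/\delta$, so extrapolating a distance of at most $\delta$ changes its value by at most $\epsilon$; combined with uniform continuity of $f$ at the nearby partition point this gives $f_n(x) \ge f(x) - C\epsilon$ for an absolute constant $C$, uniformly in $x$, for all sufficiently large $n$.

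The main obstacle is the lower bound on the two boundary intervals $[x_0,x_1]$ and $[x_{N-1},x_N]$, where a neighboring subinterval is available on only one side; one-sided extrapolation suffices precisely because moving one interval-width past the endpoint of a chord of $f$ changes the chord's value by no more than the modulus of continuity of $f$ on intervals of length $\delta$. Once the two one-sided estimates are in place, letting $\epsilon \to 0$ yields $\sup_{x \in [a,b]} |f_n(x) - f(x)| \to 0$, which is the desired uniform convergence.
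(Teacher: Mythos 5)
Your proposal is correct and takes essentially the same route as the paper: fix a partition, use pointwise convergence at the partition points, get the upper bound from the chord-above property of each convex $f_n$, and get the lower bound by extrapolating a chord from a neighboring subinterval. The only substantive cosmetic difference is which neighboring cell you default to; the paper extrapolates leftward from $[\alpha_{i-1},\alpha_i]$ and handles $[\alpha_0,\alpha_1]$ separately, while you default to the right neighbor and handle $[x_{N-1},x_N]$ separately. One small point to tighten: you say ``a partition of mesh at most $\delta$,'' but your slope estimate ``the chord of $f$ on any subinterval of length at most $\delta$ has slope at most $\epsilon/\delta$'' is false if subintervals can be much shorter than $\delta$ (a short subinterval can have a steep chord even when $|f(x)-f(y)|<\epsilon$ there). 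What you really need, and what makes the extrapolation-times-slope product bounded by $O(\epsilon)$, is that the extrapolation distance is comparable to the length of the interval carrying the chord; a uniform partition of mesh exactly $(b-a)/N\le\delta$ (which is what the paper uses, and where the clean $2f_n(\alpha_i)-f_n(\alpha_{i-1})$ coefficients come from) does the job. With that one phrase fixed, the argument is sound and matches the paper's.
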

\begin{proof}
Let $\epsilon > 0$. We'll show that there exists a large enough $N$ such that for all $n > N$, $||f_n-f||_\infty < \epsilon$.
\smallskip

The function $f$ is continuous on a compact set, and therefore is uniformly continuous. Choose a $\delta > 0$ such that $|f(x)- f(y)| < \epsilon/10$ for $|x-y| < \delta$. Let $M$ be such that $(b-a)/M < \delta$. We divide the interval $[a,b]$ into $M$ intervals, whose endpoints are equidistant. We denote them by $a = \alpha_0 < \alpha_1 < \cdots < \alpha_M = b$. Since $f_n(\alpha_i) \to f(\alpha_i)$, there exists a $N_i$ such that for all $n > N_i$, $|f_n(\alpha_i) - f(\alpha_i)| < \epsilon/10$. Choose $N = \max(M, N_0, \cdots, N_M)$.
\smallskip

Consider an $x \in (\alpha_i, \alpha_{i+1})$ for some $0 \leq i < M$, and let $n>N$. Using uniform continuity of $f$, we have
\begin{equation}\label{eq: bound on fx}
f(\alpha_i) -\epsilon/10 < f(x) < f(\alpha_i) + \epsilon/10.
\end{equation}
Further, we also have
\begin{align*}
f_n(\alpha_i) &\leq f(\alpha_i) + \epsilon/10~,(\text{by pointwise convergence at } \alpha_i)\\
f_n(\alpha_{i+1}) &\leq f(\alpha_{i+1}) + \epsilon/10~,(\text{by pointwise convergence at } \alpha_{i+1})\\
&\leq f(\alpha_i) + 2\epsilon/10~. (\text{by uniform continuity of } f)
\end{align*}
Convexity of $f_n$ implies
\begin{equation}\label{eq: lower bound on fnx}
f_n(x) < \max(f_n(\alpha_i), f_n(\alpha_{i+1})) < f(\alpha_i) + 2\epsilon/10.
\end{equation}
Combining part of equation (\ref{eq: bound on fx}) and equation (\ref{eq: lower bound on fnx}), we obtain
\begin{equation}\label{eq: first half}
f_n(x) - f(x) < 3\epsilon/10.
\end{equation}
We'll now try to upper bound $f_n(x)$. First consider the case when $i \geq 1$. In this case we have 
$$\alpha_{i-1} < \alpha_i < x < \alpha_{i+1}.$$
We write $\alpha_i$ as a linear combination of $x$ and $\alpha_{i-1}$, and use the convexity of $f_n$ to arrive at
\begin{align*}
&f_n(\alpha_i) \leq \frac{\alpha_i - \alpha_{i-1}}{x - \alpha_{i-1}}f_n(x) + \frac{x-\alpha_i}{x-\alpha_{i-1}}f_n(\alpha_{i-1}).
\end{align*}
This implies
\begin{align*}
 \frac{x-\alpha_{i-1}}{\alpha_i - \alpha_{i-1}}f_n(\alpha_i) - \frac{x-\alpha_i}{\alpha_i - \alpha_{i-1}}f_n(\alpha_{i-1}) \leq f_n(x).
\end{align*}
Taking the infimum of the left side, we get
\begin{align*}
\inf_{x \in (\alpha_i, \alpha_{i+1})} \frac{x-\alpha_{i-1}}{\alpha_i - \alpha_{i-1}}f_n(\alpha_i) - \frac{x-\alpha_i}{\alpha_i - \alpha_{i-1}}f_n(\alpha_{i-1}) \leq f_n(x).
\end{align*}
Note that since the LHS is linear in $x$, the infimum occurs at one of the endpoints of the interval, $\alpha_i$ or $\alpha_{i+1}$. Substituting, we get
\begin{align}
f_n(x) &\geq \min\left( f_n(\alpha_i), 2f_n(\alpha_i) - f_n(\alpha_{i-1})\right) \nonumber\\
&\geq \min(f(\alpha_i)-\epsilon/10, 2(f(\alpha_i)-\epsilon/10) - f(\alpha_{i-1}) - \epsilon/10)~\nonumber \\
&\geq \min(f(\alpha_i)-\epsilon/10, 2f(\alpha_i) - f(\alpha_{i-1}) - 3\epsilon/10)~\nonumber \\
&\geq \min(f(\alpha_i)-\epsilon/10, 2f(\alpha_i) - f(\alpha_i) - \epsilon/10 - 3\epsilon/10)\nonumber \\
&= f(\alpha_i) - 4\epsilon/10. \label{eq: upper bound on fnx}
\end{align}
Combining inequality (\ref{eq: upper bound on fnx}) with a part of inequality (\ref{eq: bound on fx}), we have
\begin{equation}\label{eq: second half}
f_n(x) - f(x) > -5\epsilon/10.
\end{equation}
Combining (\ref{eq: first half}) and (\ref{eq: second half}) we conclude that for all $x \in (\alpha_1, \alpha_M)$, and for all $n >N$, 
\begin{equation}
|f_n(x) - f(x)| < \epsilon/2.
\end{equation}
Now let $x \in (\alpha_0, \alpha_1)$. We can establish inequality \eqref{eq: first half} for $x \in (\alpha_0, \alpha_1)$ using the same steps as above. We express $\alpha_1$ as a linear combination of $x$ and $\alpha_2$ and follows the steps as above to establish \eqref{eq: second half} for $x \in (\alpha_0, \alpha_1)$. This shows that for all $x \in [a, b]$, $||f_n(x) - f(x)|| < \epsilon/2$ for all $n > N$, and concludes the proof.
\end{proof}

\section{Convergence properties of sub-convolutive sequences}\label{appendix: subc}

Consider a sequence of functions $\{\mu_n(\cdot)\}_{n\geq 1}$, such that for every $n$, $\mu_n: \mathbb Z_+ \to \mathbb{R_+}$ with $\mu_n(j) = 0$ for all $j \geq n+1$. We call such a sequence of functions a \emph{sub-convolutive} sequence if for all $m, n \geq 1$ the convolution $\mu_m \star \mu_n$ pointwise dominates $\mu_{m+n}$; i.e., 

\begin{equation}\label{eq: define subc}
\mu_m \star \mu_n (i) \geq \mu_{m+n} (i)  \text{ for all } i \geq 0 \text{, and for all } m,n \geq 1.
\end{equation}

For our results on sub-convolutive  sequences, we make the following assumptions:
\begin{align*}
&\mathbf{(A):} ~\alpha := \lim_{n \to \infty} \frac{1}{n} \log \mu_n(n) \text{~~~ is finite.}\\
&\mathbf{(B):} ~\beta := \lim_{n \to \infty} \frac{1}{n} \log \mu_n(0) \text{~~~ is finite.}\\
&\mathbf{(C):} \text{~For all } n,~ \mu_n(n) >0, \mu_n(0)>0.
\end{align*}
Note that  $\mu_m \star \mu_n(m+n) = \mu_n(n)\mu_m(m)$ and $\mu_m \star \mu_n(0) = \mu_n(0)\mu_m(0)$. Thus, the existence of the limits in assumptions $\mathbf{(A)}$ and $\mathbf{(B)}$ is guaranteed by Fekete's Lemma, and we have
\begin{align}
\alpha &= \inf_n \frac{1}{n}\log \mu_n(n)\\
\beta &= \inf_n \frac{1}{n} \log \mu_n(0).
\end{align}

For $n \geq 1$, define $G_n: \mathbb{R} \to \mathbb{R}$ as 
\begin{equation}
G_n(t) = \log \sum_{j=0}^n \mu_n(j)e^{jt}.
\end{equation}
Condition (\ref{eq: define subc}) implies that the functions $G_n$ satisfy the inequality,
\begin{equation}\label{eq: subc condition on G}
G_m(t) + G_n(t) \geq G_{m+n}(t) \text{ for every } m,n \geq 1 \text{ and for every } t.
\end{equation}
Thus for each $t$, the sequence $\{G_n(t)\}$ is sub additive, and by Fekete's lemma the limit $\lim_n \frac{G_n(t)}{n}$ exists. 
To simply notation a bit, define $g_n := \frac{G_n}{n}$ and let $\Lambda$ be defined as the pointwise limit of $g_n$'s; i.e.,
\begin{equation}
\Lambda(t) =  \lim_n g_n(t).
\end{equation}
\begin{lemma}\label{lemma: lambda sub}
The function $\Lambda$ satisfies the following properties:
\begin{enumerate}
\item
For all $t$,
\begin{equation}
\max(\beta, t +\alpha) \leq \Lambda(t) \leq g_1(t)
\end{equation}
\item
$\Lambda$ is convex and monotonically increasing.
\item
Let $\Lambda^*$ be the convex conjugate of $\Lambda$. The domain of $\Lambda^*$ is $[0,1]$.
\end{enumerate}
\end{lemma}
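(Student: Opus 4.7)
The plan is to handle the three claims in order, each using only the basic inequalities that the sub-convolutive property and assumptions $\mathbf{(A)}$--$\mathbf{(C)}$ give us.

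For part 1, the upper bound $\Lambda(t)\le g_1(t)$ is immediate from Fekete's lemma applied to the subadditive sequence $\{G_n(t)\}_{n\ge 1}$ (inequality \eqref{eq: subc condition on G}), which gives $\Lambda(t)=\inf_n g_n(t)$. For the lower bounds, I would use Fekete again on $\log \mu_n(n)$ and $\log \mu_n(0)$: since both are subadditive sequences by condition \eqref{eq: define subc}, assumptions $\mathbf{(A)}$, $\mathbf{(B)}$, $\mathbf{(C)}$ yield $\mu_n(n)\ge e^{n\alpha}$ and $\mu_n(0)\ge e^{n\beta}$ for every $n$. Then, retaining only those two terms in the sum defining $G_n(t)$, I get $G_n(t)\ge \log\mu_n(n)+nt\ge n(\alpha+t)$ and $G_n(t)\ge \log\mu_n(0)\ge n\beta$, so $\Lambda(t)\ge\max(\beta,\alpha+t)$.

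For part 2, each $G_n(t)=\log\sum_{j=0}^n\mu_n(j)e^{jt}$ is a log-sum-exp, hence convex in $t$; it is also nondecreasing since every $\mu_n(j)\ge 0$ and every $e^{jt}$ is nondecreasing in $t$ (using $j\ge 0$). These two properties pass to the pointwise limit $\Lambda$.

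For part 3, I need to show that the effective domain $\{x:\Lambda^*(x)<\infty\}$ equals $[0,1]$. For $x\notin[0,1]$, I would use the upper bound $\Lambda(t)\le g_1(t)=\log(\mu_1(0)+\mu_1(1)e^t)$ from part 1: as $t\to+\infty$ we have $g_1(t)=t+\log\mu_1(1)+o(1)$, so for $x>1$, $xt-\Lambda(t)\ge (x-1)t-\log\mu_1(1)+o(1)\to+\infty$; as $t\to-\infty$ we have $g_1(t)\to\log\mu_1(0)$, so for $x<0$, $xt-\Lambda(t)\ge xt-\log\mu_1(0)+o(1)\to+\infty$. In both cases $\Lambda^*(x)=+\infty$. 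For $x\in[0,1]$, I combine the two lower bounds from part 1 to obtain $xt-\Lambda(t)\le\min\bigl((x-1)t-\alpha,\;xt-\beta\bigr)$, which is bounded above uniformly in $t$: when $t\ge 0$ use the first bound (since $x-1\le 0$), and when $t\le 0$ use the second (since $x\ge 0$). This proves $\Lambda^*(x)<\infty$ on $[0,1]$.

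I expect no significant obstacle; the only subtlety is making sure in part 3 that the two lower bounds on $\Lambda$ are jointly strong enough to pin down the domain to exactly $[0,1]$, which is why the case split $t\ge 0$ vs.\ $t\le 0$ must match up cleanly with the endpoints $x=0$ and $x=1$.
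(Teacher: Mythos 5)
Your proof is correct and follows essentially the same route as the paper's. The small presentational differences (you obtain the lower bound $\Lambda(t)\ge\beta$ by keeping only the $j=0$ term in $G_n$, whereas the paper passes to the $t\to-\infty$ limit of the increasing function $g_n$; and in part 3 you spell out the case analysis that the paper compresses into one sentence about the domains of $g_1^*$ and $(\max(\beta,t+\alpha))^*$) do not change the substance of the argument.
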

\begin{proof}
\begin{enumerate}
\item
The inequality \eqref{eq: subc condition on G} immediately gives that for all $t$, and all $n \geq 1$,
\begin{align}\label{eq: subc g1gn}
nG_1(t) \geq G_n(t), \text{~~which implies~~} g_1(t) \geq g_n(t).
\end{align}
Taking the limit in $n$, it follows that $\Lambda(t) \leq g_1(t)$ for all $t$. \\
For all $n$, the functions $g_n$ are monotonically increasing, and for all $t$ they satisfy
\begin{equation}
g_n(t) \geq \lim_{t \to -\infty} g_n(t) = \frac{1}{n} \log \mu_n(0).
\end{equation}
In addition, we also know that $$\inf_n \frac{1}{n} \log \mu_n(0) = \beta.$$
This gives us that
\begin{equation}
g_n(t) \geq \beta.
\end{equation}
Taking the limit in $n$, we conclude that for all $t$,
\begin{equation}\label{eq: lambda beta}
\Lambda(t) \geq \beta.
\end{equation}
For all $n$, we have the lower bound on $g_n$ given by
\begin{align}
g_n(t) &= \frac{1}{n} \log \sum_{j=0}^n \mu_n(j)e^{jt}\\
&\geq \frac{1}{n} \log \mu_n(n)e^{nt}\\
&= t + \frac{1}{n} \log \mu_n(n)\\
&\stackrel{(a)}\geq t + \alpha
\end{align}
where $(a)$ follows as 
$$\inf_n \frac{1}{n} \log \mu_n(n) = \alpha.$$
Taking the limit in $n$, we conclude that
\begin{equation}\label{eq: lambda alpha}
\Lambda(t) \geq t+\alpha.
\end{equation}
Equations \eqref{eq: lambda beta} and \eqref{eq: lambda alpha} establish  $$\Lambda(t) \geq \max(\beta, t + \alpha).$$

\item

The functions $\{g_n\}$ are convex and monotonically increasing. Since $\Lambda$ is the pointwise limit of these functions, $\Lambda$ is also convex and monotonically increasing.

\item
Note that the convex conjugates of the functions $g_1(t)$ and $\max(\beta, t+\alpha)$ are both supported on $[0,1]$. Since $\Lambda$ is trapped between these two functions, it is clear that $\Lambda^*$ is also supported on $[0,1]$.
\end{enumerate}
\end{proof}

Theorem \ref{thm: subc upper bound} requires an application of the  G\"{a}rtner-Ellis theorem \cite{dembo1998large}, which we state here for reference:
\begin{theorem}[G\"{a}rtner-Ellis theorem]\label{thm: gartner ellis}
Consider a sequence of random vectors $Z_n \in \mathbb R^d$, where $Z_n$ possess the law $\mu_n$ and the logarithmic moment generating function
$$\Lambda_n(\lambda) := \log E\left[\exp \langle \lambda, Z_n\rangle\right].$$ We assume the following:
\begin{itemize}
\item[$(\star)$:]
 For each $\lambda\in \mathbb R^d$, the logarithmic moment generating function, defined as the limit
$$\Lambda(\lambda) := \lim_{n \to \infty} \frac{1}{n} \Lambda_n(n\lambda)$$
exists as an extended a real number. Further the origin belongs to the interior $\cD_\Lambda := \{\lambda \in \mathbb R^d~|~\Lambda(\lambda) < \infty\}$.
\end{itemize}
 Let $\Lambda^*$ be the convex conjugate of $\lambda$ with $\cD_{\Lambda^*} = \{x \in \mathbb R^d~|~ \Lambda^*(x) < \infty\}$. When assumption $(\star)$ holds, the following are satisfied:
 \begin{itemize}
 \item[1.] For any closed set $I$,
 $$\limsup_{n \to \infty} \frac{1}{n} \log \mu_n(I) \leq -\inf_{x \in I} \Lambda^*(x).$$
 \item[2.] For any open set $F$,
 $$\liminf_{n \to \infty} \frac{1}{n} \log \mu_n(F) \geq  -\inf_{x \in F \cap \cF} \Lambda^*(x),$$
 where $\cF$ is the set of exposed points of $\Lambda^*$ whose exposing hyperplane belongs to the interior of $\cD_{\Lambda}$.
 \item[3.]
 If $\Lambda$ is an essentially smooth, lower semicontinuous function, then the large deviations principle holds with a good rate function $\Lambda^*$.
 \end{itemize}
\end{theorem}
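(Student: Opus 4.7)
The argument decomposes into three parts corresponding to the three conclusions. Parts~1 and 2 use, respectively, an exponential Chebyshev inequality and a change-of-measure (\emph{tilting}) argument; Part~3 combines these via a convex-analytic description of exposed points. Throughout, I will use the elementary identity that for every $\lambda \in \cD_\Lambda$, $\Lambda^*(x) \geq \langle \lambda, x\rangle - \Lambda(\lambda)$.

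\emph{Part 1 (upper bound).} The plan is first to establish exponential tightness from the hypothesis $0 \in \mathrm{int}\,\cD_\Lambda$: choosing $\rho > 0$ so that $\Lambda(\pm \rho e_i) < \infty$ for every coordinate direction, Markov's inequality yields
\begin{equation*}
\mu_n(\{Z_n \cdot e_i \geq M\}) \leq e^{-n\rho M}\, E[e^{n\rho Z_n \cdot e_i}] = \exp(-n\rho M + \Lambda_n(n\rho e_i)),
\end{equation*}
so that the mass of $\mu_n$ outside a large ball decays like $\exp(-n(\rho M - \Lambda(\rho e_i) - o(1)))$ and the upper bound reduces to compact sets $K \subseteq I$. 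For each $x \in K$ and each $\epsilon > 0$, pick $\lambda_x \in \cD_\Lambda$ satisfying $\langle \lambda_x, x\rangle - \Lambda(\lambda_x) \geq \min(\Lambda^*(x),1/\epsilon) - \epsilon$, and let $H_x := \{y : \langle \lambda_x, y - x\rangle \geq -\epsilon\}$; Chernoff gives
\begin{equation*}
\frac{1}{n}\log \mu_n(H_x) \leq -\langle \lambda_x, x\rangle + \epsilon + \frac{1}{n}\Lambda_n(n\lambda_x) \to -\bigl(\min(\Lambda^*(x),1/\epsilon) - 2\epsilon\bigr).
\end{equation*}
Covering $K$ by finitely many such half-spaces via compactness and taking a union bound produces the desired upper bound with $\inf_{x \in K}\Lambda^*(x)$; letting $\epsilon \to 0$ and $K$ exhaust $I$ finishes Part~1.

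\emph{Part 2 (lower bound).} Fix an exposed point $x \in F \cap \cF$ with exposing hyperplane $\eta \in \mathrm{int}\,\cD_\Lambda$, so that $\Lambda^*(x) = \langle \eta, x\rangle - \Lambda(\eta)$ and $\Lambda^*(y) > \Lambda^*(x) + \langle \eta, y-x\rangle$ for every $y \neq x$. I would introduce the tilted probability measures
\begin{equation*}
d\tilde\mu_n(y) := \exp\bigl(n\langle \eta, y\rangle - \Lambda_n(n\eta)\bigr)\, d\mu_n(y),
\end{equation*}
whose limiting logarithmic moment generating function is $\tilde\Lambda(\lambda) = \Lambda(\lambda + \eta) - \Lambda(\eta)$ and whose rate function is $\tilde\Lambda^*(y) = \Lambda^*(y) - \langle \eta, y\rangle + \Lambda(\eta)$, for which $x$ is the \emph{unique} zero. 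For small $\delta > 0$ with $B := B(x,\delta) \subseteq F$,
\begin{equation*}
\mu_n(F) \geq \mu_n(B) \geq \exp\bigl(\Lambda_n(n\eta) - n\langle \eta, x\rangle - n\|\eta\|\delta\bigr)\, \tilde\mu_n(B),
\end{equation*}
and applying Part~1 to $\tilde\mu_n$ on the closed set $B^c$ shows $\tilde\mu_n(B^c) \to 0$ exponentially, hence $\tilde\mu_n(B) \to 1$. Taking $\liminf$ and then $\delta \to 0$ yields the bound $-\Lambda^*(x)$; supremizing over exposed $x \in F$ completes Part~2.

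\emph{Part 3 and main obstacle.} Under essential smoothness and lower semicontinuity of $\Lambda$, a theorem of Rockafellar identifies every point in the interior of $\cD_{\Lambda^*}$ as an exposed point whose exposing hyperplane $\nabla\Lambda^{-1}(x)$ lies in $\mathrm{int}\,\cD_\Lambda$; combined with goodness of $\Lambda^*$ (level sets are closed and, by exponential tightness, compact), Part~2 extends to every point of $\mathrm{int}\,\cD_{\Lambda^*}$, and a standard density argument pushes the lower bound to the full infimum over $F$, matching Part~1. The hardest step is the tilting argument in Part~2: proving $\tilde\mu_n(B) \to 1$ requires invoking Part~1 for $\tilde\mu_n$, which in turn needs $\eta$ to lie in the \emph{interior} of $\cD_\Lambda$ so that $\tilde\Lambda$ is finite on a neighborhood of the origin --- precisely the role of essential smoothness in Part~3.
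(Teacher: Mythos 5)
The paper does not prove this statement: it is reproduced verbatim as a known result and attributed to Dembo and Zeitouni \cite{dembo1998large}, so there is no in-paper proof to compare against. Your outline is precisely the standard proof from that reference --- exponential tightness plus a Chernoff/half-space covering argument for the upper bound, exponential tilting at an exposed point for the lower bound, and Rockafellar's identification of the (relative) interior of $\cD_{\Lambda^*}$ with exposed points for part 3 --- and it is sound as a plan. The one step left implicit is that concluding $\tilde\mu_n(B^c) \to 0$ exponentially from Part 1 requires $\inf_{y \in B^c}\tilde\Lambda^*(y) > 0$, which does not follow from pointwise strict positivity of $\tilde\Lambda^*$ off $x$ alone; you also need compactness of the level sets of $\tilde\Lambda^*$, which is exactly what $\eta \in \mathrm{int}\,\cD_\Lambda$ (hence $0 \in \mathrm{int}\,\cD_{\tilde\Lambda}$) provides.
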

\begin{remark}
For definitions of \emph{exposed points}, \emph{essentially smooth functions}, \emph{good rate function}, and the \emph{large deviations principle} we refer to Section $2.3$ of \cite{dembo1998large}. For our purpose, it is enough to know that if $\Lambda$ is differentiable on $\cD_{\Lambda} = \mathbb R^d$, then it is essentially smooth and $\Lambda^*$ satisfies the large deviation principle.
\end{remark}

\begin{theorem}\label{thm: subc upper bound}
Let $\{\mu_n\}_{n \geq 1}$ be a sequence of sub-convolutive of functions as defined in equation \eqref{eq: define subc}, satisfying assumptions $\mathbf{(A),  (B)}$ and $\mathbf{(C)}$. Define a sequence of measures supported on $[0,1]$ by 
$$\mu_{n/n}\left(\frac{j}{n}\right) := \mu_n(j)  \text{~~for~~}  j \geq 0.$$
 Let $I \subseteq \mathbb{R}$ be a closed set. The family of measures $\{\mu_{n/n}\}$ satisfies the large deviation upper bound
\begin{equation}
\limsup_{n \to \infty} \frac{1}{n} \log \mu_{n/n}(I) \leq - \inf_{x \in I} \Lambda^*(x).
\end{equation}
\end{theorem}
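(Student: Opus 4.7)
The plan is to reduce the statement to the upper-bound half of the G\"{a}rtner--Ellis theorem (Theorem \ref{thm: gartner ellis}) by first normalizing $\mu_{n/n}$ into a probability measure and then unwinding the normalization at the end. Set $Z_n := \sum_{j=0}^n \mu_n(j) = e^{G_n(0)}$, which is strictly positive by assumption $\mathbf{(C)}$, and let $\tilde{\mu}_n := \mu_{n/n}/Z_n$ be the associated probability measure supported on the grid $\{0, 1/n, \ldots, 1\} \subseteq [0,1]$.

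Next, I would compute the scale-$n$ logarithmic moment generating function of $\tilde{\mu}_n$. For $Y_n$ distributed according to $\tilde{\mu}_n$,
$$\log E\bigl[e^{n t Y_n}\bigr] = \log \sum_{j=0}^n \frac{\mu_n(j)}{Z_n}\, e^{n t \cdot j/n} = G_n(t) - G_n(0).$$
Dividing by $n$ and using the existence of $\Lambda(t) = \lim_n g_n(t)$, already established via Fekete's lemma applied to the sub-additive sequence $\{G_n(t)\}$, gives
$$\tilde{\Lambda}(t) := \lim_{n\to\infty} \frac{1}{n}\log E\bigl[e^{n t Y_n}\bigr] = \Lambda(t) - \Lambda(0).$$
Since convex conjugation satisfies $(\Lambda - c)^*(x) = \Lambda^*(x) + c$, one obtains $\tilde{\Lambda}^*(x) = \Lambda^*(x) + \Lambda(0)$.

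The hypothesis of Theorem \ref{thm: gartner ellis} to check is that the origin lies in the interior of the effective domain of $\tilde{\Lambda}$. Part (1) of Lemma \ref{lemma: lambda sub} already supplies the pointwise bound $\Lambda(t) \leq g_1(t) < \infty$ for every $t \in \mathbb{R}$, so $\tilde{\Lambda}$ is finite everywhere and its effective domain is all of $\mathbb{R}$; this settles the hypothesis immediately. Applying the upper-bound half of G\"{a}rtner--Ellis to the probability measures $\{\tilde{\mu}_n\}$ then yields, for any closed set $I$,
$$\limsup_{n\to\infty} \frac{1}{n}\log \tilde{\mu}_n(I) \leq -\inf_{x\in I}\tilde{\Lambda}^*(x) = -\Lambda(0) - \inf_{x\in I}\Lambda^*(x).$$

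Finally, using $\frac{1}{n}\log \mu_{n/n}(I) = \frac{G_n(0)}{n} + \frac{1}{n}\log \tilde{\mu}_n(I)$ together with $G_n(0)/n \to \Lambda(0)$, I would take $\limsup$ of both sides so that the two $\Lambda(0)$ terms cancel, producing the desired bound $\limsup \frac{1}{n}\log \mu_{n/n}(I) \leq -\inf_{x \in I}\Lambda^*(x)$. I expect the only genuinely delicate point to be the invocation of G\"{a}rtner--Ellis itself --- in particular, being careful to only invoke the upper-bound direction, which does not require any smoothness property of $\Lambda$ beyond finiteness on a neighborhood of the origin. All other steps are straightforward bookkeeping translating between $\mu_{n/n}$ and its normalization.
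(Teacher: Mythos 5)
Your proof is correct and follows essentially the same route as the paper: normalize $\mu_{n/n}$ by the total mass $Z_n = e^{G_n(0)}$, identify the scale-$n$ log-MGF of the resulting probability measures as $G_n(t)-G_n(0)$ with limit $\Lambda(t)-\Lambda(0)$, verify the domain hypothesis via the everywhere-finiteness of $\Lambda$ from Lemma~\ref{lemma: lambda sub}, apply only the upper-bound half of G\"artner--Ellis, and undo the normalization. Your explicit remark that the upper bound does not require essential smoothness of $\Lambda$ correctly mirrors the paper's remark following the theorem that the lower bound is what would require differentiability.
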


\begin{proof}
Let $\sum_j \mu_n(j) =  s_n$. We first normalize $\mu_{n/n}$ to define the probability measure
$$p_n := \frac{\mu_{n/n}}{s_n}.$$
The log moment generating function of $p_n$, which we call $P_n$, is given by
\begin{align*}
P_n(t) &= \log \sum_{j=0}^n p_n(j/n)e^{jt/n}\\
&= \log \frac{1}{s_n}\sum_{j=0}^n \mu_n(j)e^{jt/n}\\
&= G_n(t/n) - \log s_n.
\end{align*}
Thus, 
\begin{align*}
\lim_{n \to \infty} \frac{1}{n} P_n(nt) &= \lim_{n \to \infty}\left(\frac{G_n(t)}{n} - \frac{\log s_n}{n}\right)\\
&= \Lambda(t) - \Lambda(0).
\end{align*}
Note also that by Lemma \ref{lemma: lambda sub}, the function $\Lambda$ is finite on all of $\mathbb R$, and thus 0 lies in the interior $\cD(\Lambda)$. Thus, the sequence of probability measures $\{p_n\}$ satisfies the condition $(\star)$ required in the G\"{a}rtner-Ellis theorem. A direct application of this theorem gives the bound
\begin{align*}
\limsup_{n \to \infty} \frac{1}{n} \log p_n(I) &\leq -\inf_{x \in I} (\Lambda(x) - \Lambda(0))^*\\
&= -\inf_{x \in I} \Lambda^*(x) - \Lambda(0),
\end{align*}
which immediately gives 
\begin{align*}
\limsup_{n \to \infty} \frac{1}{n} \log \mu_{n/n}(I) \leq -\inf_{x \in I} \Lambda^*(x).
\end{align*}
\end{proof}
\begin{remark}
If $\Lambda(t)$ is differentiable, we can apply the G\"{a}rtner-Ellis theorem to get a lower bound of the form
$$\liminf_{n \to \infty} \frac{1}{n} \log \mu_{n/n}(F) \geq - \inf_{x \in F} \Lambda^*(x),$$
for every open set $F$. However, it is easy to construct sub-convolutive sequences such that $\Lambda(t)$ is not differentiable. One example is the sequence $\{\mu_n\}$ such that for each $n$,
$$\mu_n(j) = 
\begin{cases}
1 &\text{ if } j = 0 \text{ or } n,\\
0 &\text{ otherwise. }
\end{cases}
$$
\end{remark}
\begin{theorem}\label{theorem: subc gnstar}
The functions $\{g_n^*\}$ converge uniformly to $\Lambda^*$ on $[0,1]$.
\end{theorem}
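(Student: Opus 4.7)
The plan is to combine three ingredients: the convex-function convergence Lemma \ref{lemma: convex} applied to $\{g_n\}$ on compact intervals of $\mathbb{R}$, a uniform localization of the supremum defining $g_n^*(x)$, and a final application of Lemma \ref{lemma: convex} to the conjugates themselves on $[0,1]$. For the first piece, each $g_n$ is convex on $\mathbb{R}$ (being a normalized log-sum-exp) and, by Lemma \ref{lemma: lambda sub}, $\Lambda$ is trapped between $\max(\beta,\,t+\alpha)$ and $g_1(t)$, hence finite and continuous everywhere. Since $g_n \to \Lambda$ pointwise via Fekete, Lemma \ref{lemma: convex} upgrades this to uniform convergence on every compact interval $[-T,T]$.

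Next I would establish pointwise convergence $g_n^*(x) \to \Lambda^*(x)$ for every $x \in [0,1]$. At the endpoints this is direct: $g_n^*(0) = -\inf_t g_n(t) = -\beta_n$ and $g_n^*(1) = -\alpha_n$, which converge to $-\beta = \Lambda^*(0)$ and $-\alpha = \Lambda^*(1)$. For an interior point $x \in (0,1)$, the bounds $g_n(t) \geq t + \alpha_n$ and $g_n(t) \geq \beta_n$ give
$$xt - g_n(t) \leq \max\bigl((x-1)T - \alpha_n,\; -xT - \beta_n\bigr) \quad \text{whenever } |t| \geq T,$$
and the right-hand side tends to $-\infty$ as $T \to \infty$, \emph{uniformly in $n$}, because $\alpha_n, \beta_n$ are convergent and hence bounded. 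Combined with $g_n^*(x) \geq -g_n(0)$, which is also bounded below uniformly in $n$ (since $g_n(0) \to \Lambda(0)$ is finite), this lets me restrict the supremum in $g_n^*(x) = \sup_t[xt - g_n(t)]$ to a fixed compact interval $[-T,T]$ independent of $n$. The uniform convergence on $[-T,T]$ from the first step then transfers to $\sup_{|t| \leq T}[xt - g_n(t)] \to \sup_{|t| \leq T}[xt - \Lambda(t)]$, and taking $T$ large yields $g_n^*(x) \to \Lambda^*(x)$.

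The final step is to verify that each $g_n^*$ and $\Lambda^*$ is continuous on $[0,1]$, so that Lemma \ref{lemma: convex} applies a second time, now to the convex sequence $\{g_n^*\}$. Each of these functions is convex on $[0,1]$ with finite values at the endpoints, and lower semi-continuous as a convex conjugate; the chord inequality $g_n^*(x) \leq (1-x)\,g_n^*(0) + x\,g_n^*(1)$ gives $\limsup_{x \to 0^+} g_n^*(x) \leq g_n^*(0)$, and combined with lower semi-continuity this forces continuity at $0$. The same argument at $1$, together with continuity on the open interior (automatic for a finite convex function), shows $g_n^*$ (and likewise $\Lambda^*$) is continuous on $[0,1]$. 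Lemma \ref{lemma: convex} then delivers the desired uniform convergence. The hardest piece of the argument is the uniform localization in the middle step: one needs the lower bounds on $g_n$ to be strong enough that the sup defining $g_n^*(x)$ is effectively over a compact set independent of $n$, which uses in an essential way the convergence of $\alpha_n$ and $\beta_n$ guaranteed by assumptions $\mathbf{(A)}$ and $\mathbf{(B)}$.
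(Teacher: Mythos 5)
Your overall strategy matches the paper's: compactly localize the supremum defining $g_n^*(x)$ for each interior $x$, handle the two endpoints separately, and then upgrade pointwise convergence to uniform convergence via Lemma \ref{lemma: convex} applied to the conjugates. The interior localization is sound (your bound with $\alpha_n, \beta_n$ is equivalent to the paper's with $\alpha, \beta$ since $\alpha_n \geq \alpha = \inf_n \alpha_n$ and $\beta_n \geq \beta = \inf_n \beta_n$), and your explicit continuity verification for $g_n^*$ and $\Lambda^*$ at the endpoints (chord inequality for u.s.c., conjugacy for l.s.c.) is more careful than the paper, which simply asserts continuity.

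There is, however, one genuine gap: you assert $\Lambda^*(0) = -\beta$ and $\Lambda^*(1) = -\alpha$ without proof. Lemma \ref{lemma: lambda sub} only gives you $\max(\beta, t+\alpha) \leq \Lambda(t) \leq g_1(t)$, which yields $-\beta_1 \leq \Lambda^*(0) \leq -\beta$, and $\beta_1 = \log\mu_1(0)$ is not $\beta$ in general. Knowing $g_n^*(0) = -\beta_n \to -\beta$ does not, by itself, tell you the value of $\Lambda^*(0)$; one must rule out the interchange-of-limits failure $\lim_{t\to-\infty}\Lambda(t) > \beta$. The paper spends the entire second half of its proof on exactly this point, via a monotone subsequence $\{g_{2^k}\}$. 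The gap is easy to close, and in fact more simply than the paper does it: since Fekete gives $\Lambda(t) = \inf_n g_n(t)$ (not just the limit), one has
\begin{equation*}
\inf_t \Lambda(t) = \inf_t \inf_n g_n(t) = \inf_n \inf_t g_n(t) = \inf_n \beta_n = \beta,
\end{equation*}
so $\Lambda^*(0) = -\inf_t\Lambda(t) = -\beta$. For the other endpoint, $t - \Lambda(t) \leq -\alpha$ gives $\Lambda^*(1) \leq -\alpha$, while $\Lambda \leq g_m$ gives $\Lambda^*(1) \geq \lim_{t\to\infty}(t - g_m(t)) = -\alpha_m$ for every $m$ (here $\mu_m(m)>0$ by $\mathbf{(C)}$), so $\Lambda^*(1) \geq -\inf_m\alpha_m = -\alpha$. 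You should add this justification; as written the endpoint step is an unsupported claim rather than a proof.
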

\begin{proof}
We'll show that $\{g_n^*\}$ converge pointwise to $\Lambda^*$ on $[0,1]$. Since $g_n^*$ and $\Lambda^*$ are all continuous convex functions on a compact set, Lemma \ref{lemma: convex} implies that this pointwise convergence implies uniform convergence.\\

Recall that $\alpha = \inf_n \frac{1}{n} \log \mu_n(n)$, $\beta = \inf_n \frac{1}{n} \log \mu_n(0)$, and $\inf_n g_n(0) = \Lambda(0)$. 
Fix an $x \in (0,1)$, and define 
$$\arg \max_t xt - g_n(t) := t_n.$$
Clearly, $g_n^*(x) = xt_n - g_n(t_n)$. Note that
\begin{align}
g_n^*(x) &\geq xt - g_n(t) \Big|_{t = 0}\\
& = -g_n(0)\\
&\stackrel{(a)} \geq -g_1(0)
\end{align}
where $(a)$ follows by inequality \eqref{eq: subc g1gn}.

 If $t > \frac{g_1(0)-\alpha}{1-x}$, then we have
 \begin{align*}
 xt - g_n(t) &< xt - (t + \alpha)\\
 &= -(1-x)t - \alpha \\
 &< -(g_1(0) - \alpha) - \alpha\\
 &= -g_1(0).
 \end{align*}
 This gives us that $t_n \leq \frac{g_1(0)-\alpha}{1-x}$. Similarly, if $t < \frac{\beta - g_1(0)}{x}$, then
 \begin{align}
 xt - g_n(t) &< xt - \beta\\
 &<(\beta - g_1(0)) - \beta\\
 &= -g_1(0).
 \end{align}
 This gives us that $t_n \geq \frac{\beta - g_1(0)}{x}$. We can thus conclude that for all $n$,
\begin{equation}\label{eq: subc bounds on tn}
 t_n \in \left[ \frac{\beta - g_1(0)}{x}, \frac{g_1(0)-\alpha}{1-x} \right] := I_x.
\end{equation}
Note that all we used to prove relation \eqref{eq: subc bounds on tn} is that $g_n$ is trapped between $g_1$ and $\max(\beta, t+\alpha)$. Since $\Lambda$ also satisfies this, we have
\begin{equation}\label{eq: subc lambda tn}
\arg\max_{t} xt - \Lambda(t) \in I_x.
\end{equation}
We now restrict our attention to the compact interval $I_x$. Let $\hat g_n$ be $g_n$ restricted to $I_x$. The convex functions $\hat g_n$ converge pointwise to a continuous limit $\hat \Lambda$, where $\hat \Lambda$ is $\Lambda$ restricted to $I_x$. This convergence must therefore be uniform, which implies convergence of $\hat g_n^*(x)$ to $\hat \Lambda^*(x)$. Furthermore, relation (\ref{eq: subc bounds on tn}) implies $\hat g_n^*(x)$ equals $g_n^*(x)$, and relation \eqref{eq: subc lambda tn} gives $\hat \Lambda^*(x)$ equals $\Lambda^*(x)$. Thus, $g_n^*(\cdot)$ converges pointwise to $\Lambda^*(\cdot)$ on $(0,1)$.

We'll now consider convergence at the boundary points. Let $\epsilon > 0$ be given. Choose a subsequence $\{g_{n_k}\}$ where $n_k = 2^k$. Using the condition in (\ref{eq: subc condition on G}), it is clear that $\{g_{n_k}\}$ decrease monotonically and converge pointwise to $\Lambda$. Choose $K_0$ large enough such that for all $k > K_0$,
\begin{align}
 \frac{1}{n_k} \log\mu_{n_k}(0) - \beta  &  < \epsilon/2.
\end{align}
Note that the left hand side is non-negative, and we need not use absolute values. Choose a $T_0$ such that for all $t < T_0$,
\begin{align}
g_{n_{K_0}}(t) - \frac{1}{n_{K_0}}\log \mu_{n_{K_0}}(0) < \epsilon/2.
\end{align}
Now for all $k > K_0$ and all $t < T_0$, the following holds:
\begin{align}
g_{n_k}(t) - \beta \leq g_{n_{K_0}}(t) - \beta < \frac{1}{n_{K_0}}\log \mu_{n_{K_0}}(0) + \epsilon/2 - \beta < \epsilon.
\end{align}
Taking the limit in $k$, we get that for all $t < T_0$,
\begin{align}
\Lambda(t) - \beta \leq \epsilon,
\end{align}
this along with the lower bound $\Lambda(t) \geq \beta$ gives that for all $t < T_0$,
\begin{align*}
0 \leq \Lambda(t) - \beta \leq \epsilon.
\end{align*}
We also have 
\begin{align}
\Lambda^*(0) &= \sup_t -\Lambda(t)\\
 &= - \lim_{t \to -\infty} \Lambda(t),
\end{align}
which must equal $-\beta$. Since the limit of $g_n^*(0)$ is also $-\beta$, we have shown convergence of $g_n^*$ to $\Lambda$ at $t = 0$.\\

To show convergence at $t=1$, we follow a similar strategy. Let $\{g_{n_k}\}$ be as before, and let $\epsilon > 0$ be given. We choose a $K_1$ such that for all $k > K_1$, 
\begin{align}
 \frac{1}{n_k} \log\mu_{n_k}(n_k) - \alpha  &  < \epsilon/2.
\end{align}
Note that the left hand side is non-negative, and we need not use absolute values. We now choose a $T_1$ such that for all $t > T_1$,
\begin{align}
g_{n_{K_1}}(t) - \left( t+ \frac{1}{n_{K_1}}\log \mu_{n_{K_1}}(n_{K_1}) \right)< \epsilon/2.
\end{align}
Now for all $k > K_1$ and all $t > T_1$,
\begin{align}
g_{n_k}(t) - (t+\alpha) \leq g_{n_{K_1}}(t) - (t+\alpha) < \frac{1}{n_{K_1}}\log \mu_{n_{K_1}}(n_{K_1})  + \epsilon/2 -\alpha < \epsilon.
\end{align}
Taking the limit in $k$, we get that for all $t > T_1$,
\begin{align*}
\Lambda(t) - (t+\alpha) \leq \epsilon,
\end{align*}
this along with the lower bound $\Lambda(t) \geq t+\alpha$ gives that for all $t > T_1$
\begin{align*}
0 \leq \Lambda(t) - (t+\alpha) \leq \epsilon.
\end{align*}
From this, we conclude that $\Lambda^*(1) = \sup_t t-\Lambda(t) =  \lim_{t \to +\infty} t-\Lambda(t)$, must equal $-\alpha$. Since the limit of $g_n^*(1)$ is also $-\alpha$, we have shown convergence of $g_n^*$ to $\Lambda$ at $t = 1$.\\

This shows that $\{g_n^*\}$ converges pointwise to $\Lambda^*$ on the compact interval $[0,1]$. As all the functions involved are continuous and convex, by Lemma \ref{lemma: convex} this convergence must also be uniform. This concludes the proof.
\end{proof}


\end{appendix}


\bibliographystyle{ieeetr}	
\bibliography{myrefs}		

\end{document}